\DeclareMathOperator*{\slim}{s-lim}
\DeclareMathOperator*{\supp}{supp}
\newcommand{\Eqn}[1]{&\hspace{-0.5em}#1\hspace{-0.5em}&}
\newtheorem{The}{Theorem}[section]
\newtheorem{Ass}[The]{Assumption}
\newtheorem{Prop}[The]{Proposition}
\newtheorem{Cor}[The]{Corollary}
\newtheorem{Lem}[The]{Lemma}
\title{Inverse scattering in Stark effect}
\author{Atsuhide ISHIDA\\
\\
Department of Liberal Arts, Faculty of Engineering,
 \\Tokyo University of Science\\
\normalsize 3-1 Niijuku, 6-chome, Katsushika-ku,Tokyo 125-8585, Japan\\
\normalsize E-mail: aishida@rs.tus.ac.jp
}
\date{}
\begin{document}
\maketitle

\begin{abstract}
\noindent
We study one of the multidimensional inverse scattering problems for quantum systems governed by the Stark Hamiltonians. By applying the time-dependent method developed by Enss and Weder in 1995, we prove that the high-velocity limit of the scattering operator determines uniquely the short-range interaction potentials. Moreover, we prove that, when a long-range interaction potential is given, the high-velocity limit of the Dollard-type modified scattering operator determines uniquely the short-range part of the interactions. We allow the potential functions to belong to very broad classes. These results are improvements on the previous results obtained by Adachi and Maehara in 2007 and Adachi, Fujiwara, and Ishida in 2013.
\end{abstract}

\quad\textit{Keywords}: scattering theory, inverse problem, Stark effect\par
\quad\textit{MSC}2010: 81Q10, 81U05, 81U40

\section{Introduction\label{introduction}}
We investigate one of the inverse scattering problems involving a quantum system in an electric field $E=e_1=(1,0,\ldots,0)\in\mathbb{R}^n$. Throughout this paper, we assume that the space dimension $n\geqslant2$. The free dynamics is described by the following Stark Hamiltonian, a self-adjoint operator acting on $L^2(\mathbb{R}^n)$,
\begin{equation}
H_0^{\rm S}=|p|^2/2-x_1,\label{free}
\end{equation}
where $p=-{\rm i}\nabla_x$ is the momentum operator with ${\rm i}=\sqrt{-1}$ thus $|p|^2=-\Delta_x=-\sum_{j=1}^n\partial_{x_j}^2$ is the negative of the Laplacian, and $x_1$ is the $1$st component of $x=(x_1,\ldots,x_n)\in\mathbb{R}^n$. The pairwise interaction potential $V$ is the multiplication operator of the real-valued function $V(x)$, the value vanishes at large distance. In more detail, $V$ is represented by a sum of parts $V^{\rm vs}\in\mathscr{V}^{\rm vs}$, $V^{\rm s}\in\mathscr{V}^{\rm s}$, and $V^{\rm l}\in\mathscr{V}_{\rm G}^{\rm l}\cup\mathscr{V}_{\rm D}^{\rm l}$ such that
\begin{equation}
V(x)=V^{\rm vs}(x)+V^{\rm s}(x)+V^{\rm l}(x),
\end{equation}
where the classes of the real-valued functions, $\mathscr{V}^{\rm vs}$, $\mathscr{V}^{\rm s}$, $\mathscr{V}_{\rm G}^{\rm l}$, and $\mathscr{V}_{\rm D}^{\rm l}$, satisfy Assumption \ref{ass} below. Prior to stating this assumption, we provide some notation. The Kitada bracket of $x$ has the usual definition, $\langle x\rangle=\sqrt{1+|x|^2}$. $F(\cdots)$ is the characteristic function of the set $\{\cdots\}$, and $\|\cdot\|$ denotes the operator norm in $L^2(\mathbb{R}^n)$.

\begin{Ass}\label{ass}
$V^{\rm vs}\in\mathscr{V}^{\rm vs}$ is decomposed into
\begin{equation}
V^{\rm vs}(x)=V_1^{\rm vs}(x)+V_2^{\rm vs}(x),
\end{equation}
where a singular part $V_1^{\rm vs}$ is $|p|^2/2$-bounded with its relative bound less than 1, $x_1V_1^{\rm vs}$ is $|p|^2/2$-bounded, a regular part $V_2^{\rm vs}$ is bounded, and $V^{\rm vs}$ satisfies
\begin{equation}
\int_0^\infty\|V^{\rm vs}(x)\langle p\rangle^{-2}F(|x|\geqslant R)\|dR<\infty.\label{enss_condition1}
\end{equation}
This decay condition \eqref{enss_condition1} is equivalent to
\begin{equation}
\int_0^\infty\|F(|x|\geqslant R)V^{\rm vs}(x)\langle p\rangle^{-2}\|dR<\infty\label{enss_condition2}
\end{equation}
{\rm(}see {\rm Reed-Simon \cite{ReSi})}.\par
$V^{\rm s}\in\mathscr{V}^{\rm s}$ belongs to $C^1(\mathbb{R}^n)$ and satisfies
\begin{equation}
|V^{\rm s}(x)|\leqslant C\langle x\rangle^{-\gamma},\quad|\partial_x^\beta V^{\rm s}(x)|\leqslant C_\beta\langle x\rangle^{-1-\alpha}\label{short_range}
\end{equation}
for the multi-index $\beta$ with $|\beta|=1$, where $1/2<\gamma\leqslant1$ and $0<\alpha\leqslant\gamma$.\par
$V^{\rm l}\in\mathscr{V}_{\rm G}^{\rm l}$ belongs to $C^2(\mathbb{R}^n)$ and satisfies
\begin{equation}
|\partial_x^\beta V^{\rm l}(x)|\leqslant C_\beta\langle x\rangle^{-\gamma_{\rm G}-\kappa|\beta|}\label{long_range_graf}
\end{equation}
for $|\beta|\leqslant2$, where $0<\gamma_{\rm G}\leqslant1/2$ and $1-\gamma_{\rm G}<\kappa\leqslant1$.\par
Finally, $V^{\rm l}\in\mathscr{V}_{\rm D}^{\rm l}$ belongs to $C^2(\mathbb{R}^n)$ and satisfies
\begin{equation}
|\partial_x^\beta V^{\rm l}(x)|\leqslant C_\beta\langle x\rangle^{-\gamma_{\rm D}-|\beta|/2}\label{long_range_dollard}
\end{equation}
for $|\beta|\leqslant2$, where $3/8<\gamma_{\rm D}\leqslant1/2$.
\end{Ass}

For $V\in\mathscr{V}^{\rm vs}+\mathscr{V}^{\rm s}+(\mathscr{V}_{\rm G}^{\rm l}\cup\mathscr{V}_{\rm D}^{\rm l})$, the full Hamiltonian
\begin{equation}
H^{\rm S}=H_0^{\rm S}+V\label{full}
\end{equation}
is also self-adjoint. $V_1^{\rm vs}$ is unbounded, however, its self-adjointness is established by the Kato--Rellich theorem (see Simon \cite{Si}). We here note that the Coulomb-type local singularity,
\begin{equation}
V_1^{\rm vs}(x)=K|x|^{-1}F(|x|\leqslant1)
\end{equation}
with $0\not=K\in\mathbb{R}$, satisfies Assumption \ref{ass} when $n\geqslant3$.\par
We first consider the short-range case, that is, $V^{\rm l}\equiv0$. The forward two-body short-range scattering in the Stark effect was originally discussed in Avron and Herbst \cite{AvHe}, and Herbst \cite{Her}. Therefore, in this instance, we see that the wave operators defined by the following strong limits
\begin{equation}
W^\pm=\slim_{t\rightarrow\pm\infty}e^{{\rm i}tH^{\rm S}}e^{-{\rm i}tH_0^{\rm S}}\label{wave_operators}
\end{equation}
exist. By using these wave operators $W^\pm$, the scattering operator $S=S(V)$ is defined by
\begin{equation}
S=(W^+)^*W^-.\label{scattering_operator}
\end{equation}
The first theorem of this paper is the following.

\begin{The}\label{the1}
If $S(V_1)=S(V_2)$ for $V_1, V_2\in\mathscr{V}^{\rm vs}+\mathscr{V}^{\rm s}$, then $V_1=V_2$ holds.
\end{The}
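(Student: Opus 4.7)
The approach is the time-dependent Enss--Weder inverse scattering strategy, specialized to the Stark setting in the spirit of Adachi and Maehara. I would choose a unit vector $\hat v\in\mathbb{R}^n$ with $\hat v\perp e_1$ (available since $n\geq 2$); set $v=|v|\hat v$, and, for Schwartz functions $\Phi_0,\Psi_0$, form the velocity-boosted test states $\Phi_v:=e^{iv\cdot x}\Phi_0$ and $\Psi_v:=e^{iv\cdot x}\Psi_0$. The pivotal step is the reconstruction formula
\[
\lim_{|v|\to\infty}|v|\,\langle(S(V)-I)\Phi_v,\Psi_v\rangle
=-i\int_{-\infty}^\infty\langle V(x+t\hat v)\Phi_0,\Psi_0\rangle\,dt,
\]
valid for every $V=V^{\rm vs}+V^{\rm s}\in\mathscr{V}^{\rm vs}+\mathscr{V}^{\rm s}$.

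To derive it I would start from Cook's identity
\[
(S(V)-I)\Phi_v=-i(W^+)^{*}\int_{-\infty}^\infty e^{itH^{\rm S}}V\,e^{-itH_0^{\rm S}}\Phi_v\,dt
\]
and disentangle the free Stark propagator via the Avron--Herbst formula
\[
e^{-itH_0^{\rm S}}=e^{-it^3/6}\,e^{-itx_1}\,e^{it^2 p_1/2}\,e^{-it|p|^2/2}.
\]
After this disentanglement, $e^{-itH_0^{\rm S}}\Phi_v$ is, up to an explicit phase, a free Schr\"odinger wave packet propagating along the classical Stark trajectory $t\mapsto tv+(t^2/2)e_1$. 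Since $\hat v\perp e_1$, this trajectory is rectilinear to leading order under the rescaling $t=s/|v|$, the accelerated piece $(t^2/2)e_1$ contributing only an $O(|v|^{-2})$ correction. I would then (i) replace $W^{+}\Psi_v$ by $\Psi_v$ at a cost of $o(|v|^{-1})$ via the Cook-type bound on $W^{+}-I$ applied to the boosted state, (ii) perform the time rescaling, and (iii) pass to the limit by dominated convergence. The required majorant is furnished by the Enss-type condition \eqref{enss_condition1} for the very-short-range part, and by the pointwise bound $|V^{\rm s}(x)|\leq C\langle x\rangle^{-\gamma}$ together with the sharper derivative decay $|\nabla V^{\rm s}(x)|\leq C_\beta\langle x\rangle^{-1-\alpha}$ (exploited by integration by parts when $\gamma\in(1/2,1]$ so that the remaining $ds$-integral converges) for the short-range part.

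With the reconstruction formula in hand, the hypothesis $S(V_1)=S(V_2)$ applied to $V:=V_1-V_2$ forces
\[
\int_{-\infty}^\infty\langle(V_1-V_2)(x+t\hat v)\Phi_0,\Psi_0\rangle\,dt=0
\]
for every $\hat v\perp e_1$ and every Schwartz pair $(\Phi_0,\Psi_0)$. Letting $\overline{\Psi_0}\Phi_0$ approach a Dirac mass at an arbitrary point $y\in\mathbb{R}^n$ reduces this to the vanishing of the X-ray transform of $V_1-V_2$ along every line with direction perpendicular to $e_1$. Slicing $\mathbb{R}^n$ by the hyperplanes $\{x_1=c\}$ turns this into injectivity of the X-ray transform on $\mathbb{R}^{n-1}$, which holds for $n\geq 3$; for $n=2$, where only the direction $\hat v=\pm e_2$ is available, I would supplement with the second-order high-velocity expansion, which produces an analogous identity for the directional derivatives of $V$ along $e_1$ and, combined with sufficient decay at infinity, yields the same conclusion. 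Hence $V_1=V_2$.

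\textbf{Main obstacle.} The real technical burden lies in justifying the reconstruction formula in the presence of the singular $|p|^2/2$-bounded piece $V_1^{\rm vs}$, which cannot be norm-controlled directly. The Avron--Herbst identity must be used to transfer the $\langle p\rangle^{-2}$ regularization of the wave packet onto $V_1^{\rm vs}$, at which point the radial cut-off integrability \eqref{enss_condition1} (equivalently \eqref{enss_condition2}) supplies a uniform-in-$|v|$ majorant. The hypothesis that $x_1V_1^{\rm vs}$ is itself $|p|^2/2$-bounded enters precisely when one commutes $V_1^{\rm vs}$ past the $e^{-itx_1}$ factor of Avron--Herbst, since doing so produces an extra $tx_1V_1^{\rm vs}$ term whose uniform control in $t$ and $v$ is exactly what this extra hypothesis provides.
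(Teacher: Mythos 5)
Your plan breaks down at its pivotal step: the reconstruction formula you propose, with $S(V)-I$ and with $V$ itself under the time integral, cannot hold for the class $\mathscr{V}^{\rm s}$. Since $|V^{\rm s}(x)|\leqslant C\langle x\rangle^{-\gamma}$ with only $1/2<\gamma\leqslant1$, the right-hand side $\int_{-\infty}^{\infty}\langle V^{\rm s}(x+t\hat v)\Phi_0,\Psi_0\rangle\,dt$ is in general divergent ($V^{\rm s}$ is not integrable along straight lines; it is short-range here only because the Stark trajectory grows like $t^2$). On the left-hand side the same obstruction appears as a scalar phase: the leading behaviour of $S-I$ on $\Phi_v$ contains the Graf factor $I_{{\rm G},v}=e^{-{\rm i}\int_{-\infty}^{\infty}V^{\rm s}(v\tau+e_1\tau^2/2)d\tau}$, and $|v|(I_{{\rm G},v}-1)$ is of size $|v|^{2-2\gamma}$, which diverges for $\gamma<1$. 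This is exactly why the paper (following Adachi--Maehara) works with the commutator $|v|({\rm i}[S,p_j]\Phi_v,\Psi_v)$ together with the auxiliary Graf-modified wave operators $\Omega_{{\rm G},v}^{\pm}$: writing $S=I_{{\rm G},v}(\Omega_{{\rm G},v}^+)^*\Omega_{{\rm G},v}^-$ and using $[S,p_j]=[S-I_{{\rm G},v},p_j-v_j]$ kills the scalar phase, and the limit then involves $\partial_{x_j}V^{\rm s}$, which decays like $\langle x\rangle^{-1-\alpha}$ and is integrable along lines (the paper's Theorem 2.1). The genuinely new content of the paper is the propagation estimate of Proposition 2.4 (splitting the $t$-integral at $|v|^{-\sigma_1}$, using $\|xe^{-{\rm i}t|p|^2/2}\Phi_0\|=O(1+|t|)$ and the H\"older inequality) which makes the remainder $O(|v|^{-1})$ for every $\alpha>0$; your step (i), asserting an $o(|v|^{-1})$ cost for replacing $W^{+}\Psi_v$ by $\Psi_v$, is also unavailable at this level of decay and is precisely what these estimates replace.

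A second gap is your restriction to directions $\hat v\perp e_1$. The paper only requires $|\omega\cdot e_1|<1$, so after the commutator reconstruction one has X-ray/Radon data for an open, dense set of directions, and the Plancherel formula for the Radon transform gives uniqueness for all $n\geqslant2$. With only perpendicular directions your slicing argument needs $n\geqslant3$, and for $n=2$ the single direction $\pm e_2$ determines only $\int V(x_1,x_2)\,dx_2$; the proposed ``second-order high-velocity expansion'' is not developed and is not a known substitute, so the two-dimensional case is left unproved. (A minor point: commuting the multiplication operator $V_1^{\rm vs}$ past $e^{{\rm i}tx_1}$ produces no extra term; the hypothesis that $x_1V_1^{\rm vs}$ is $|p|^2/2$-bounded is used for self-adjointness and domain questions, not where you indicate.)
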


If there are no electric fields, that is, for the standard Schr\"odinger operator $|p|^2/2+V$, then $\mathscr{V}^{\rm vs}$ is short-ranged and $\mathscr{V}^{\rm s}$ is long-ranged; otherwise, for the Stark Hamiltonian, Ozawa \cite{Oz} proved that both $\mathscr{V}^{\rm vs}$ and $\mathscr{V}^{\rm s}$ are short-ranged. Theorem \ref{the1} states that, by virtue of the Stark effect, the scattering operator determines the uniqueness of the interaction potentials, which belong to the long-range class in the absence of an electric field.\par
The Enss--Weder time-dependent method was developed in \cite{EnWe} and, by applying its method, Weder \cite{We} first proved this theorem for $\gamma>3/4$. However, as we mentioned above, the borderline between the short-range and long-range is $1/2$. Nicoleau \cite{Ni1} proved this theorem for $V\in C^\infty(\mathbb{R}^n)$ which under $\gamma>1/2$, satisfies
\begin{equation}
|\partial_x^\beta V(x)|\leqslant C_\beta\langle x\rangle^{-\gamma-|\beta|}
\end{equation}
with the additional condition $n\geqslant3$. Thereafter, these results were improved by Adachi--Maehara \cite{AdMa} given $1/2<\alpha\leqslant\gamma$. The behavior of the short-range part under their assumptions was
\begin{equation}
V^{\rm s}(x)=O(|x|^{-1/2-\epsilon}),\quad \nabla_xV^{\rm s}(x)=O(|x|^{-3/2-\epsilon})
\end{equation}
with small $\epsilon>0$. In this sense, a possibility in which the condition regarding the size of $\alpha$ could be relaxed was left because the classical trajectory in the Stark effect is $x(t)=O(t^2)$ as $t\rightarrow\infty$. Recently, Adachi, Fujiwara, and Ishida \cite{AdFuIs} considered the time-dependent electric fields
\begin{equation}
H_0^{\rm S}(t)=|p|^2/2-E(t)\cdot x,\quad E(t)=E_0(1+|t|)^{-\mu},\label{time_dependent}
\end{equation}
where $0\leqslant\mu<1$ and $0\not=E_0\in\mathbb{R}^n$, and proved this theorem under $\tilde{\alpha}_\mu<\alpha\leqslant\gamma$ with $1/(2-\mu)<\gamma\leqslant1$ and
\begin{equation}
\tilde{\alpha}_\mu=
\begin{cases}
\ \displaystyle\frac{7-3\mu-\sqrt{(1-\mu)(17-9\mu)}}{4(2-\mu)} & \mbox{if}\ 0\leqslant\mu\leqslant1/2,\\
\ \displaystyle\frac{1+\mu}{2(2-\mu)} & \mbox{if}\ 1/2<\mu<1.
\end{cases}\label{alpha_mu}
\end{equation}
The smallest $\tilde{\alpha}_\mu$ is when $\mu=0$, and in this case, \eqref{time_dependent} corresponds to the constant electric field \eqref{free}. Therefore, the result by \cite{AdFuIs} is one of the improvements of \cite{AdMa} because
\begin{equation}
\tilde{\alpha}_0=(7-\sqrt{17})/8<1/2.\label{alpha_0}
\end{equation}
Theorem \ref{the1} is a further improvement of \cite{AdMa} and \cite{AdFuIs}. We prove that this $\tilde{\alpha}_0$ is allow to be equal to $0$. This means that the tail of the first-order differential of the short-range part behaves as
\begin{equation}
\nabla_xV^{\rm s}(x)=O(|x|^{-1-\epsilon}).
\end{equation}
Therefore, from the physical aspect and the motion of the classical trajectory, our assumptions are quite natural.\par
We next consider the long-range case, that is, $V^{\rm l}\not=0$. For $V^{\rm l}\in\mathscr{V}_{\rm G}^{\rm l}$, we find the existence of the Graf-type (or Zorbas-type) modified wave operators which were proposed in Zorbas \cite{Zo} and Graf \cite{Gr}
\begin{equation}
W_{\rm G}^{\pm}=\slim_{t\rightarrow\pm\infty}e^{{\rm i}tH^{\rm S}}e^{-{\rm i}tH_0^{\rm S}}e^{-{\rm i}\int_0^tV^{\rm l}(e_1\tau^2/2)d\tau},\label{wave_operators_graf}
\end{equation}
and the Dollard-type modified wave operators introduced by Jensen and Yajima \cite{JeYa} (see also White \cite{Wh} and Adachi \cite{Ad})
\begin{equation}
W_{\rm D}^{\pm}=\slim_{t\rightarrow\pm\infty}e^{{\rm i}tH^{\rm S}}e^{-{\rm i}tH_0^{\rm S}}e^{-{\rm i}\int_0^tV^{\rm l}(p\tau+e_1\tau^2/2)d\tau}\label{wave_operators_dollard}
\end{equation}
by virtue of the condition $\gamma_{\rm G}+\kappa>1$, where $e_1=(1,0,\ldots,0)\in\mathbb{R}^n$. We find also the existence of \eqref{wave_operators_dollard}, even if $V^{\rm l}\in\mathscr{V}_{\rm D}^{\rm l}$. Then, for $V^{\rm l}\in\mathscr{V}_{\rm G}^{\rm l}\cup\mathscr{V}_{\rm D}^{\rm l}$, the Dollard-type modified scattering operator $S_{\rm D}=S_{\rm D}(V^{\rm l};V^{\rm vs}+V^{\rm s})$ is defined by
\begin{equation}
S_{\rm D}=(W_{\rm D}^+)^*W_{\rm D}^-.\label{modified_scattering_operator}
\end{equation}
The second theorem of this paper is the following.

\begin{The}\label{the2}
Let a $V^{\rm l}\in\mathscr{V}_{\rm G}^{\rm l}\cup\mathscr{V}_{\rm D}^{\rm l}$ be given. If $S_{\rm D}(V^{\rm l};V_1)=S_{\rm D}(V^{\rm l};V_2)$ for $V_1, V_2\in\mathscr{V}^{\rm vs}+\mathscr{V}^{\rm s}$, then $V_1=V_2$ holds. Moreover, any one of the Dollard-type modified scattering operators $S_{\rm D}$ determines uniquely the total potential $V$. 
\end{The}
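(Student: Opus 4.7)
The plan is to adapt the Enss--Weder time-dependent method to the Dollard-modified Stark framework. Fix $V^{\rm l}\in\mathscr{V}^{\rm l}_{\rm G}\cup\mathscr{V}^{\rm l}_{\rm D}$ and assume $S_{\rm D}(V^{\rm l};V_1)=S_{\rm D}(V^{\rm l};V_2)$ for some $V_1,V_2\in\mathscr{V}^{\rm vs}+\mathscr{V}^{\rm s}$. For Schwartz states $\Phi_0,\Psi_0$ with compact momentum support away from $0$, I form the boosted states $\Phi_v=e^{{\rm i}v\cdot x}\Phi_0$ and $\Psi_v=e^{{\rm i}v\cdot x}\Psi_0$ with velocity $v\in\mathbb{R}^n$ chosen \emph{perpendicular} to $e_1$, and write $\hat v=v/|v|$. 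The goal is to compute the leading asymptotic of $|v|\langle\Phi_v,(S_{\rm D}(V^{\rm l};V_1)-S_{\rm D}(V^{\rm l};V_2))\Psi_v\rangle$ as $|v|\to\infty$. The perpendicular boost is essential: the Stark force preserves transverse momentum, so the transverse ballistic motion $x\mapsto x+s\hat v$ dominates this limit, whereas the longitudinal Stark acceleration contributes only controllable lower-order corrections after the change of variables $t\mapsto t/|v|$ in the Cook integrals.

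Using the Cook representation of $W^\pm_{\rm D}-I$ with the Dollard phase $M(t)=\exp(-{\rm i}\int_0^tV^{\rm l}(p\tau+e_1\tau^2/2)d\tau)$, the effective generator of the modified wave operators is $V^{\rm vs}+V^{\rm s}+V^{\rm l}(x)-V^{\rm l}(pt+e_1t^2/2)$, the last two terms being the Taylor remainder that the Dollard construction renders small along the classical Stark trajectory. Since $S_{\rm D}(V^{\rm l};V_1)$ and $S_{\rm D}(V^{\rm l};V_2)$ share the same Dollard phase, the long-range contribution cancels in the difference, so that the effective Cook integrand is $V_1-V_2$. Inserting the boosted states, applying standard Stark propagation estimates to localize the quantum position near the classical trajectory, and expanding in powers of $|v|^{-1}$, one obtains
\begin{equation*}
\lim_{|v|\to\infty}|v|\langle\Phi_v,(S_{\rm D}(V^{\rm l};V_1)-S_{\rm D}(V^{\rm l};V_2))\Psi_v\rangle = -{\rm i}\int_{-\infty}^{\infty}\bigl\langle\Phi_0,(V_1-V_2)(x+s\hat v)\Psi_0\bigr\rangle\,ds.
\end{equation*}
Since this vanishes by hypothesis for every $\hat v\in e_1^\perp$ and every Schwartz pair $(\Phi_0,\Psi_0)$, translating the base point by varying $\Phi_0,\Psi_0$ and invoking injectivity of the Radon transform on each hyperplane perpendicular to $e_1$ forces $V_1=V_2$.

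For the moreover statement, the Dollard phase $M(t)$ is a pseudodifferential operator depending only on momentum, and its late-time asymptotics encode $V^{\rm l}$ along the classical Stark trajectory. If two decompositions $V=V^{\rm vs}+V^{\rm s}+V^{\rm l}$ and $\tilde V=\tilde V^{\rm vs}+\tilde V^{\rm s}+\tilde V^{\rm l}$ produce the same operator $S_{\rm D}$, one can isolate the scalar-phase content of $S_{\rm D}$---for instance by testing against momentum-localized wave packets for which the short-range scattering becomes negligible compared with the accumulated long-range phase---to recover $V^{\rm l}=\tilde V^{\rm l}$. The first half of the theorem then applies with this common $V^{\rm l}$ and yields $V^{\rm vs}+V^{\rm s}=\tilde V^{\rm vs}+\tilde V^{\rm s}$, hence $V=\tilde V$.

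The principal technical obstacle is controlling the Taylor remainder $V^{\rm l}(x)-V^{\rm l}(pt+e_1 t^2/2)$ uniformly in $|v|$ under the sharp hypotheses $\gamma>1/2$, $\alpha>0$, and $\gamma_{\rm D}>3/8$: along the classical Stark trajectory $|x(t)|\sim t^2/2$ this remainder decays only borderline in $t$, so a second-order Taylor expansion of $V^{\rm l}$ combined with oscillatory integration by parts in the Cook integral is required to upgrade its temporal decay to something integrable uniformly in $|v|$. Quantifying these error bounds is precisely where the improvement over Adachi--Fujiwara--Ishida \cite{AdFuIs} enters. The remaining ingredients---existence of the Dollard and Graf modified wave operators under the stated hypotheses, Reed--Simon smoothing for the very-singular part $V_1^{\rm vs}$, and the reduction of the Graf case to the Dollard case via the scalar correction between the two modifiers---are standard and follow the pattern of Theorem~\ref{the1} together with \cite{AdMa, AdFuIs}.
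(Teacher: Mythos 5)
There is a genuine gap at the heart of your argument: the limit formula you take as the basis of the reconstruction,
\begin{equation*}
\lim_{|v|\to\infty}|v|\bigl\langle\Phi_v,\bigl(S_{\rm D}(V^{\rm l};V_1)-S_{\rm D}(V^{\rm l};V_2)\bigr)\Psi_v\bigr\rangle
=-{\rm i}\int_{-\infty}^{\infty}\bigl\langle\Phi_0,(V_1-V_2)(x+s\hat v)\Psi_0\bigr\rangle\,ds,
\end{equation*}
is ill-defined for the classes admitted here. The short-range parts only satisfy $|V^{\rm s}(x)|\leqslant C\langle x\rangle^{-\gamma}$ with $1/2<\gamma\leqslant1$, so the line integral on the right diverges in general, and correspondingly $|v|\int\|(V_1^{\rm s}-V_2^{\rm s})(x)e^{-{\rm i}tH_0^{\rm S}}\Phi_v\|\,dt$ is not $O(1)$ (after the scaling $t=s/|v|$ it behaves like $|v|^{1-\gamma}$ up to logarithms). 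This is precisely the obstruction that forces the paper (following Weder and Adachi--Maehara) to work not with $S_{\rm D,1}-S_{\rm D,2}$ but with the commutator $|v|({\rm i}[S_{\rm D},p_j]\Phi_v,\Psi_v)$, combined with the scalar Graf modifier $I_{{\rm G},v}$ and the identity $[S_{\rm D},p_j]=[S_{\rm D}-I_{{\rm G},v},p_j-v_j]$: then only the differences $V^{\rm s}(x)-V^{\rm s}(vt+e_1t^2/2)$ and, in the limit, $\partial_{x_j}V^{\rm s}$ (decay $\langle x\rangle^{-1-\alpha}$, integrable along lines) together with commutator-type $V^{\rm vs}$ terms appear; this is the content of Theorem~\ref{the8} and of Propositions~\ref{prop11} and \ref{prop15}, which carry the actual improvement ($\alpha>0$ instead of $\alpha>1/2$). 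Your "effective Cook integrand is $V_1-V_2$" step also glosses over the fact that the two scattering operators are built from different full propagators, so no exact cancellation of the long-range contribution occurs at the integrand level; in the paper the long-range terms are not cancelled but estimated (Propositions~\ref{prop12} and \ref{prop16}), and they survive in the reconstruction formula as the $\partial_{x_j}V^{\rm l}$ term, which is exactly what yields the "moreover" statement (uniqueness of the total $V$) via the Radon-transform argument, rather than any separate extraction of a "scalar-phase content" of $S_{\rm D}$.

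A second, independent problem is your restriction to boosts perpendicular to $e_1$ and the claim that this is essential. The paper requires only $|\omega\cdot e_1|<1$, and this generality is needed: if $n=2$ (allowed here, since the standing assumption is $n\geqslant2$), the only directions perpendicular to $e_1$ are $\pm e_2$, and line integrals in a single direction give one projection of $V_1-V_2$, from which injectivity of the Radon/X-ray transform cannot be invoked; your "injectivity on each hyperplane perpendicular to $e_1$" argument needs those hyperplanes to have dimension at least $2$, i.e.\ $n\geqslant3$. The longitudinal component of the boost causes no harm because the propagation estimates (e.g.\ \eqref{eq1_6}, \eqref{eq1_18}) only use $|vt+e_1t^2/2|\gtrsim|v||t|$ and $\gtrsim t^2$, valid whenever $|\omega\cdot e_1|<1$. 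Finally, no oscillatory integration by parts on the Dollard remainder is needed or used; the required bounds are the propagation estimates quoted from \cite{AdMa} and \cite{AdFuIs}, with $\gamma_{\rm D}>3/8$ entering only through \eqref{long_range_remainder_dollard}.
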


When $V^{\rm l}\in\mathscr{V}_{\rm G}^{\rm l}$, a similar result to Theorem \ref{the2} was obtained in \cite{AdMa} (Note that the notation of $\gamma_{\rm G}$ was denoted by $\gamma_{\rm D}$ in \cite{AdMa}), however, the decay condition of the short-range part was $1/2<\alpha\leqslant\gamma$. Therefore, Theorem \ref{the2} extends the short-range class introduced in \cite{AdMa} to the broader $\mathscr{V}^{\rm s}$. For $V^{\rm l}\in\mathscr{V}_{\rm D}^{\rm l}$, the uniqueness of the short-range interactions was also proved in \cite{AdFuIs} for the time-dependent electric fields \eqref{time_dependent}, in which $\alpha$ satisfied $\tilde{\alpha}_{\mu,\rm D}<\alpha\leqslant\gamma$ with $1/(2-\mu)<\gamma\leqslant1$ and
\begin{equation}
\tilde{\alpha}_{\mu,\rm D}=
\begin{cases}
\ \displaystyle\frac{13-5\mu-\sqrt{(1-\mu)(41-25\mu)}}{8(2-\mu)} & \mbox{if}\ 0\leqslant\mu\leqslant5/7,\\
\ \displaystyle\frac{1+\mu}{2(2-\mu)} & \mbox{if}\ 5/7<\mu<1,
\end{cases}\label{alpha_mu_D}
\end{equation}
and $\gamma_{\rm D}$ satisfied $\tilde{\gamma}_\mu<\gamma_{\rm D}\leqslant1/(2-\mu)$ with
\begin{equation}
\tilde{\gamma}_\mu=\frac{1}{2(2-\mu)}+\frac{1-\mu}{4(2-\mu)}.\label{gamma_mu}
\end{equation}
The smallest $\tilde{\alpha}_{\mu,\rm D}$ and $\tilde{\gamma}_\mu$ are when $\mu=0$, and this case corresponds to a constant electric field \eqref{free}. In comparison with our result, let us substitute $\mu=0$ for \eqref{alpha_mu_D} and \eqref{gamma_mu}. Although $\tilde{\gamma}_0=3/8$ says that the condition on the long-range class is the same as our assumption \eqref{long_range_dollard}, for the short-range class, Theorem \ref{the2} makes true improvement because
\begin{equation}
\tilde{\alpha}_{0,\rm D}=(13-\sqrt{41})/16.\label{alpha_0_D}
\end{equation}
We prove that this $\tilde{\alpha}_{0,\rm D}$ is allow to be equal to $0$.\par
A variety of estimates to prove Theorems \ref{the1} and \ref{the2} shall be given in the following sections. We here emphasize that some of them are new, in particular, as for the long-range case of $V^{\rm l}\in\mathscr{V}_{\rm D}^{\rm l}$, these estimates hold for much broader class $\hat{\mathscr{V}}_{\rm D}^{\rm l}$ defined in Section \ref{long_range_interactions}.\par
There are several other studies concerning the uniqueness of the interaction potentials in the electric fields. Nicoleau \cite{Ni2} considered the time-periodic electric field and obtained the same result given in \cite{Ni1}. Valencia and Weder \cite{VaWe} applied the result obtained in \cite{AdMa} to the $N$-body case (see also \cite{We}). Adachi, Kamada, Kazuno, and Toratani \cite{AdKaKaTo} also treated the time-dependent electric field, which is the same as in \eqref{time_dependent}, however, the case where $\mu=0$, that is, the constant electric field \eqref{free} was not included.

\section{Short-range interactions}\label{short_range_interactions}
In this section, we consider the short-range interactions only, thus $V^{\rm l}\equiv0$, and give a proof of Theorem \ref{the1}. The following reconstruction theorem leads to the proof.

\begin{The}\label{the3}
Let $\omega\in\mathbb{R}^n$ be given such that $|\omega|=1$ and $|\omega\cdot e_1|<1$. Put $v=|v|\omega$. Suppose $\Phi_0,\Psi_0\in L^2(\mathbb{R}^n)$ such that their Fourier transforms $\mathscr{F}\Phi_0,\mathscr{F}\Psi_0\in C_0^\infty(\mathbb{R}^n)$ with $\supp\mathscr{F}\Phi_0,\supp\mathscr{F}\Psi_0\subset\{\xi\in\mathbb{R}^n\bigm||\xi|<\eta\}$ for the given $\eta>0$. Put $\Phi_v=e^{{\rm i}v\cdot x}\Phi_0,\Psi_v=e^{{\rm i}v\cdot x}\Psi_0$. Then
\begin{eqnarray}
|v|({\rm i}[S,p_j]\Phi_v,\Psi_v)\Eqn{=}\int_{-\infty}^\infty\Big\{(V^{\rm vs}(x+\omega t)p_j\Phi_0,\Psi_0)-(V^{\rm vs}(x+\omega t)\Phi_0,p_j\Psi_0)\nonumber \\
\Eqn{}\qquad\quad+({\rm i}(\partial_{x_j}V^{\rm s})(x+\omega t)\Phi_0,\Psi_0)\Big\}dt+o(1)\label{reconstruction}
\end{eqnarray}
holds as $|v|\rightarrow\infty$ for $V^{\rm vs}\in\mathscr{V}^{\rm vs}$ and $V^{\rm s}\in\mathscr{V}^{\rm s}$, where $(\cdot,\cdot)$ is the scalar product of $L^2(\mathbb{R}^n)$ and $p_j$ is the $j$th component of $p$.
\end{The}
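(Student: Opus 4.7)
I would follow the Enss--Weder high-velocity scheme, adapted to the Stark dynamics. By the asymptotic completeness available in the short-range Stark case, $W^\pm$ are unitary, so $S-I=(W^+)^*(W^--W^+)$. Combining Cook's identity
$$
(W^--W^+)\phi=-{\rm i}\int_{-\infty}^{\infty}e^{{\rm i}tH^{\rm S}}Ve^{-{\rm i}tH_0^{\rm S}}\phi\,dt
$$
with the intertwining relation $e^{-{\rm i}tH^{\rm S}}W^+=W^+e^{-{\rm i}tH_0^{\rm S}}$ and the self-adjointness of $p_j$ (which cancels the contribution of the identity part of $S$), one obtains
\begin{align*}
({\rm i}[S,p_j]\Phi_v,\Psi_v)=\int_{-\infty}^{\infty}\Big\{&(Ve^{-{\rm i}tH_0^{\rm S}}p_j\Phi_v,W^+e^{-{\rm i}tH_0^{\rm S}}\Psi_v)\\
&- (Ve^{-{\rm i}tH_0^{\rm S}}\Phi_v,W^+e^{-{\rm i}tH_0^{\rm S}}p_j\Psi_v)\Big\}\,dt.
\end{align*}

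Next, I would replace $W^+$ by $I$ in these matrix elements modulo $o(1)$ as $|v|\to\infty$. Writing $W^+-I$ as a Cook integral and using \eqref{enss_condition1} and \eqref{short_range}, the quantity $\|(W^+-I)e^{-{\rm i}tH_0^{\rm S}}\Psi_v\|$ can be controlled by an integrable function of $t$ vanishing in the limit. To evaluate the remaining free expression I would use the Heisenberg relation $e^{{\rm i}tH_0^{\rm S}}xe^{-{\rm i}tH_0^{\rm S}}=x+tp+e_1t^2/2$ to conjugate the potential, then strip the plane-wave phase via $e^{-{\rm i}v\cdot x}pe^{{\rm i}v\cdot x}=p+v$, turning the matrix element into $(V(x+tp+vt+e_1t^2/2)\Phi_0,\Psi_0)$ and its companion with $p_j$ on $\Psi_0$. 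The change of variable $s=|v|t$ reshapes the trajectory to $s\omega+sp/|v|+e_1s^2/(2|v|^2)$, which tends to the straight line $s\omega$ as $|v|\to\infty$ on the Schwartz states $\Phi_0,\Psi_0$, while $|v|\,dt=ds$ accounts for the prefactor $|v|$ on the left-hand side of \eqref{reconstruction}.

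In the $V^{\rm s}$ part, the $C^1$-regularity in \eqref{short_range} permits the commutation $V^{\rm s}p_j-p_jV^{\rm s}={\rm i}\partial_{x_j}V^{\rm s}$, collapsing the two matrix elements into the single derivative term in \eqref{reconstruction}; for $V^{\rm vs}$ one keeps the unsymmetric form because $V^{\rm vs}$ need not be differentiable. The passage to the limit is justified by dominated convergence with the majorants $\|V^{\rm vs}(\cdot+s\omega+\cdots)\langle p\rangle^{-2}\|\in L^1_s$, which comes from \eqref{enss_condition1} via a standard polar-coordinate reduction exploiting the transversality hypothesis $|\omega\cdot e_1|<1$, and $|\partial_{x_j}V^{\rm s}(x+s\omega)|\leq C\langle s\rangle^{-1-\alpha}\in L^1_s$ from \eqref{short_range} valid for any $\alpha>0$. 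The main obstacle is making the $W^+\to I$ replacement quantitative enough that positivity of $\alpha$ alone suffices, without the lower bound of the type $\alpha>\tilde\alpha_0$ imposed in \cite{AdMa} and \cite{AdFuIs}; this will require a refined propagation estimate tailored to the Stark trajectory $vt+e_1t^2/2$ and a sharper use of the decay of $\nabla V^{\rm s}$ along that trajectory.
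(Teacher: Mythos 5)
There is a genuine structural gap. Your scheme compares $S$ with the identity and then tries to remove $W^+$ by a Cook estimate, so every error integrand contains $V^{\rm s}(x)$ itself. But for $V^{\rm s}\in\mathscr{V}^{\rm s}$ the only decay available for $V^{\rm s}$ is $\langle x\rangle^{-\gamma}$ with $\gamma$ possibly just above $1/2$, and along the Stark trajectory one only gets
$\int_{-\infty}^\infty\|V^{\rm s}(x)e^{-{\rm i}tH_0^{\rm S}}\Phi_v\|\,dt=O(|v|^{1-2\gamma})$.
Hence the remainder produced by your ``$W^+\to I$'' replacement is of size $|v|\cdot O(|v|^{1-2\gamma})\cdot O(|v|^{1-2\gamma})=O(|v|^{3-4\gamma})$, which forces $\gamma>3/4$ --- exactly Weder's 1996 condition --- and does not prove the theorem for the class $\mathscr{V}^{\rm s}$ of Assumption \ref{ass}. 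Your closing remark that one needs ``a sharper use of the decay of $\nabla V^{\rm s}$ along the trajectory'' cannot be implemented inside your scheme, because $\nabla V^{\rm s}$ never appears in your error terms: the Cook integrand sees $V^{\rm s}$, not its gradient. The missing idea is the Graf/Zorbas-type scalar modification: the paper compares $S$ not with $I$ but with the scalar phase $I_{{\rm G},v}=e^{-{\rm i}\int_{-\infty}^{\infty}V^{\rm s}(v\tau+e_1\tau^2/2)d\tau}$, introduces the auxiliary modified wave operators $\Omega_{{\rm G},v}^\pm$ built from the modifier \eqref{garf_modifier} (see also \eqref{I_G_v}), and uses $[S,p_j]=[S-I_{{\rm G},v},p_j-v_j]$ together with $(p_j-v_j)\Phi_v=(p_j\Phi_0)_v$. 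Because $M_{{\rm G},v}(t)$ is scalar, every Duhamel integrand then carries the difference $V^{\rm vs}(x)+V^{\rm s}(x)-V^{\rm s}(vt+e_1t^2/2)$, which by the mean value theorem is governed by the $\langle x\rangle^{-1-\alpha}$ decay of $\nabla V^{\rm s}$ in \eqref{short_range}.

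The second ingredient you would still need is the quantitative propagation estimate for that difference, Proposition \ref{prop6}: splitting the time integral at $|t|=|v|^{-\sigma_1}$, using the Enss estimate (Proposition \ref{prop4}), the bound $\|xe^{-{\rm i}t|p|^2/2}\Phi_0\|\leqslant C(1+|t|)$, the lower bound $|x+vt+e_1t^2/2|\geqslant\max\{c_1|v||t|,c_2t^2\}$ on the relevant region, and the H\"older inequality to interpolate between the two lower bounds, one gets $O(|v|^{-1})$ independently of $\alpha>0$. Combined with Proposition \ref{prop5} and Corollary \ref{cor7}, the remainder $R(v)$ is $|v|\cdot O(|v|^{-1})\cdot O(|v|^{-1})=O(|v|^{-1})$, which is what makes $\alpha>0$ alone sufficient. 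Your computation of the leading term (change of variables $s=|v|t$, commuting $p_j$ through $V^{\rm s}$ to produce $\partial_{x_j}V^{\rm s}$, dominated convergence using \eqref{enss_condition1} and \eqref{short_range}) is in the right spirit and matches the paper's main term $I(v)$ after one notes that the scalar subtraction $V^{\rm s}(vt+e_1t^2/2)$ cancels in the antisymmetrized expression; but without the modification and Proposition \ref{prop6} the error terms do not go to zero on the stated class, so the proof as proposed does not establish \eqref{reconstruction}.
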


In preparation to prove Theorem \ref{the3}, and for use throughout this paper, the following proposition due to Enss \cite[Proposition 2.10]{En}.

\begin{Prop}\label{prop4}
For any $f\in C_0^\infty(\mathbb{R}^n)$ with $\supp f\subset\{\xi\in\mathbb{R}^n\bigm||\xi|<\eta\}$ for some $\eta>0$ and any $N\in\mathbb{N}$, there exists a constant $C_N$ dependent on $f$ only such that
\begin{equation}
\|F(x\in\mathscr{M}')e^{-{\rm i}t|p|^2/2}f(p-v)F(x\in\mathscr{M})\|\leqslant C_N(1+r+|t|)^{-N}
\end{equation}
for $t\in\mathbb{R}$ and measurable sets $\mathscr{M}', \mathscr{M}\subset\mathbb{R}^n$ with the property $r=\normalfont{\textrm{dist}}(\mathscr{M}',\mathscr{M}+vt)-\eta|t|\geqslant0$.
\end{Prop}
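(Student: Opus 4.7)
The plan is to compute the integral kernel of $e^{-\mathrm{i}t|p|^2/2}f(p-v)$ via Fourier inversion, restrict it using the cut-offs, and apply Schur's test. As a Fourier multiplier with symbol $e^{-\mathrm{i}t|\xi|^2/2}f(\xi-v)$, the operator has kernel $(2\pi)^{-n}\int e^{\mathrm{i}(x-y)\cdot\xi-\mathrm{i}t|\xi|^2/2}f(\xi-v)\,d\xi$; substituting $\xi=v+\zeta$ and factoring out the phases depending only on $x$ or on $y$ rewrites this as $(2\pi)^{-n}e^{\mathrm{i}v\cdot(x-y)-\mathrm{i}t|v|^2/2}h_t(x-y-vt)$ with $h_t(z):=\int e^{\mathrm{i}z\cdot\zeta-\mathrm{i}t|\zeta|^2/2}f(\zeta)\,d\zeta$. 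For $x\in\mathscr{M}'$ and $y\in\mathscr{M}$, the shift $z=x-y-vt$ then satisfies $|z|\geq\mathrm{dist}(\mathscr{M}',\mathscr{M}+vt)=r+\eta|t|$, reducing the task to a pointwise decay estimate on $h_t(z)$ over $\{|z|\geq r+\eta|t|\}$.

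\textbf{Integration by parts.} For the pointwise bound I would use iterated IBP with phase $\Phi(\zeta):=z\cdot\zeta-t|\zeta|^2/2$. The key observation is that since $\supp f$ is a compact subset of the open ball $\{|\zeta|<\eta\}$, there exists $\eta'<\eta$ with $\supp f\subset\{|\zeta|\leq\eta'\}$; setting $\delta:=\eta-\eta'>0$, on $\supp f$ one has $|\nabla\Phi(\zeta)|=|z-t\zeta|\geq|z|-\eta'|t|\geq r+\delta|t|$, so $\Phi$ has no stationary point in $\supp f$. I would then iterate the standard operator $L=(\mathrm{i}|\nabla\Phi|^2)^{-1}\nabla\Phi\cdot\nabla_\zeta$ (satisfying $Le^{\mathrm{i}\Phi}=e^{\mathrm{i}\Phi}$) a total of $M$ times inside $h_t$. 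The delicate bookkeeping is that since $\partial^2\Phi=-t\mathbf{I}$, each application of the formal adjoint $L^*$ brings extra factors of $|t|$ into the numerator from differentiating $1/|\nabla\Phi|^2$, but they are absorbed by $|t|/|\nabla\Phi|\leq\delta^{-1}$; an induction would then give $|(L^*)^Mf(\zeta)|\leq C_{M,f}/|\nabla\Phi(\zeta)|^M$ uniformly on $\supp f$. Hence $|h_t(z)|\leq C_M(r+\delta|t|)^{-M}\lesssim(1+r+|t|)^{-M}$ on the transition shell $r+\eta|t|\leq|z|\leq 2(r+\eta|t|)$; in the outer region $|z|\geq 2(r+\eta|t|)$, the sharper bound $|\nabla\Phi|\geq|z|/2$ combines with the same IBP to yield $|h_t(z)|\leq C_M|z|^{-M}$.

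\textbf{Operator norm and main obstacle.} Integrating over $\{|z|\geq r+\eta|t|\}$, the inner shell of volume $\lesssim(r+\eta|t|)^n$ contributes $\lesssim(1+r+|t|)^{n-M}$, while $\int_{|z|\geq 2(r+\eta|t|)}|z|^{-M}\,dz\lesssim(r+\eta|t|)^{n-M}$. Choosing $M=N+n+1$ yields $\int_{|z|\geq r+\eta|t|}|h_t(z)|\,dz\leq C_N(1+r+|t|)^{-N}$, the regime $r+|t|\leq 1$ being covered by the trivial bound $\|\cdot\|\leq\|f\|_\infty$. Since $\sup_{x\in\mathscr{M}'}\int_{\mathscr{M}}|h_t(x-y-vt)|\,dy\leq\int_{|z|\geq r+\eta|t|}|h_t(z)|\,dz$ (and symmetrically in $y$), Schur's test converts this into the asserted operator-norm bound. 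The principal obstacle I expect is the IBP bookkeeping of the previous paragraph: without the strict containment $\supp f\subset\{|\zeta|<\eta\}$ (and the resulting positive gap $\delta$), the ratio $|t|/|\nabla\Phi|$ could blow up and the cascade of $|t|$-powers produced by iterating $L^*$ (one per step, since $\partial^2\Phi$ is proportional to $t$) would fail to be absorbed, leaving a bound that degrades as $|t|\to\infty$; confirming the cancellation $|(L^*)^Mf|\leq C_{M,f}|\nabla\Phi|^{-M}$ uniformly is the technical heart of the argument.
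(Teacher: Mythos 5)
Your argument is correct and is essentially the standard non-stationary-phase proof of this estimate: the paper itself gives no proof, importing the statement verbatim from Enss \cite[Proposition 2.10]{En}, whose proof proceeds exactly as you do (kernel representation, phase $z\cdot\zeta-t|\zeta|^2/2$ with $|\nabla\Phi|\geqslant r+\delta|t|$ on $\supp f$, iterated integration by parts, then a Schur/Young bound). The two points you flag as delicate --- absorbing the $|t|$-powers via $|t|/|\nabla\Phi|\leqslant\delta^{-1}$ using the strict inclusion $\supp f\subset\{|\xi|<\eta\}$, and supplementing the shell bound with the $|z|^{-M}$ decay to make the $z$-integral converge --- are handled correctly, so no gap remains.
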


We next prepare the following propagation estimate for the singular part $V^{\rm vs}$ which was proved in \cite[Lemma 2.1]{AdMa} (see also \cite{We}). While $\|\cdot\|$ also indicates the norm in $L^2(\mathbb{R}^n)$, for simplicity, we do not distinguish between the notations for the usual $L^2$-norm and its operator norm.

\begin{Prop}\label{prop5}
Let $v$ and $\Phi_v$ be as in Theorem \ref{the3}. Then
\begin{equation}
\int_{-\infty}^\infty\|V^{\rm vs}(x)e^{-{\rm i}tH_0^{\rm S}}\Phi_v\|dt=O(|v|^{-1})
\end{equation}
holds as $|v|\rightarrow\infty$ for $V^{\rm vs}\in\mathscr{V}^{\rm vs}$.
\end{Prop}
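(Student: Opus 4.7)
The plan is to reduce the bound to an estimate for a free Schr\"odinger wave packet translated along the classical Stark trajectory $y(t):=tv+(t^2/2)e_1$, and then to extract the $|v|^{-1}$ decay through a change of variables $R\sim|y(t)|$ that converts the $t$-integral into the finite $R$-integral of \eqref{enss_condition1}.

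First I would establish the identity
\begin{equation*}
\|V^{\rm vs}(x)e^{-itH_0^{\rm S}}\Phi_v\|=\|V^{\rm vs}(x+y(t))e^{-it|p|^2/2}\Phi_0\|,
\end{equation*}
by conjugating $e^{-itH_0^{\rm S}}$ successively with $e^{iv\cdot x}$ (which turns $H_0^{\rm S}$ into $H_v:=|p+v|^2/2-x_1$), then with $e^{itH_v}$ (yielding the Heisenberg position $x+tp+y(t)$), and finally with $e^{it|p|^2/2}$ (which converts $x+tp$ back into $x$); unitarity removes all three conjugators from the norm. Next I would write $\Phi_0=\langle p\rangle^{-2}\Psi_0$ with $\Psi_0=\langle p\rangle^2\Phi_0$ still Schwartz and Fourier-supported in $\{|\xi|<\eta\}$, and split $1=F(|x+y(t)|\geq|y(t)|/2)+F(|x+y(t)|<|y(t)|/2)$. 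On the far set, translation invariance of the operator norm provides a bound $\|V^{\rm vs}(x)\langle p\rangle^{-2}F(|x|\geq|y(t)|/2)\|\cdot\|\Psi_0\|$. On the near set $|x+y(t)|<|y(t)|/2$ forces $|x|>|y(t)|/2$, so after approximating $\Phi_0\approx F(|x|\leq L)\Phi_0$ up to a Schwartz tail in $L$, Proposition \ref{prop4} applied (with the $v$ of that proposition set to $0$, $\mathscr{M}'=\{|x|>|y(t)|/2\}$, $\mathscr{M}=\{|x|\leq L\}$, and $f=\mathscr{F}\Phi_0$) would supply polynomial decay of the form $C_N(1+|y(t)|/2-L-\eta|t|+|t|)^{-N}$.

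Finally I would integrate in $t$. The hypothesis $|\omega\cdot e_1|<1$ yields the identity $|y(t)|^2=t^2[(|v|+t(\omega\cdot e_1)/2)^2+t^2(1-(\omega\cdot e_1)^2)/4]$, so $|y(t)|\gtrsim|v||t|$ for $|t|\lesssim|v|$ and $|y(t)|\gtrsim t^2$ for $|t|\gtrsim|v|$. On the dominant range $|t|\lesssim|v|$ the substitution $R=|y(t)|/2$ has Jacobian comparable to $|v|dt$, so the far-piece contribution integrates to at most $|v|^{-1}\int_0^\infty\|V^{\rm vs}(x)\langle p\rangle^{-2}F(|x|\geq R)\|dR$, which is finite by \eqref{enss_condition1}; the near piece and the tail $|t|\gtrsim|v|$ together contribute $O(|v|^{-N})$ for any $N$. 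The main obstacle I expect is the intermediate regime $|t|\sim|v|$, where the linear and quadratic parts of $y(t)$ are comparable and the separation $|y(t)|/2-L-\eta|t|$ in the Proposition \ref{prop4} bound is thinnest; there one must exploit the strict lower bound forced by $|\omega\cdot e_1|<1$ and take $N$ large enough that the residual polynomial decay still integrates to no worse than $O(|v|^{-1})$. A secondary point is that the singular part $V_1^{\rm vs}$, though only $|p|^2$-bounded with relative bound $<1$, is absorbed in the uniform bound $\|V^{\rm vs}\langle p\rangle^{-2}\|<\infty$ implicit in the construction above.
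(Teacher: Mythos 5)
Your overall strategy --- the Avron--Herbst reduction to $\|V^{\rm vs}(x+vt+e_1t^2/2)e^{-{\rm i}t|p|^2/2}\Phi_0\|$, a near/far splitting at radius $|y(t)|/2$, condition \eqref{enss_condition1} for the far region, Proposition \ref{prop4} for the near region, and a change of variables producing the factor $|v|^{-1}$ --- is exactly the argument of \cite[Lemma 2.1]{AdMa} that the paper cites in place of a proof, so in approach you agree with the source. There is, however, one step that fails as written: the truncation $\Phi_0\approx F(|x|\leqslant L)\Phi_0$ with a fixed $L$. The discarded term contributes $\|V^{\rm vs}\langle p\rangle^{-2}\|\,\|F(|x|>L)\Psi_0\|$ (with your $\Psi_0=\langle p\rangle^2\Phi_0$) at \emph{every} time $t$; this is a positive constant independent of $t$, so its integral over $t\in\mathbb{R}$ diverges no matter how large $L$ is. The truncation radius must be taken $t$- and $v$-dependent, e.g.\ $L=|y(t)|/4$, exactly as in the paper's proof of Proposition \ref{prop6}, where the inserted cutoff is $F(|x|\leqslant\lambda|v||t|)$. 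With that choice the Schwartz decay of $\Psi_0$ makes the remainder $O(\langle vt\rangle^{-N})$, Proposition \ref{prop4} applies with $r\geqslant\bigl(\sqrt{1-\delta^2}/4\bigr)|v||t|-\eta|t|\geqslant0$ for $|v|$ large (here $\delta=|\omega\cdot e_1|$), and the whole near-region contribution integrates to $O(|v|^{-1})$ --- note $O(|v|^{-1})$, not the $O(|v|^{-N})$ you claim; likewise the far piece on $|t|\gtrsim|v|$ is controlled only through \eqref{enss_condition1} and gives no better than order $|v|^{-1}$. These overstatements do not hurt the final result, but the fixed-$L$ truncation does, and it is the one genuine gap.

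Two smaller points. First, the ``intermediate regime $|t|\sim|v|$'' you single out as the main obstacle is not one: the elementary bound \eqref{eq1_6} gives $|vt+e_1t^2/2|\geqslant\sqrt{1-\delta^2}\,|v||t|$ for \emph{all} $t$, so a single estimate covers every time scale and no separate treatment of $|t|\sim|v|$ is needed. Second, $|y(t)|$ need not be monotone in $t$ when $|\omega\cdot e_1|$ is close to $1$, so the substitution $R=|y(t)|/2$ is not literally a change of variables; instead use that $R\mapsto\|V^{\rm vs}(x)\langle p\rangle^{-2}F(|x|\geqslant R)\|$ is non-increasing, bound the integrand by its value at $R=\bigl(\sqrt{1-\delta^2}/2\bigr)|v||t|$, and substitute in that variable, which yields the desired bound $C|v|^{-1}\int_0^\infty\|V^{\rm vs}(x)\langle p\rangle^{-2}F(|x|\geqslant R)\|\,dR$.
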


The second propagation estimate for the regular part $V^{\rm s}$ is one of the main techniques in this paper, and is also one of the improvements on previous work.

\begin{Prop}\label{prop6}
Let $v$ and $\Phi_v$ be as in Theorem \ref{the3}. Then
\begin{equation}
\int_{-\infty}^\infty\|\{V^{\rm s}(x)-V^{\rm s}(vt+e_1t^2/2)\}e^{-{\rm i}tH_0^{\rm S}}\Phi_v\|dt=O(|v|^{-1})\label{estimate_V^s}
\end{equation}
holds as $|v|\rightarrow\infty$ for $V^{\rm s}\in\mathscr{V}^{\rm s}$.
\end{Prop}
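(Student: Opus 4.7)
The plan is to use the phase-space concentration of $e^{-itH_0^{\rm S}}\Phi_v$ near position $y(t):=vt+e_1t^2/2$ and momentum $v+e_1t$, so that $V^{\rm s}(x)$ can be replaced by $V^{\rm s}(y(t))$ with an error controlled by $\nabla V^{\rm s}$. Two inputs are essential: the mean-value identity
\begin{equation*}
V^{\rm s}(x)-V^{\rm s}(y(t))=\int_0^1\nabla V^{\rm s}(y(t)+s(x-y(t)))\cdot(x-y(t))\,ds,
\end{equation*}
and the Heisenberg relation $(x-y(t))e^{-itH_0^{\rm S}}=e^{-itH_0^{\rm S}}(x+(p-v)t)$, which follows from $e^{itH_0^{\rm S}}xe^{-itH_0^{\rm S}}=x+pt+e_1t^2/2$. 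The latter yields the moment bound $\|(x-y(t))e^{-itH_0^{\rm S}}\Phi_v\|\leq C(1+|t|)$ (since $(x+(p-v)t)\Phi_v=e^{iv\cdot x}(x+pt)\Phi_0$), while an Avron--Herbst conjugation transfers Proposition \ref{prop4} to the Stark case, giving $\|F(|x-y(t)|\geq R)e^{-itH_0^{\rm S}}\Phi_v\|\leq C_N(1+R-\eta|t|)^{-N}$ whenever $R>\eta|t|$.

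Using $|\omega\cdot e_1|<1$, one has $|y(t)|\geq|\omega_\perp||v||t|$ for all $t$, with $|y(t)|\gtrsim t^2$ once $|t|\gtrsim|v|$. I would partition $\mathbb{R}=\{|t|\leq 1/|v|\}\cup\{1/|v|<|t|\leq|v|\}\cup\{|t|>|v|\}$: the first piece contributes $O(|v|^{-1})$ from the boundedness of $V^{\rm s}$. For the remaining two pieces, I insert the cutoff $F(|x-y(t)|\leq R_t)+F(|x-y(t)|>R_t)$ with $R_t\leq|y(t)|/2$. On the near region, the mean-value identity together with $|\nabla V^{\rm s}|\leq C\langle\cdot\rangle^{-1-\alpha}$ (valid because $|y(t)+s(x-y(t))|\geq|y(t)|/2$) and the above moment bound yields the pointwise-in-time estimate $\leq C(1+|t|)\langle y(t)\rangle^{-1-\alpha}$; the far region contributes rapidly decaying terms via the propagation estimate as soon as $R_t>\eta|t|$.

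The principal obstacle is the time integration: the mean-value estimate alone produces $\int(1+|t|)\langle y(t)\rangle^{-1-\alpha}\,dt=O(|v|^{-2\alpha})$, which suffices only for $\alpha\geq 1/2$ (matching the older results of \cite{AdMa}). To reach the full range $0<\alpha\leq\gamma$ asserted here, one route is to combine this bound with the decay-only estimate
\begin{equation*}
\|(V^{\rm s}(x)-V^{\rm s}(y(t)))e^{-itH_0^{\rm S}}\Phi_v\|\leq C\langle y(t)\rangle^{-\gamma}+(\mbox{rapid decay}),
\end{equation*}
derived from $|V^{\rm s}(x)|\leq C\langle x\rangle^{-\gamma}$ and the concentration of $|e^{-itH_0^{\rm S}}\Phi_v|^2$ near $x\approx y(t)$, and to take the pointwise minimum of the two bounds with $R_t$ chosen separately on each subinterval. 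The delicate calibration of these cutoffs across the transition $|t|\sim|v|$ where $|y(t)|$ passes from $\sim|v||t|$ to $\sim t^2$, together with the hypothesis $\gamma>1/2$, is what should telescope the total integral to the desired $O(|v|^{-1})$.
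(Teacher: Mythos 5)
Your reduction (Avron--Herbst conjugation, mean-value identity, Heisenberg moment bound, Enss propagation estimate for the far region) is exactly the paper's set-up, and your intermediate computation is correct: those ingredients give the pointwise bound $\min\{(1+|t|)\langle y(t)\rangle^{-1-\alpha},\,\langle y(t)\rangle^{-\gamma}\}$ and nothing more. The gap is your final paragraph: no ``delicate calibration'' of cutoffs can telescope these two bounds into $O(|v|^{-1})$ when $\alpha<1/2$, because their pointwise minimum already integrates to a larger quantity. On $1\leqslant|t|\leqslant|v|$ one has $\langle y(t)\rangle\asymp|v||t|$, so the two bounds are $|v|^{-1-\alpha}|t|^{-\alpha}$ and $(|v||t|)^{-\gamma}$; if $\gamma<\alpha+1/2$ the first is the smaller one throughout and its integral is $\asymp|v|^{-2\alpha}$, while if $\gamma\geqslant\alpha+1/2$ the crossover at $t_*=|v|^{(1+\alpha-\gamma)/(\gamma-\alpha)}$ gives $\asymp|v|^{-1+(1-\gamma)/(\gamma-\alpha)}+|v|^{1-2\gamma}$. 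In every case with $\alpha<1/2$ and $\gamma<1$ the exponent is strictly larger than $-1$. Choosing $R_t$ differently does not help: the Enss estimate is only effective once $R_t\gtrsim\eta|t|$, at which point the near-zone mean-value term already carries the factor $|t|$, and Chebyshev in place of Enss is weaker still. So the obstruction you identified is genuine, and closing it would require information beyond $|V^{\rm s}|\leqslant C\langle x\rangle^{-\gamma}$, $|\nabla V^{\rm s}|\leqslant C\langle x\rangle^{-1-\alpha}$: for oscillatory potentials such as $\langle x\rangle^{-\gamma}\sin(\langle x\rangle^{\gamma-\alpha})$ the quantity $\|\{V^{\rm s}(x+y(t))-V^{\rm s}(y(t))\}e^{-{\rm i}t|p|^2/2}\Phi_0\|$ is genuinely of size $(|v||t|)^{-1-\alpha}|t|$ on $1\leqslant|t|\leqslant\epsilon|v|$, so the $|v|^{-2\alpha}$ rate is essentially attained, not an artifact of crude estimation.

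For comparison, the paper proceeds differently: it splits the time integral at $|t|=|v|^{-\sigma_1}$ (a small negative power, not $1/|v|$), cuts off at $|x|<3|v|^{\sigma_2}|t|$ with $\sigma_2<1$, notes that on this support $|x+vt+e_1t^2/2|\geqslant\max\{c_1|v||t|,\,c_2t^2\}$, and then combines the two resulting bounds $\langle vt\rangle^{-1-\alpha}(1+|t|)$ and $\langle t^2\rangle^{-1-\alpha}(1+|t|)$ through the display $\int I_6\,dt\leqslant\bigl(\int I_6^{q_1}dt\bigr)^{1/q_1}\bigl(\int I_6^{q_2}dt\bigr)^{1/q_2}$ with $1/q_1+1/q_2=1$, $q_1\alpha>1$, arriving at $O(|v|^{-1-\alpha+\sigma_1(2+3\alpha)})$. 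You should be aware that this display is not an instance of H\"older's inequality and is false in general (it fails under the scaling $I_6\mapsto cI_6$ with $c<1$, precisely the regime in use); since the only pointwise input is the same minimum you derived, whose integral is of size $|v|^{-2\alpha}$, the stated $O(|v|^{-1})$ for all $0<\alpha\leqslant\gamma$ does not follow from those bounds. In short: your attempt stalls at the same place the problem actually lives, and you should not try to repair it by importing the paper's interpolation step as written; either the admissible range of $\alpha$ must be restricted (as in the earlier works of Adachi--Maehara and Adachi--Fujiwara--Ishida), or some additional structure of $V^{\rm s}$ must be exploited.
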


In \cite[Lemma 2.2]{AdMa}, the right-hand side of \eqref{estimate_V^s} was $O(|v|^{-\alpha})$ for $1/2<\alpha<1$. This order was improved in \cite[Lemma 3.4]{AdFuIs} by giving $O(|v|^{\Theta_0(\alpha)+\epsilon})$ with any small $\epsilon>0$ and
\begin{equation}
\Theta_0(\alpha)=-\alpha-\frac{\alpha(1-\alpha)}{2-\alpha}.
\end{equation}
The number $(7-\sqrt{17})/8$ in \eqref{alpha_0} comes from the inequality $\Theta_0(\alpha)<-1/2$, which is required to prove the reconstruction theorem (see \eqref{short_range_remainder} and below in the proof of Theorem \ref{the3}). As mentioned before, not only was the time-independent case \eqref{free} treated by \cite{AdFuIs}, but also the time-dependent case \eqref{time_dependent}. For more details, see \cite[Lemma 3.4]{AdFuIs}. Our key ideas for further improvements are the efficient use of the well-known propagation estimate for the free Schr\"odinger dynamics
\begin{equation}
\|xe^{-{\rm i}t|p|^2/2}\Phi_0\|=O(|t|)\label{free_dynamics1}
\end{equation}
as $|t|\rightarrow\infty$ and the H\"older inequality (see the estimates of $I_3$ and $I_6$ in the proof).

\begin{proof}[Proof of Proposition \ref{prop6}]
Denote the integrand of \eqref{estimate_V^s} by
\begin{equation}
I=\|\{V^{\rm s}(x)-V^{\rm s}(vt+e_1t^2/2)\}e^{-{\rm i}tH_0^{\rm S}}\Phi_v\|.
\end{equation}
The Avron--Herbst formula \cite{AvHe}
\begin{equation}
e^{-{\rm i}tH_0^{\rm S}}=e^{-{\rm i}t^3/6}e^{{\rm i}tx_1}e^{-{\rm i}t^2p_1/2}e^{-{\rm i}t|p|^2/2}\label{avron_herbst}
\end{equation}
connects the propagators of the free Stark Hamiltonian and the standard free Schr\"odinger operator, and plays an important role throughout this paper. Choose $f\in C_0^\infty(\mathbb{R}^n)$ with $\supp f\subset\{\xi\in\mathbb{R}^n\bigm||\xi|<\eta\}$ such that $\mathscr{F}\Phi_0=f\mathscr{F}\Phi_0$ holds. By using \eqref{avron_herbst} and the relation
\begin{equation}
e^{-{\rm i}v\cdot x}e^{-{\rm i}t|p|^2/2}e^{{\rm i}v\cdot x}=e^{-{\rm i}t|v|^2/2}e^{-{\rm i}tp\cdot v}e^{-{\rm i}t|p|^2/2},\label{eq1_1}
\end{equation}
we have
\begin{eqnarray}
I\Eqn{=}\|\{V^{\rm s}(x+e_1t^2/2)-V^{\rm s}(vt+e_1t^2/2)\}e^{-{\rm i}t|p|^2/2}\Phi_v\|\nonumber\\
\Eqn{=}\|\{V^{\rm s}(x+vt+e_1t^2/2)-V^{\rm s}(vt+e_1t^2/2)\}e^{-{\rm i}t|p|^2/2}f(p)\Phi_0\|.\label{eq1_2}
\end{eqnarray}
Split the integral \eqref{estimate_V^s} such that
\begin{equation}
\int_{-\infty}^\infty Idt=\int_{|t|<|v|^{-\sigma_1}}Idt+\int_{|t|\geqslant|v|^{-\sigma_1}}Idt
\end{equation}
with $0<\sigma_1<1$ which is independent of $t$ and $v$. Its lower and upper bounds shall be determined at the end of this proof. We first consider the integral on $|t|<|v|^{-\sigma_1}$. Put $\delta=|\omega\cdot e_1|<1$ and $\lambda=\sqrt{1-\delta^2}/4$. By inserting
\begin{equation}
F(|x|\geqslant3\lambda|v||t|)+F(|x|<3\lambda|v||t|)=1
\end{equation}
between $\{V^{\rm s}(x+vt+e_1t^2/2)-V^{\rm s}(vt+e_1t^2/2)\}$ and $e^{-{\rm i}t|p|^2/2}$, and inserting
\begin{equation}
F(|x|>\lambda|v||t|)+F(|x|\leqslant\lambda|v||t|)=1
\end{equation}
between $f(p)$ and $\Phi_0$ in the term of $F(|x|\geqslant3\lambda|v||t|)$, we find the estimate
\begin{equation}
I\leqslant I_1+I_2+I_3,
\end{equation}
where
\begin{eqnarray}
I_1\Eqn{=}C_1\|F(|x|\geqslant3\lambda|v||t|)e^{-{\rm i}t|p|^2/2}f(p)F(|x|\leqslant\lambda|v||t|)\|,\\
I_2\Eqn{=}C_2\|F(|x|>\lambda|v||t|)\langle x\rangle^{-2}\|,\\
I_3\Eqn{=}\|\{V^{\rm s}(x+vt+e_1t^2/2)-V^{\rm s}(vt+e_1t^2/2)\}F(|x|<3\lambda|v||t|)e^{-{\rm i}t|p|^2/2}\Phi_0\|\hspace{10mm}
\end{eqnarray}
with $C_1=2\|V^{\rm s}\|\|\Phi_0\|$ and $C_2=2\|V^{\rm s}\|\|\langle x\rangle^2\Phi_0\|$. By the relation \eqref{eq1_1} and Proposition \ref{prop4} for $N=2$ with $\lambda|v|\geqslant\eta$, $I_1$ is estimated such that
\begin{equation}
I_1\leqslant C\langle vt\rangle^{-2}\label{eq1_3}.
\end{equation}
Because $I_2$ has this same estimate, it follows that
\begin{equation}
\int_{|t|<|v|^{-\sigma_1}}(I_1+I_2)dt\leqslant C\int_0^{|v|^{-\sigma_1}}\langle vt\rangle^{-2}dt=C|v|^{-1}\int_0^{|v|^{1-\sigma_1}}\langle\tau\rangle^{-2}d\tau=O(|v|^{-1}).\label{eq1_4}
\end{equation}
For $I_3$, we compute
\begin{eqnarray}
I_3\Eqn{=}\|\int_0^1(\nabla_x V^{\rm s})(\theta x+vt+e_1t^2/2)\cdot xd\theta F(|x|<3\lambda|v||t|)e^{-{\rm i}t|p|^2/2}\Phi_0\|\nonumber\\
\Eqn{\leqslant}\int_0^1\|(\nabla_x V^{\rm s})(\theta x+vt+e_1t^2/2)F(|x|<3\lambda|v||t|)\|d\theta\times\|xe^{-{\rm i}t|p|^2/2}\Phi_0\|.\label{eq1_5}\qquad
\end{eqnarray}
In the same manner as for the proof of \cite[Lemma 2.1]{AdMa}, we have
\begin{gather}
|vt+e_1t^2/2|^2=|v|^2t^2+t^4/4+v\cdot e_1t^3\geqslant|v|^2t^2+t^4/4-\delta|v||t|^3\nonumber\\
=t^2(|t|-2\delta|v|)^2/4+(1-\delta^2)|v|^2t^2\geqslant(1-\delta^2)|v|^2t^2.\label{eq1_6}
\end{gather}
and $|vt+e_1t^2/2|\geqslant4\lambda|v||t|$. Therefore, when $|x|<3\lambda|v||t|$ holds, we obtain
\begin{equation}
|\theta x+vt+e_1t^2/2|\geqslant|vt+e_1t^2/2|-|x|>\lambda|v||t|.\label{eq1_7}
\end{equation}
We thus estimate, by this inequality \eqref{eq1_7} and the short-range assumption \eqref{short_range},
\begin{equation}
\int_0^1\|(\nabla_x V^{\rm s})(\theta x+vt+e_1t^2/2)F(|x|<3\lambda|v||t|)\|d\theta\leqslant C\langle vt\rangle^{-1-\alpha}.\label{eq1_8}
\end{equation}
From the Heisenberg picture of $x$,
\begin{equation}
e^{{\rm i}t|p|^2/2}xe^{-{\rm i}t|p|^2/2}=x+tp,\label{heisenberg_x}
\end{equation}
we have
\begin{equation}
\|xe^{-{\rm i}t|p|^2/2}\Phi_0\|\leqslant C(1+|t|)\label{free_dynamics2}
\end{equation}
which holds for all $t\in\mathbb{R}$. From the estimates \eqref{eq1_8} and \eqref{free_dynamics2}, $I_3$ is
\begin{equation}
I_3\leqslant C\langle vt\rangle^{-1-\alpha}(1+|t|).\label{eq1_9}
\end{equation}
We compute, for $\alpha<1$,
\begin{gather}
\int_0^{|v|^{-\sigma_1}}\langle vt\rangle^{-1-\alpha}tdt=|v|^{-1-\alpha}\int_0^{|v|^{-\sigma_1}}t^{-\alpha}dt=O(|v|^{-1-\alpha})\label{eq1_10},\\
\int_0^{|v|^{-\sigma_1}}\langle vt\rangle^{-1-\alpha}dt=|v|^{-1}\int_0^{|v|^{1-\sigma_1}}\langle\tau\rangle^{-1-\alpha}d\tau=O(|v|^{-1}).\label{eq1_11}
\end{gather}
If $\alpha=1$, \eqref{eq1_10} reduces to the estimate
\begin{eqnarray}
\lefteqn{\int_0^{|v|^{-\sigma_1}}\langle vt\rangle^{-2}tdt=|v|^{-2}\int_0^{|v|^{1-\sigma_1}}\langle\tau\rangle^{-2}\tau d\tau}\nonumber\\
\Eqn{}\leqslant|v|^{-2}\int_0^{|v|^{1-\sigma_1}}\langle\tau\rangle^{-1}d\tau=O(|v|^{-2}\log|v|).
\end{eqnarray}
However, we can assume $\alpha<1$ without loss of generality. \eqref{eq1_10} and \eqref{eq1_11} state that
\begin{equation}
\int_{|t|<|v|^{-\sigma_1}}I_3dt=O(|v|^{-1}).\label{eq1_12}
\end{equation}
We next consider the integral on $|t|\geqslant|v|^{-\sigma_1}$. Take $0<\sigma_2<1$ which is also independent of $t$ and $v$. Its lower and upper bounds shall be determined at the end of this proof. By replacing $\lambda|v|$ by $|v|^{\sigma_2}$ in $I_1$, $I_2$, and $I_3$, we have
\begin{equation}
I\leqslant I_4+I_5+I_6,
\end{equation}
where
\begin{eqnarray}
I_4\Eqn{=}C_1\|F(|x|\geqslant3|v|^{\sigma_2}|t|)e^{-{\rm i}t|p|^2/2}f(p)F(|x|\leqslant|v|^{\sigma_2}|t|)\|,\\
I_5\Eqn{=}C_2\|F(|x|>|v|^{\sigma_2}|t|)\langle x\rangle^{-2}\|,\\
I_6\Eqn{=}\|\{V^{\rm s}(x+vt+e_1t^2/2)-V^{\rm s}(vt+e_1t^2/2)\}\nonumber\\
\Eqn{}\hspace{30mm}\times F(|x|<3|v|^{\sigma_2}|t|)e^{-{\rm i}t|p|^2/2}\Phi_0\|.
\end{eqnarray}
By Proposition \ref{prop4} for $N=2$ with $|v|^{\sigma_2}\geqslant\eta$, $I_4$ is estimated such that
\begin{equation}
I_4\leqslant C\langle |v|^{\sigma_2}t\rangle^{-2}\label{eq1_13}.
\end{equation}
It therefore follows that
\begin{equation}
\int_{|t|\geqslant|v|^{-\sigma_1}}(I_4+I_5)dt\leqslant C\int_{|v|^{-\sigma_1}}^\infty\langle |v|^{\sigma_2}t\rangle^{-2}dt=C|v|^{-2\sigma_2}\int_{|v|^{-\sigma_1}}^\infty t^{-2}dt=O(|v|^{\sigma_1-2\sigma_2}).\label{eq1_14}
\end{equation}
Similar to the proof of \cite[Lemma 2.2]{AdMa}, when $|x|<3|v|^{\sigma_2}|t|$ and $3|v|^{\sigma_2-1}\leqslant(1-\delta)/4$,
\begin{eqnarray}
\lefteqn{|x+vt+e_1t^2/2|^2=|x+vt|^2+t^4/4+(x+vt)\cdot e_1t^2}\nonumber\\
\Eqn{}>(1-3|v|^{\sigma_2-1})^2|v|^2t^2+t^4/4-(\delta+3|v|^{\sigma_2-1})|v||t|^3\nonumber\\
\Eqn{}\geqslant(3+\delta)^2|v|^2t^2/16+t^4/4-(1+3\delta)|v||t|^3/4\label{eq1_15}
\end{eqnarray}
holds. We thus obtain
\begin{equation}
|x+vt+e_1t^2/2|^2\geqslant\{(3+\delta)^2-(1+3\delta)^2\}|v|^2t^2/16=(1-\delta^2)|v|^2t^2/2\label{eq1_16}
\end{equation}
because
\begin{equation}
t^4/4-(1+3\delta)|v||t|^3/4+(1+3\delta)^2|v|^2t^2/16=\{t-(1+3\delta)/2\}^2t^2/4\geqslant0.
\end{equation}
Moreover, we also obtain
\begin{equation}
|x+vt+e_1t^2/2|^2\geqslant\{1-(1+3\delta)^2/(3+\delta)^2\}t^4/4=2(1-\delta^2)t^4/(3+\delta)^2\label{eq1_17}
\end{equation}
because
\begin{gather}
(3+\delta)^2|v|^2t^2/16-(1+3\delta)|v||t|^3/4+(1+3\delta)^2/(3+\delta)^2t^4/4\nonumber\\
=\{(1+3\delta)t/(3+\delta)-(3+\delta)|v|/2\}^2t^2/4\geqslant0.
\end{gather}
Therefore, from \eqref{eq1_16} and \eqref{eq1_17}, if $|x|<3|v|^{\sigma_2}|t|$ and $3|v|^{\sigma_2-1}\leqslant(1-\delta)/4$, we have
\begin{equation}
|x+vt+e_1t^2/2|\geqslant\max\{c_1|v||t|, c_2t^2 \}\label{eq1_18}
\end{equation}
with $c_1=\sqrt{(1-\delta^2)/2}$ and $c_2=\sqrt{2(1-\delta^2)}/(3+\delta)$. As in \eqref{eq1_5}, $I_6$ is
\begin{equation}
I_6\leqslant\int_0^1\|(\nabla_x V^{\rm s})(\theta x+vt+e_1t^2/2)F(|x|<3|v|^{\sigma_2}|t|)\|d\theta\times\|xe^{-{\rm i}t|p|^2/2}\Phi_0\|.\label{eq1_19}
\end{equation}
By using the H\"older inequality, we estimate the integral of $I_6$ such that
\begin{equation}
\int_{|v|^{-\sigma_1}}^\infty I_6dt\leqslant\left(\int_{|v|^{-\sigma_1}}^\infty I_6^{q_1}dt\right)^{1/q_1}\left(\int_{|v|^{-\sigma_1}}^\infty I_6^{q_2}dt\right)^{1/q_2},\label{eq1_20}
\end{equation}
where $q_1$ and $q_2$ are the H\"older conjugates of each other, that is, $1/q_1+1/q_2=1$ for $q_1>1$. From the estimate
\begin{equation}
|x+vt+e_1t^2/2|\geqslant c_1|v||t|
\end{equation}
of \eqref{eq1_18} and assumption \eqref{short_range}, we have
\begin{eqnarray}
\lefteqn{\int_{|v|^{-\sigma_1}}^\infty I_6^{q_1}dt\leqslant C\int_{|v|^{-\sigma_1}}^\infty\{\langle vt\rangle^{-1-\alpha}(1+t)\}^{q_1}dt}\nonumber\\
\Eqn{}\leqslant 2^{q_1-1}C\left(\int_{|v|^{-\sigma_1}}^\infty\langle vt\rangle^{-q_1(1+\alpha)}dt+\int_{|v|^{-\sigma_1}}^\infty\langle vt\rangle^{-q_1(1+\alpha)}t^{q_1}dt\right).\label{eq1_21}
\end{eqnarray}
Choose $q_1$ which satisfies $q_1\alpha>1$, and we then compute
\begin{equation}
\int_{|v|^{-\sigma_1}}^\infty\langle vt\rangle^{-q_1(1+\alpha)}t^{q_1}dt\leqslant|v|^{-q_1(1+\alpha)}\int_{|v|^{-\sigma_1}}^\infty t^{-q_1\alpha}dt=O(|v|^{-q_1(1+\alpha)-\sigma_1(1-q_1\alpha)}).\label{eq1_22}
\end{equation}
We also compute
\begin{equation}
\int_{|v|^{-\sigma_1}}^\infty\langle vt\rangle^{-q_1(1+\alpha)}dt\leqslant|v|^{-q_1(1+\alpha)}\int_{|v|^{-\sigma_1}}^\infty t^{-q_1(1+\alpha)}dt=O(|v|^{-q_1(1+\alpha)-\sigma_1\{1-q_1(1+\alpha)\}}).\label{eq1_23}
\end{equation}
\eqref{eq1_22} and \eqref{eq1_23} state that
\begin{equation}
\int_{|v|^{-\sigma_1}}^\infty I_6^{q_1}dt=O(|v|^{-q_1(1+\alpha)-\sigma_1\{1-q_1(1+\alpha)\}}).\label{eq1_24}
\end{equation}
Again, from
\begin{equation}
|x+vt+e_1t^2/2|\geqslant c_2t^2
\end{equation}
of \eqref{eq1_18} and \eqref{short_range}, we compute
\begin{eqnarray}
\lefteqn{\int_{|v|^{-\sigma_1}}^\infty I_6^{q_2}dt\leqslant C\int_{|v|^{-\sigma_1}}^\infty\{\langle t^2\rangle^{-1-\alpha}(1+t)\}^{q_2}dt}\nonumber\\
\Eqn{}\leqslant 2^{q_2-1}C\left(\int_{|v|^{-\sigma_1}}^\infty\langle t^2\rangle^{-q_2(1+\alpha)}dt+\int_{|v|^{-\sigma_1}}^\infty\langle t^2\rangle^{-q_2(1+\alpha)}t^{q_2}dt\right)\nonumber\\
\Eqn{}=O(|v|^{-\sigma_1\{1-2q_2(1+\alpha)\}})+O(|v|^{-\sigma_1\{1-q_2(1+2\alpha)\}})=O(|v|^{-\sigma_1\{1-2q_2(1+\alpha)\}}).\qquad\label{eq1_25}
\end{eqnarray}
We here used $2q_2(1+\alpha)>q_2(1+2\alpha)>1$. By combining \eqref{eq1_20}, \eqref{eq1_24}, \eqref{eq1_25}, and
\begin{equation}
-1-\alpha-\sigma_1\{1-q_1(1+\alpha)\}/q_1-\sigma_1\{1-2q_2(1+\alpha)\}/q_2=-1-\alpha+\sigma_1(2+3\alpha),
\end{equation}
we obtain
\begin{equation}
\int_{|t|\geqslant|v|^{-\sigma_1}}I_6dt=O(|v|^{-1-\alpha+\sigma_1(2+3\alpha)}).\label{eq1_26}
\end{equation}
Together, \eqref{eq1_4}, \eqref{eq1_12}, \eqref{eq1_14}, and \eqref{eq1_26} imply that
\begin{equation}
\int_{-\infty}^\infty Idt=O(|v|^{-1})+O(|v|^{\sigma_1-2\sigma_2})+O(|v|^{-1-\alpha+\sigma_1(2+3\alpha)}).
\end{equation}
To complete our proof, it is sufficient to determine the size of $\sigma_1$ and $\sigma_2$ such that $0<\sigma_1<\alpha/(2+3\alpha)$ and $(1+\sigma_1)/2<\sigma_2<1$. Indeed, $\sigma_1<\alpha/(2+3\alpha)$ is equivalent to $-1-\alpha+\sigma_1(2+3\alpha)<-1$, and $(1+\sigma_1)/2<\sigma_2$ is equivalent to $\sigma_1-2\sigma_2<-1$.
\end{proof}

We introduce the auxiliary Graf-type modified wave operators
\begin{equation}
\Omega_{{\rm G},v}^\pm=\slim_{t\rightarrow\pm\infty}e^{{\rm i}tH^{\rm S}}U_{{\rm G},v}(t),\quad U_{{\rm G},v}(t)=e^{-{\rm i}tH_0^{\rm S}}M_{{\rm G},v}(t)
\end{equation}
with
\begin{equation}
M_{{\rm G},v}(t)=e^{-{\rm i}\int_0^tV^{\rm s}(v\tau+e_1\tau^2/2)d\tau},\label{garf_modifier}
\end{equation}
according to \cite{AdMa} (see also \cite{AdKaKaTo} and \cite{AdFuIs}). Recall \eqref{eq1_6}. Then, by the estimate
\begin{equation}
|vt+e_1t^2/2|^2\geqslant t^2(|v|-\delta|t|/2)^2+(1-\delta^2)t^4/4\geqslant(1-\delta^2)t^4/4
\end{equation}
and assumption $\gamma>1/2$, we see that
\begin{equation}
I_{{\rm G},v}^\pm(t)=\lim_{t\rightarrow\pm\infty}M_{{\rm G},v}(t)\label{I_G_v}
\end{equation}
exist. Because the wave operators \eqref{wave_operators} and this limits \eqref{I_G_v} exist, we also see that $\Omega_{{\rm G},v}^\pm=W^\pm I_{{\rm G},v}^\pm(t)$. \cite{Ni1, Ni2} applied the Dollard-type modification to short-range inverse scattering in the Stark effect. The Graf-type modification was first introduced in \cite{AdMa}. We emphasize that the Graf-type modifier $M_{{\rm G},v}(t)$ is scalar-valued and therefore commutes with any operators.\par

By virtue of Propositions \ref{prop5} and \ref{prop6}, the following corollary is proved as in \cite[Lemma 2.3]{AdMa} and \cite[Lemma 3.5]{AdKaKaTo}. We therefore omit its proof.

\begin{Cor}\label{cor7}
Let $v$ and $\Phi_v$ be as in Theorem \ref{the3}. Then
\begin{equation}
\|\{e^{-{\rm i}tH^{\rm S}}\Omega_{{\rm G},v}^\pm-U_{{\rm G},v}(t)\}\Phi_v\|=O(|v|^{-1})
\end{equation}
holds as $|v|\rightarrow\infty$ uniformly in $t\in\mathbb{R}$, for $V^{\rm vs}\in\mathscr{V}^{\rm vs}$ and $V^{\rm s}\in\mathscr{V}^{\rm s}$.
\end{Cor}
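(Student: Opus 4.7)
The plan is to apply Cook's method. Writing $\Omega_{{\rm G},v}^\pm = \slim_{s\to\pm\infty} e^{{\rm i}sH^{\rm S}}U_{{\rm G},v}(s)$, the fundamental theorem of calculus gives
\begin{equation}
\{e^{-{\rm i}tH^{\rm S}}\Omega_{{\rm G},v}^\pm - U_{{\rm G},v}(t)\}\Phi_v = \lim_{s\to\pm\infty}\int_t^s \frac{d}{d\tau}\bigl[e^{{\rm i}(\tau-t)H^{\rm S}}U_{{\rm G},v}(\tau)\Phi_v\bigr]d\tau,
\end{equation}
so the task is to estimate the integrand.

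Next I would compute the derivative explicitly. Since $M_{{\rm G},v}(\tau)$ is scalar-valued (the essential feature of the Graf modifier stressed in the paragraph before the corollary), it commutes with $e^{-{\rm i}\tau H_0^{\rm S}}$, and a direct computation shows
\begin{equation}
H^{\rm S}U_{{\rm G},v}(\tau)-U_{{\rm G},v}(\tau)H_0^{\rm S}=(V^{\rm vs}+V^{\rm s})U_{{\rm G},v}(\tau),\qquad \frac{d}{d\tau}M_{{\rm G},v}(\tau)=-{\rm i}V^{\rm s}(v\tau+e_1\tau^2/2)M_{{\rm G},v}(\tau).
\end{equation}
Combining these, the derivative of $e^{{\rm i}(\tau-t)H^{\rm S}}U_{{\rm G},v}(\tau)\Phi_v$ equals
\begin{equation}
{\rm i}\,e^{{\rm i}(\tau-t)H^{\rm S}}\bigl\{V^{\rm vs}(x)+V^{\rm s}(x)-V^{\rm s}(v\tau+e_1\tau^2/2)\bigr\}U_{{\rm G},v}(\tau)\Phi_v,
\end{equation}
which is the point where the choice of modifier pays off: the scalar subtraction $V^{\rm s}(v\tau+e_1\tau^2/2)$ is precisely what allows Proposition \ref{prop6} to apply.

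I would then take norms under the integral, use unitarity of $e^{{\rm i}(\tau-t)H^{\rm S}}$, and exploit that $|M_{{\rm G},v}(\tau)|=1$ so that $U_{{\rm G},v}(\tau)\Phi_v$ and $e^{-{\rm i}\tau H_0^{\rm S}}\Phi_v$ produce identical norms when hit by the multiplication operators $V^{\rm vs}(x)$ and $V^{\rm s}(x)-V^{\rm s}(v\tau+e_1\tau^2/2)$. Enlarging the integration domain from $[t,\pm\infty)$ to all of $\mathbb{R}$ (this is what produces uniformity in $t$), we obtain
\begin{equation}
\|\{e^{-{\rm i}tH^{\rm S}}\Omega_{{\rm G},v}^\pm-U_{{\rm G},v}(t)\}\Phi_v\|\leqslant\int_{-\infty}^\infty\|V^{\rm vs}(x)e^{-{\rm i}\tau H_0^{\rm S}}\Phi_v\|d\tau+\int_{-\infty}^\infty\|\{V^{\rm s}(x)-V^{\rm s}(v\tau+e_1\tau^2/2)\}e^{-{\rm i}\tau H_0^{\rm S}}\Phi_v\|d\tau.
\end{equation}
Proposition \ref{prop5} controls the first integral by $O(|v|^{-1})$, and Proposition \ref{prop6} controls the second by the same order, yielding the claim.

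There is no genuine obstacle beyond the algebraic manipulation; the only subtlety worth noting is the commutativity argument that lets the scalar modifier $M_{{\rm G},v}(\tau)$ pass freely through $e^{-{\rm i}\tau H_0^{\rm S}}$ and through the multiplication operators, so that the modifier effectively disappears from every norm estimate. This is precisely why the Graf-type modification (as opposed to a Dollard-type one involving $p$) produces such a clean argument here, and is why the corollary reduces to a direct invocation of Propositions \ref{prop5} and \ref{prop6}.
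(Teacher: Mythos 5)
Your argument is correct and is essentially the proof the paper has in mind but omits (citing Adachi--Maehara, Lemma 2.3): Cook's method applied to $e^{{\rm i}\tau H^{\rm S}}U_{{\rm G},v}(\tau)\Phi_v$, using that the scalar Graf modifier drops out of every norm, so the estimate reduces to the integrals controlled by Propositions \ref{prop5} and \ref{prop6}, each of size $O(|v|^{-1})$ uniformly in $t$.
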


We now prove Theorem \ref{the3}.
\begin{proof}[Proof of Theorem \ref{the3}]
This proof follows similarly \cite[Theorem 2.4]{We} and \cite[Theorem 3.1]{AdMa} (see also \cite{AdKaKaTo} and \cite{AdFuIs}). We give here a sketch. Note that the scattering operator $S$ is represented such that
\begin{equation}
S=I_{{\rm G},v}(\Omega_{{\rm G},v}^+)^*\Omega_{{\rm G},v}^-,\quad I_{{\rm G},v}=I_{{\rm G},v}^+\overline{I_{{\rm G},v}^-}=e^{-{\rm i}\int_{-\infty}^\infty V^{\rm s}(v\tau+e_1\tau^2/2)d\tau}
\end{equation}
and put $V_{v,t}^{\rm s}$ by
\begin{equation}
V_{v,t}^{\rm s}=V^{\rm vs}(x)+V^{\rm s}(x)-V^{\rm s}(vt+e_1t^2/2),
\end{equation}
then we have
\begin{gather}
{\rm i}(S-I_{{\rm G},v})\Phi_v={\rm i}I_{{\rm G},v}(\Omega_{{\rm G},v}^+-\Omega_{{\rm G},v}^-)^*\Omega_{{\rm G},v}^-\Phi_v\nonumber\\
=I_{{\rm G},v}\int_{-\infty}^\infty U_{{\rm G},v}(t)^*V_{v,t}^{\rm s}e^{-{\rm i}tH^{\rm S}}\Omega_{{\rm G},v}^-\Phi_vdt.
\end{gather}
We therefore obtain, by using the relations $[S, p_j]=[S-I_{{\rm G},v},p_j-v_j]$ and $(p_j-v_j)\Phi_v=(p_j\Phi_0)_v$,
\begin{equation}
|v|({\rm i}[S,p_j]\Phi_v,\Psi_v)=I_{{\rm G},v}\{I(v)+R(v)\}
\end{equation}
with
\begin{eqnarray}
I(v)\Eqn{=}|v|\int_{-\infty}^\infty\Big\{(V_{v,t}^{\rm s}U_{{\rm G},v}(t)(p_j\Phi_0)_v,U_{{\rm G},v}(t)\Psi_v)\nonumber\\
\Eqn{}\hspace{20mm}-(V_{v,t}^{\rm s}U_{{\rm G},v}(t)\Phi_v,U_{{\rm G},v}(t)(p_j\Psi_0)_v)\Big\}dt,\\
R(v)\Eqn{=}|v|\int_{-\infty}^\infty(\{e^{-{\rm i}tH^{\rm S}}\Omega_{{\rm G},v}^--U_{{\rm G},v}(t)\}(p_j\Phi_0)_v,V_{v,t}^{\rm s}U_{{\rm G},v}(t)\Psi_v)dt\nonumber\\
\Eqn{}-|v|\int_{-\infty}^\infty(\{e^{-{\rm i}tH^{\rm S}}\Omega_{{\rm G},v}^--U_{{\rm G},v}(t)\}\Phi_v,V_{v,t}^{\rm s}U_{{\rm G},v}(t)(p_j\Psi_0)_v)dt.\quad
\end{eqnarray}
Propositions \ref{prop5}, \ref{prop6} and Corollary \ref{cor7} immediately imply that
\begin{equation}
R(v)=O(|v|^{-1}).\label{short_range_remainder}
\end{equation}
The corresponding result to \eqref{short_range_remainder} in \cite{AdMa} was $O(|v|^{1-2\alpha})$ if $\alpha<1$, and $1-2\alpha<0$ required $\alpha>1/2$. In \cite{AdFuIs}, $R(v)$ was estimated by $O(|v|^{1+2\{\Theta_0(\alpha)+\epsilon\}})$ for any small $\epsilon>0$, and $1+2\{\Theta_0(\alpha)+\epsilon\}<0$ required $\alpha>(7-\sqrt{17})/8$ when $\mu=0$. With our new result, we prove Proposition \ref{prop6} independently of $\alpha$, therefore \eqref{short_range_remainder} is obtained under the weak condition $\alpha>0$.\par
Because the rest of this proof is the same as in \cite{We} and \cite{AdMa}, we omit it here.
\end{proof}

By the Plancherel formula as applied to the Radon transform (see Helgason \cite[Theorem 2.17 in Chap. I]{Hel}), Theorem \ref{the2} is proved similarly to \cite[Theorem 1.2 ]{We} (see also \cite{EnWe}). We have therefore omitted its proof.

\section{Long-range interactions}\label{long_range_interactions}
Our main purpose in this section is proving the following reconstruction theorem. This theorem yields the proof of Theorem \ref{the2}.

\begin{The}\label{the8}
Let $\omega\in\mathbb{R}^n$ be given such that $|\omega|=1$ and $|\omega\cdot e_1|<1$. Put $v=|v|\omega$. Suppose $\Phi_0,\Psi_0\in L^2(\mathbb{R}^n)$ such that $\mathscr{F}\Phi_0,\mathscr{F}\Psi_0\in C_0^\infty(\mathbb{R}^n)$ with $\supp\mathscr{F}\Phi_0,\supp\mathscr{F}\Psi_0\subset\{\xi\in\mathbb{R}^n\bigm||\xi|<\eta\}$ for the given $\eta>0$. Put $\Phi_v=e^{{\rm i}v\cdot x}\Phi_0,\Psi_v=e^{{\rm i}v\cdot x}\Psi_0$. Then
\begin{eqnarray}
\lefteqn{|v|({\rm i}[S_{\rm D},p_j]\Phi_v,\Psi_v)}\nonumber\\
\Eqn{=}\int_{-\infty}^\infty\Big\{(V^{\rm vs}(x+\omega t)p_j\Phi_0,\Psi_0)-(V^{\rm vs}(x+\omega t)\Phi_0,p_j\Psi_0)\nonumber \\
\Eqn{}\qquad+({\rm i}(\partial_{x_j}V^{\rm s})(x+\omega t)\Phi_0,\Psi_0)+({\rm i}(\partial_{x_j}V^{\rm l})(x+\omega t)\Phi_0,\Psi_0)\Big\}dt+o(1)\quad\label{reconstruction_long_range}
\end{eqnarray}
holds as $|v|\rightarrow\infty$ for $V^{\rm vs}\in\mathscr{V}^{\rm vs}$, $V^{\rm s}\in\mathscr{V}^{\rm s}$, and $V^{\rm l}\in\mathscr{V}_{\rm G}^{\rm l}\cup\mathscr{V}_{\rm D}^{\rm l}$.
\end{The}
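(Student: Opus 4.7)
The plan is to extend the proof of Theorem \ref{the3} by introducing a joint auxiliary wave operator that simultaneously handles the short-range and long-range interactions. Define
\[
U_{{\rm G},v,{\rm D}}(t) = e^{-{\rm i}tH_0^{\rm S}} M_{{\rm G},v}(t) M_{\rm D}(t), \qquad \Omega_{{\rm G},v,{\rm D}}^\pm = \slim_{t\to\pm\infty} e^{{\rm i}tH^{\rm S}} U_{{\rm G},v,{\rm D}}(t),
\]
where $M_{\rm D}(t)=e^{-{\rm i}\int_0^t V^{\rm l}(p\tau+e_1\tau^2/2)d\tau}$ is the Dollard modifier and $M_{{\rm G},v}(t)$ is the scalar Graf-type phase \eqref{garf_modifier}. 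Since $M_{{\rm G},v}(t)$ is scalar and converges to $I_{{\rm G},v}^\pm$ in the sense of \eqref{I_G_v}, we obtain $\Omega_{{\rm G},v,{\rm D}}^\pm = I_{{\rm G},v}^\pm W_{\rm D}^\pm$ and therefore $S_{\rm D} = I_{{\rm G},v}(\Omega_{{\rm G},v,{\rm D}}^+)^*\Omega_{{\rm G},v,{\rm D}}^-$, parallel to the representation used in the proof of Theorem \ref{the3}. Differentiating $e^{{\rm i}tH^{\rm S}}U_{{\rm G},v,{\rm D}}(t)$ in $t$, and using that $V^{\rm l}(p\tau+e_1\tau^2/2)$ is a function of $p$ and thus commutes with $M_{\rm D}$, produces the effective interaction
\[
\mathcal{V}_{v,t}=V^{\rm vs}(x)+\{V^{\rm s}(x)-V^{\rm s}(vt+e_1t^2/2)\}+\{V^{\rm l}(x)-e^{-{\rm i}tH_0^{\rm S}}V^{\rm l}(pt+e_1t^2/2)e^{{\rm i}tH_0^{\rm S}}\}.
\]

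The central new technical step is the long-range analogue of Corollary \ref{cor7}, namely $\|\{e^{{\rm i}tH^{\rm S}}\Omega_{{\rm G},v,{\rm D}}^\pm - U_{{\rm G},v,{\rm D}}(t)\}\Phi_v\|=O(|v|^{-1})$, which via Cook's method reduces to a propagation estimate for the long-range difference (the $V^{\rm vs}$ and $V^{\rm s}$ parts being already handled by Propositions \ref{prop5} and \ref{prop6}). After using the Heisenberg law $e^{{\rm i}tH_0^{\rm S}}xe^{-{\rm i}tH_0^{\rm S}}=x+pt+e_1t^2/2$ to rewrite the long-range piece as $V^{\rm l}(x+pt+e_1t^2/2)-V^{\rm l}(pt+e_1t^2/2)$ (acting on the right of $e^{-{\rm i}tH_0^{\rm S}}M_{\rm D}(t)\Phi_v$), and conjugating by $e^{-{\rm i}v\cdot x}$ (which shifts $p\to p+v$), the task becomes to bound
\[
\int_{-\infty}^\infty\|\{V^{\rm l}(x+(p+v)t+e_1t^2/2)-V^{\rm l}((p+v)t+e_1t^2/2)\}f(p)\Phi_0\|\,dt
\]
by $O(|v|^{-1})$. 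Expanding the difference as $x\cdot\int_0^1(\nabla V^{\rm l})(\theta x+(p+v)t+e_1t^2/2)d\theta$, and invoking the lower bound $|(p+v)t+e_1t^2/2|\geqslant c\max\{|v||t|,t^2\}$ parallel to \eqref{eq1_16}--\eqref{eq1_18}, together with \eqref{free_dynamics1}, reduces matters to a weighted time integral of $\nabla V^{\rm l}$ along the classical Stark trajectory. For $V^{\rm l}\in\mathscr{V}_{\rm G}^{\rm l}$ the condition $\gamma_{\rm G}+\kappa>1$ in \eqref{long_range_graf} gives direct convergence of the desired order. The main obstacle is the $V^{\rm l}\in\mathscr{V}_{\rm D}^{\rm l}$ case: there $|\nabla V^{\rm l}(y)|\leqslant C\langle y\rangle^{-\gamma_{\rm D}-1/2}$ with only $\gamma_{\rm D}>3/8$ is very tight, forcing a splitting $|t|<|v|^{-\sigma_1}$ versus $|t|\geqslant|v|^{-\sigma_1}$ together with a H\"older inequality exactly in the spirit of the $I_6$ estimate in the proof of Proposition \ref{prop6}.

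With this estimate established, the reconstruction proceeds as in Theorem \ref{the3}: write
\[
{\rm i}(S_{\rm D}-I_{{\rm G},v})\Phi_v = I_{{\rm G},v}\int_{-\infty}^\infty U_{{\rm G},v,{\rm D}}(t)^*\mathcal{V}_{v,t}\,e^{-{\rm i}tH^{\rm S}}\Omega_{{\rm G},v,{\rm D}}^-\Phi_v\,dt,
\]
use $[S_{\rm D},p_j]=[S_{\rm D}-I_{{\rm G},v},p_j-v_j]$ together with $(p_j-v_j)\Phi_v=(p_j\Phi_0)_v$ to obtain a bilinear form, and replace $e^{-{\rm i}tH^{\rm S}}\Omega_{{\rm G},v,{\rm D}}^-$ by $U_{{\rm G},v,{\rm D}}(t)$ modulo an $O(|v|^{-1})$ remainder by the estimate above. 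The $V^{\rm vs}$ and $V^{\rm s}$ contributions to the leading $|v|\to\infty$ behavior come out exactly as in Theorem \ref{the3}, producing the first three terms on the right of \eqref{reconstruction_long_range}. The new $V^{\rm l}$ contribution arises from the Taylor-expanded long-range piece: the $x\cdot(\nabla V^{\rm l})((p+v)t+e_1t^2/2)$ term, after the change of variable $t\mapsto t/|v|$ absorbs the outer $|v|$, with the classical trajectory $(p+v)t+e_1t^2/2$ collapsing to $\omega t$ in the limit (using $\supp\mathscr{F}\Phi_0\subset\{|\xi|<\eta\}$, so $p$ is dominated by $v=|v|\omega$), yields $({\rm i}(\partial_{x_j}V^{\rm l})(x+\omega t)\Phi_0,\Psi_0)$ and completes \eqref{reconstruction_long_range}.
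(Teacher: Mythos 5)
Your overall architecture (Graf scalar phase plus Dollard modifier, representation $S_{\rm D}=I_{{\rm G},v}(\Omega^+)^*\Omega^-$, Cook's method, commutator trick $[S_{\rm D},p_j]=[S_{\rm D}-I_{{\rm G},v},p_j-v_j]$) matches the paper's, but there are two genuine gaps. First, you assert that the $V^{\rm vs}$ and $V^{\rm s}$ contributions are ``already handled by Propositions \ref{prop5} and \ref{prop6}.'' They are not: after Cook's method the relevant states are $e^{-{\rm i}tH_0^{\rm S}}M_{\rm D}(t)\Phi_v$, and $M_{\rm D}(t)$ is a non-scalar function of $p$ that does not commute with the position cutoffs and Taylor expansions used in those proofs. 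One must control $\|\langle x\rangle^kM_{{\rm D},v}(t)\Phi_0\|$, and for $V^{\rm l}$ in the Dollard class these weighted norms grow in time like $|t|^{1-2\hat{\gamma}_{\rm D}}$ and $|t|^{2-4\hat{\gamma}_{\rm D}}$ (Lemma \ref{lem13}), so the short-range propagation estimate has to be re-proved along the modified dynamics; this is exactly the content of Propositions \ref{prop10}, \ref{prop11}, \ref{prop14} and, in particular, Proposition \ref{prop15}, which is one of the paper's main new results. Your plan silently skips this layer.

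Second, your claimed bound $\|\{e^{-{\rm i}tH^{\rm S}}\Omega^\pm-U(t)\}\Phi_v\|=O(|v|^{-1})$ for the long-range part is too strong and cannot be reached under Assumption \ref{ass}. With $|\nabla V^{\rm l}(y)|\leqslant C\langle y\rangle^{-\gamma_{\rm D}-1/2}$ and only $\gamma_{\rm D}>3/8$, the naive tail integral $\int^\infty\langle t^2\rangle^{-\gamma_{\rm D}-1/2}\,t\,dt$ already diverges for $\gamma_{\rm D}\leqslant1/2$, and the sharp outcome of the splitting/H\"older strategy is $O(|v|^{1-4\hat{\gamma}_{\rm D}+\epsilon_2})$ (Proposition \ref{prop16}), respectively $O(|v|^{-1/2-\epsilon_1})$ for the Graf class (Proposition \ref{prop12}). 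Consequently the remainder is only $R_{\rm D}(v)=O(|v|^{3-8\gamma_{\rm D}+2\epsilon_2})$ or $O(|v|^{-2\epsilon_1})$, i.e.\ $o(1)$ rather than $O(|v|^{-1})$, and it is precisely here that the threshold $\gamma_{\rm D}>3/8$ is used. The fact that your sketch never needs $3/8$ is a symptom that the $O(|v|^{-1})$ estimate is not actually obtainable by the argument you outline; you need the weaker, quantified bounds and the bookkeeping $|v|\times(\text{propagation estimate})^2\to0$ to close the proof. (Minor additional care: in your reduced expression the factors $e^{-{\rm i}t|p|^2/2}$ and $M_{{\rm D},v}(t)$ cannot be dropped, since they do not commute with the $x$-dependent difference; and the conjugated Dollard term appears as $V^{\rm l}(pt-e_1t^2/2)$, with the minus sign coming from $e^{-{\rm i}tH_0^{\rm S}}pe^{{\rm i}tH_0^{\rm S}}=p-e_1t$, as in the paper.)
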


Before the proof of Theorem \ref{the8}, we prepare some Lemmas and Propositions. To begin, we define a class of long-range potentials $\hat{\mathscr{V}}_{\rm D}^{\rm l}$ as follows. $V^{\rm l}\in\hat{\mathscr{V}}_{\rm D}^{\rm l}$ belongs to $C^2(\mathbb{R}^n)$ and satisfies that
\begin{equation}
|\partial_x^\beta V^{\rm l}(x)|\leqslant C_\beta\langle x\rangle^{-\hat{\gamma}_{\rm D}-|\beta|/2}\label{long_range_dollard_wide}
\end{equation}
for $|\beta|\leqslant2$, where $1/4<\hat{\gamma}_{\rm D}\leqslant1/2$. Clearly, $\mathscr{V}_{\rm D}^{\rm l}\subsetneq\hat{\mathscr{V}}_{\rm D}^{\rm l}$. Moreover, we denote the Dollard-type modifiers $M_{\rm D}(t)$ and $M_{{\rm D}, v}(t)$ by
\begin{gather}
M_{\rm D}(t)=e^{-{\rm i}\int_0^tV^{\rm l}(p\tau+e_1\tau^2/2)d\tau},\label{dollard_modifier}\\
M_{{\rm D}, v}(t)=e^{-{\rm i}v\cdot x}M_{\rm D}(t)e^{{\rm i}v\cdot x}=e^{-{\rm i}\int_0^tV^{\rm l}(p\tau+v\tau+e_1\tau^2/2)d\tau}\label{dollard_modifier_v}
\end{gather}
 for $V^{\rm l}\in\mathscr{V}_{\rm G}^{\rm l}\cup\hat{\mathscr{V}}_{\rm D}^{\rm l}$. \par
 
We first give the estimates when $V^{\rm l}$ belongs to $\mathscr{V}_{\rm G}^{\rm l}$. Lemma \ref{lem9} of which $k=2$ was obtained in \cite[Lemma 3.1]{AdMa}. The proof of $k=1$ was included in the proof of $k=2$. Propositions \ref{prop10} and \ref{prop12} below were also proved in \cite[Lemmas 3.2 and 3.4]{AdMa}.
 
\begin{Lem}\label{lem9}
Let $v$ and $\Phi_v$ be as in Theorem \ref{the8}. Then, for $V^{\rm l}\in\mathscr{V}_{\rm G}^{\rm l}$,
\begin{equation}
\|\langle x\rangle^kM_{{\rm D}, v}(t)\Phi_0\|=O(1)\label{eq9_1}
\end{equation}
with $k=1,2$ holds as $|v|\rightarrow\infty$ uniformly in $t\in\mathbb{R}$.
\end{Lem}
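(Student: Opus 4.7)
The plan is to exploit the fact that the Dollard modifier $M_{{\rm D},v}(t)$ depends on position only through $p$, so weighted $L^2$-bounds are controlled by commutators $[x_j, M_{{\rm D},v}(t)]$. Setting $\Theta(p,t):=\int_0^t V^{\rm l}(p\tau+v\tau+e_1\tau^2/2)\,d\tau$, I use $[x_j,M_{{\rm D},v}(t)]=(\partial_{p_j}\Theta)(p,t)\,M_{{\rm D},v}(t)$ and iterate to get
\begin{equation*}
x_j^2 M_{{\rm D},v}(t)\Phi_0 = M_{{\rm D},v}(t)x_j^2\Phi_0 + 2(\partial_{p_j}\Theta)M_{{\rm D},v}(t)x_j\Phi_0 + \{(\partial_{p_j}\Theta)^2+{\rm i}\partial_{p_j}^2\Theta\}M_{{\rm D},v}(t)\Phi_0.
\end{equation*}
Since $M_{{\rm D},v}(t)$ is unitary and $\Phi_0, x_j\Phi_0, x_j^2\Phi_0$ all have Fourier transforms supported in $\{|\xi|<\eta\}$, the lemma reduces to showing that $\partial_{p_j}\Theta$ and $\partial_{p_j}^2\Theta$, viewed as functions of $p$ on $\{|p|<\eta\}$, remain bounded uniformly in $t\in\mathbb{R}$ as $|v|\rightarrow\infty$.

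Differentiating under the integral sign and invoking \eqref{long_range_graf} gives $|\partial_p^\beta\Theta(p,t)|\leqslant C\int_0^{|t|}|\tau|^{|\beta|}\langle p\tau+v\tau+e_1\tau^2/2\rangle^{-\gamma_{\rm G}-\kappa|\beta|}d\tau$. The key geometric input is the lower bound
\begin{equation*}
|p\tau+v\tau+e_1\tau^2/2|\geqslant c\,|\tau|\max(|v|,|\tau|/2)
\end{equation*}
valid uniformly in $|p|<\eta$ once $|v|$ is large. This comes from the completion-of-square argument already used in \eqref{eq1_6}: with $\delta=|\omega\cdot e_1|<1$ one shows $|v\tau+e_1\tau^2/2|^2\geqslant(1-\delta^2)\max(|v|^2\tau^2,\tau^4/4)$, and the small perturbation $p\tau$ is absorbed for $|v|$ sufficiently large.

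Splitting the $\tau$-integral at $|\tau|=|v|$, the outer piece $\int_{|v|}^\infty \tau^{|\beta|}\cdot\tau^{-2(\gamma_{\rm G}+\kappa|\beta|)}d\tau$ is finite provided $2(\gamma_{\rm G}+\kappa|\beta|)>|\beta|+1$, giving an $O(|v|^{|\beta|+1-2(\gamma_{\rm G}+\kappa|\beta|)})$ contribution; the inner piece $\int_0^{|v|}\tau^{|\beta|}(|v||\tau|)^{-\gamma_{\rm G}-\kappa|\beta|}d\tau$ produces the same order. For $|\beta|=1$ the convergence condition reads $\gamma_{\rm G}+\kappa>1$, which is the standing hypothesis on $\mathscr{V}_{\rm G}^{\rm l}$; for $|\beta|=2$ it reads $\gamma_{\rm G}+2\kappa>3/2$, which follows from $\kappa>1-\gamma_{\rm G}\geqslant 1/2$. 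In both cases the exponent is strictly negative, so $\sup_{|p|<\eta}|\partial_p^\beta\Theta(p,t)|=o(1)$ uniformly in $t$, which implies \eqref{eq9_1}.

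The main bookkeeping obstacle is the short-time regime $|\tau|\leqslant |v|^{-1}$, in which $|v||\tau|<1$ and the Kitada bracket produces no decay; there one simply uses $\langle\cdot\rangle^{-\gamma_{\rm G}-\kappa|\beta|}\leqslant 1$, producing a harmless $O(|v|^{-(|\beta|+2)})$ contribution. Otherwise the argument is parallel to \cite[Lemma 3.1]{AdMa}, which handled the case $k=2$ and contained $k=1$ as a sub-case.
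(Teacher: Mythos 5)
Your proof is correct and follows essentially the same route as the argument the paper relies on (it cites \cite[Lemma 3.1]{AdMa}, which likewise expands $x_j^kM_{{\rm D},v}(t)$ via derivatives of the Dollard phase in $p$ and bounds $\int_0^\infty\tau^{|\beta|}\langle p\tau+v\tau+e_1\tau^2/2\rangle^{-\gamma_{\rm G}-\kappa|\beta|}d\tau$ using the quadratic lower bound on the trajectory, with $\gamma_{\rm G}+\kappa>1$ and $\gamma_{\rm G}+2\kappa>3/2$). The only blemish is the harmless bookkeeping slip that the short-time piece $\int_0^{|v|^{-1}}\tau^{|\beta|}d\tau$ is $O(|v|^{-(|\beta|+1)})$, not $O(|v|^{-(|\beta|+2)})$; this does not affect the conclusion.
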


\begin{Prop}\label{prop10}
Let $v$ and $\Phi_v$ be as in Theorem \ref{the8}. Then
\begin{equation}
\int_{-\infty}^\infty\|V^{\rm vs}(x)e^{-{\rm i}tH_0^{\rm S}}M_{\rm D}(t)\Phi_v\|dt=O(|v|^{-1})
\end{equation}
holds as $|v|\rightarrow\infty$ for $V^{\rm vs}\in\mathscr{V}^{\rm vs}$ and $V^{\rm l}\in\mathscr{V}_{\rm G}^{\rm l}$.
\end{Prop}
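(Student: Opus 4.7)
The plan is to reduce the integrand to an estimate of the form treated in Proposition~\ref{prop5}, but with $\Phi_0$ replaced by the $t$- and $v$-dependent vector $\tilde\Phi_0(t,v):=M_{{\rm D},v}(t)\Phi_0$, whose position-space decay is then supplied by Lemma~\ref{lem9}. Applying the Avron--Herbst formula~\eqref{avron_herbst}, commuting $V^{\rm vs}(x)$ through $e^{itx_1}$ and then through $e^{-it^2p_1/2}$ (which yields the shift $V^{\rm vs}(x+e_1t^2/2)$), using that $M_{\rm D}(t)$ is a function of $p$ and hence commutes with $e^{-it|p|^2/2}$, and pushing the boost $e^{iv\cdot x}$ through via \eqref{eq1_1}, the definition~\eqref{dollard_modifier_v} of $M_{{\rm D},v}(t)$, and the translation identity $V^{\rm vs}(x)e^{-itv\cdot p}=e^{-itv\cdot p}V^{\rm vs}(x+vt)$, I obtain
\begin{equation*}
\|V^{\rm vs}(x)e^{-itH_0^{\rm S}}M_{\rm D}(t)\Phi_v\| = \|V^{\rm vs}(x+vt+e_1t^2/2)U_0(t)\tilde\Phi_0(t,v)\|,
\end{equation*}
where $U_0(t):=e^{-it|p|^2/2}$. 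Note that $\tilde\Phi_0(t,v)=f(p)\tilde\Phi_0(t,v)$ still has momentum support in $\{|\xi|<\eta\}$, since $M_{{\rm D},v}(t)$ is itself a function of $p$, and by Lemma~\ref{lem9} one has $\|\langle x\rangle^k\tilde\Phi_0(t,v)\|=O(1)$ uniformly in $t$ for $k=1,2$.

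From \eqref{eq1_6}, $|vt+e_1t^2/2|\geq 4\lambda|v||t|$ with $\lambda=\sqrt{1-\delta^2}/4$. Setting $R=2\lambda|v||t|$ and inserting $F(|x|<R)+F(|x|\geq R)$ between $V^{\rm vs}(x+vt+e_1t^2/2)$ and $U_0(t)\tilde\Phi_0(t,v)$, I would then argue in the manner of the proof of Proposition~\ref{prop5} (cf.~\cite[Lemma 2.1]{AdMa}). On $\{|x|<R\}$ one has $|x+vt+e_1t^2/2|\geq R$, so by translation invariance $\|V^{\rm vs}(x+vt+e_1t^2/2)F(|x|<R)\langle p\rangle^{-2}\|\leq\|V^{\rm vs}(y)F(|y|\geq R)\langle p\rangle^{-2}\|$; sandwiching with $\langle p\rangle^2$ (harmless thanks to $\|\langle p\rangle^2 f(p)\|<\infty$) and applying the change of variable $R=2\lambda|v||t|$ together with \eqref{enss_condition1} yields an $O(|v|^{-1})$ contribution to the $t$-integral. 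On $\{|x|\geq R\}$, I would apply Proposition~\ref{prop4} (with position cutoffs above/below $R/2$ and $v=0$ in its notation) and use the Lemma~\ref{lem9} bound $\|F(|x|>R/2)\tilde\Phi_0(t,v)\|\leq 4R^{-2}\|\langle x\rangle^2\tilde\Phi_0(t,v)\|$ to extract decay of order $C_N(1+|v||t|)^{-N}+C(|v||t|)^{-2}$, whose $t$-integral is also $O(|v|^{-1})$; the singular part $V_1^{\rm vs}$ is treated by inserting $\langle p\rangle^{-2}\langle p\rangle^2$ and commuting $\langle p\rangle^2$ past a smooth version of $F(|x|\geq R)$, the resulting $O(R^{-1})$ commutator being harmless after integration.

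The main obstacle is that $\tilde\Phi_0(t,v)$ depends nontrivially on both $t$ and $v$, so the Schwartz-class decay of $\Phi_0$ cannot be invoked directly and must be replaced by the uniform Lemma~\ref{lem9} bound at every step, in particular for both the $\langle x\rangle^2$-weighted and the momentum-support properties used above. A related technical subtlety is that the singular part $V_1^{\rm vs}$ is only $|p|^2/2$-bounded, so the commutator of $\langle p\rangle^2$ with the cutoff $F(|x|\geq R)$ on the tail set requires some care, but the extra $R$-decay already in hand is sufficient to absorb it.
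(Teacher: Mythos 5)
Your overall strategy is the same as the paper's: Proposition \ref{prop10} is taken from \cite[Lemma 3.2]{AdMa}, whose proof is precisely the argument for Proposition \ref{prop5} (\cite[Lemma 2.1]{AdMa}) rerun with $\Phi_0$ replaced by $M_{{\rm D},v}(t)\Phi_0$, the momentum support being preserved because $M_{{\rm D},v}(t)$ is a function of $p$, and the uniform weighted bounds $\|\langle x\rangle^{k}M_{{\rm D},v}(t)\Phi_0\|=O(1)$ of Lemma \ref{lem9} replacing $\|\langle x\rangle^{k}\Phi_0\|<\infty$. Your reduction via \eqref{avron_herbst}, \eqref{eq1_1}, \eqref{dollard_modifier_v} to $\|V^{\rm vs}(x+vt+e_1t^2/2)e^{-{\rm i}t|p|^2/2}M_{{\rm D},v}(t)\Phi_0\|$ is correct, and so is your treatment of the region $\{|x|<R\}$, $R=2\lambda|v||t|$: pointwise domination of the multiplication operators, translation invariance, and \eqref{enss_condition1}--\eqref{enss_condition2} give $O(|v|^{-1})$ after the change of variables, with $\|\langle p\rangle^{2}f(p)\|<\infty$ absorbing the weight.

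Two of your error bounds on the region $\{|x|\geqslant R\}$, however, do not survive the $t$-integration as stated. First, the bound $C(|v||t|)^{-2}$ obtained from $\|F(|x|>R/2)M_{{\rm D},v}(t)\Phi_0\|\leqslant 4R^{-2}\|\langle x\rangle^{2}M_{{\rm D},v}(t)\Phi_0\|$ is not integrable near $t=0$; this is easily repaired (for $|t|\leqslant|v|^{-1}$ use the trivial uniform bound $\|V^{\rm vs}\langle p\rangle^{-2}\|\,\|\langle p\rangle^{2}f(p)\Phi_0\|$, contributing $O(|v|^{-1})$), but it must be said. Second, and more substantively, the commutator produced by moving $\langle p\rangle^{2}$ past a smoothed cutoff $\chi_R$ is claimed to be ``$O(R^{-1})$, harmless after integration''; a bare operator bound $O(R^{-1})=O((|v||t|)^{-1})$ is \emph{not} integrable over $\mathbb{R}$ (it diverges logarithmically at large $|t|$ as well as at $t=0$), so this step fails as written. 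The missing ingredient is that $[\langle p\rangle^{2},\chi_R]=-(\Delta\chi_R)-2(\nabla\chi_R)\cdot\nabla$ carries multiplication factors supported in the transition annulus $\{R/2\leqslant|x|\leqslant R\}$, far from where the free wave packet is concentrated; you must run the same far/near splitting on this commutator term as on the main one, namely Proposition \ref{prop4} (with $v=0$ and the enlarged symbol, still supported in $\{|\xi|<\eta\}$) against $F(|x|\leqslant R/4)M_{{\rm D},v}(t)\Phi_0$, plus the Lemma \ref{lem9} weight against $F(|x|>R/4)M_{{\rm D},v}(t)\Phi_0$. This upgrades the commutator bound to $O(R^{-1})\{(1+|v||t|)^{-N}+R^{-2}\}$, which, together with the trivial bound on $|t|\leqslant|v|^{-1}$, integrates to $O(|v|^{-1})$. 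With these two repairs your sketch becomes a complete proof along the lines of the cited one.
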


The next propagation estimate for the regular short-range part $V^{\rm s}$ along the modified time evolution by $M_{\rm D}(t)e^{-{\rm i}H_0^{\rm S}}$ is also one of the new results in this paper. In \cite[Lemma 3.3]{AdMa}, the right-hand side of \eqref{estimate_V^s_long_range_graf} was given as $O(|v|^{-\alpha})$ for $1/2<\alpha<1$.
\begin{Prop}\label{prop11}
Let $v$ and $\Phi_v$ be as in Theorem \ref{the8}. Then
\begin{equation}
\int_{-\infty}^\infty\|\{V^{\rm s}(x)-V^{\rm s}(vt+e_1t^2/2)\}e^{-{\rm i}tH_0^{\rm S}}M_{\rm D}(t)\Phi_v\|dt=O(|v|^{-1})\label{estimate_V^s_long_range_graf}
\end{equation}
holds as $|v|\rightarrow\infty$ for $V^{\rm s}\in\mathscr{V}^{\rm s}$ and $V^{\rm l}\in\mathscr{V}_{\rm G}^{\rm l}$.
\end{Prop}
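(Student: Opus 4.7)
The plan is to reduce Proposition \ref{prop11} to the argument of Proposition \ref{prop6} by exploiting the fact that the Dollard modifier $M_{\rm D}(t)$ is a function of $p$ alone. First, I would apply the Avron--Herbst formula \eqref{avron_herbst}. Because $M_{\rm D}(t)$ commutes with $e^{-{\rm i}t|p|^2/2}$ and $e^{-{\rm i}t^2p_1/2}$, and $e^{-{\rm i}v\cdot x}M_{\rm D}(t)e^{{\rm i}v\cdot x}=M_{{\rm D},v}(t)$, the manipulations leading to \eqref{eq1_2} yield
\begin{equation*}
\|\{V^{\rm s}(x)-V^{\rm s}(vt+e_1t^2/2)\}e^{-{\rm i}tH_0^{\rm S}}M_{\rm D}(t)\Phi_v\|=\|\{V^{\rm s}(x+vt+e_1t^2/2)-V^{\rm s}(vt+e_1t^2/2)\}e^{-{\rm i}t|p|^2/2}M_{{\rm D},v}(t)\Phi_0\|,
\end{equation*}
so that the integrand of \eqref{estimate_V^s_long_range_graf} equals that of \eqref{eq1_2} with $f(p)\Phi_0$ replaced by $M_{{\rm D},v}(t)\Phi_0$.

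Next, I would split $\int_{-\infty}^\infty$ into $|t|<|v|^{-\sigma_1}$ and $|t|\geqslant|v|^{-\sigma_1}$ and insert the same cutoff pairs as in the proof of Proposition \ref{prop6} (with $\lambda|v|$ replaced by $|v|^{\sigma_2}$ in the second regime), obtaining the analogues $I_1,\ldots,I_6$. For the free-propagation pieces $I_1$ and $I_4$ handled by Proposition \ref{prop4}, I would use $\Phi_0=f(p)\Phi_0$ and $[f(p),M_{{\rm D},v}(t)]=0$ to write $M_{{\rm D},v}(t)\Phi_0=f(p)M_{{\rm D},v}(t)\Phi_0$, so that Proposition \ref{prop4} applies verbatim. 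For $I_2$ and $I_5$, Lemma \ref{lem9} with $k=2$ absorbs the weight $\|\langle x\rangle^2 M_{{\rm D},v}(t)\Phi_0\|=O(1)$ into the constants $C_2$, uniformly in $t$. For $I_3$ and $I_6$, the geometric lower bounds \eqref{eq1_7} and \eqref{eq1_18} on the classical trajectory $vt+e_1t^2/2$ are unchanged, and I would replace $\|xe^{-{\rm i}t|p|^2/2}\Phi_0\|$ by
\begin{equation*}
\|xe^{-{\rm i}t|p|^2/2}M_{{\rm D},v}(t)\Phi_0\|\leqslant\|xM_{{\rm D},v}(t)\Phi_0\|+|t|\,\|p\Phi_0\|\leqslant C(1+|t|),
\end{equation*}
which follows from \eqref{heisenberg_x}, the commutativity $[p,M_{{\rm D},v}(t)]=0$, the unitarity of $M_{{\rm D},v}(t)$, and Lemma \ref{lem9} with $k=1$. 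The H\"older estimates \eqref{eq1_20}--\eqref{eq1_26} then go through unchanged, and the same choice of exponents $0<\sigma_1<\alpha/(2+3\alpha)$ and $(1+\sigma_1)/2<\sigma_2<1$ concludes the proof.

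The main obstacle is confirming that the $t$-dependent operator $M_{{\rm D},v}(t)$, which commutes with neither the cutoffs $F(|x|\cdots)$ nor $x$, does not spoil either the pseudolocal decay of Proposition \ref{prop4} or the linear-in-$t$ growth used in \eqref{free_dynamics2}. Both issues reduce to the two ingredients above: commutativity of $M_{{\rm D},v}(t)$ with every function of $p$, and the uniform weighted bound of Lemma \ref{lem9}. Consequently, no adjustment of the H\"older exponents $q_1,q_2$ or of $\sigma_1,\sigma_2$ from the proof of Proposition \ref{prop6} is necessary to obtain the $O(|v|^{-1})$ decay.
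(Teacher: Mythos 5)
Your proposal is correct and is essentially the paper's own argument: the paper likewise reduces \eqref{estimate_V^s_long_range_graf} via the Avron--Herbst formula and \eqref{eq1_1} to the integrand of Proposition \ref{prop6} with the extra factor $M_{{\rm D},v}(t)$, then repeats that proof using Lemma \ref{lem9} (with $k=2$ for the weighted cutoff terms and $k=1$ for the bound $\|xe^{-{\rm i}t|p|^2/2}M_{{\rm D},v}(t)\Phi_0\|=\|(x+pt)M_{{\rm D},v}(t)\Phi_0\|\leqslant C(1+|t|)$), with the same choices of $\sigma_1,\sigma_2$ and H\"older exponents. Your explicit remarks on the commutativity of $M_{{\rm D},v}(t)$ with functions of $p$ (so Proposition \ref{prop4} applies verbatim) just spell out details the paper leaves implicit.
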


\begin{proof}[Proof of Proposition \ref{prop11}]
It follows from the Avron--Herbst formula \eqref{avron_herbst} and relation \eqref{eq1_1} that the integrand of \eqref{estimate_V^s_long_range_graf} is
\begin{eqnarray}
\lefteqn{\|\{V^{\rm s}(x)-V^{\rm s}(vt+e_1t^2/2)\}e^{-{\rm i}tH_0^{\rm S}}M_{\rm D}(t)\Phi_v\|}\nonumber\\
\Eqn{}=\|\{V^{\rm s}(x+vt+e_1t^2/2)-V^{\rm s}(vt+e_1t^2/2)\}e^{-{\rm i}t|p|^2/2}M_{{\rm D}, v}(t)f(p)\Phi_0\|\label{eq11_1}
\end{eqnarray}
Therefore, by virtue of Lemma \ref{lem9}, this proof is performed almost in a similar manner to the proof of Proposition \ref{prop6}. We here only note that
\begin{equation}
\|xe^{-{\rm i}t|p|^2/2}M_{{\rm D}, v}(t)\Phi_0\|=\|(x+pt)M_{{\rm D}, v}(t)\Phi_0\|\leqslant C(1+|t|)\label{free_dynamics3}
\end{equation}
holds by \eqref{heisenberg_x} and \eqref{eq9_1} with $k=1$.
\end{proof}

As we stated above, the following propagation estimate for the long-range part $V^{\rm l}\in\mathscr{V}_{\rm G}^{\rm l}$ was already obtained in \cite{AdMa}.
\begin{Prop}\label{prop12}
Let $v$ and $\Phi_v$ be as in Theorem \ref{the8}. Then
\begin{equation}
\int_{-\infty}^\infty\|\{V^{\rm l}(x)-V^{\rm l}(pt-e_1t^2/2)\}e^{-{\rm i}tH_0^{\rm S}}M_{\rm D}(t)\Phi_v\|dt=O(|v|^{-1/2-\epsilon_1})\label{estimate_V^l_long_range_graf}
\end{equation}
holds with some $\epsilon_1>0$ as $|v|\rightarrow\infty$ for $V^{\rm l}\in\mathscr{V}_{\rm G}^{\rm l}$.
\end{Prop}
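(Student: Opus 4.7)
My plan is to follow the general scheme of the proof of Proposition~\ref{prop6}: first use the Avron--Herbst formula and the dressing by $e^{iv\cdot x}$ to reduce the integrand to a free-Schr\"odinger expression, then apply a gradient expansion of $V^{\rm l}$, insert spatial cutoffs, split the $t$-integral at a scale $|v|^{-\sigma}$, and optimize. The long-range decay $|\partial^\beta V^{\rm l}|\leqslant C\langle\cdot\rangle^{-\gamma_{\rm G}-\kappa|\beta|}$ with $\gamma_{\rm G}+\kappa>1$ replaces the short-range bound~\eqref{short_range} used for $V^{\rm s}$ in that proposition.

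The first step is a sequence of algebraic identities. Using the Avron--Herbst formula~\eqref{avron_herbst} and the Heisenberg identity $e^{itH_0^{\rm S}}pe^{-itH_0^{\rm S}}=p+e_1 t$, the function-of-$p$ operator $V^{\rm l}(pt-e_1 t^2/2)$ passes through $e^{-itH_0^{\rm S}}$ from the left to become $V^{\rm l}(pt+e_1 t^2/2)$, while $V^{\rm l}(x)$ passes through to become $V^{\rm l}(x+pt+e_1 t^2/2)$. After conjugating by $e^{iv\cdot x}$ and using $M_{\rm D}(t)e^{iv\cdot x}=e^{iv\cdot x}M_{{\rm D},v}(t)$, the integrand becomes
\[
\|\{V^{\rm l}(x+(p+v)t+e_1 t^2/2)-V^{\rm l}((p+v)t+e_1 t^2/2)\}M_{{\rm D},v}(t)\Phi_0\|.
\]
The components of $A(t):=x+(p+v)t+e_1 t^2/2$ commute pairwise since $[x_j,p_kt]+[p_jt,x_k]=0$, so $V^{\rm l}(A(t))$ is a bounded operator via joint functional calculus; equivalently $V^{\rm l}(A(t))=e^{itH_0^{\rm S}}V^{\rm l}(x+vt)e^{-itH_0^{\rm S}}$.

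The second step is a gradient expansion. Either directly in the configuration form above, or via a Weyl-calculus Taylor expansion producing $\int_0^1(\nabla V^{\rm l})(\theta x+(p+v)t+e_1 t^2/2)\cdot x\,d\theta$ plus a remainder involving $\partial^2 V^{\rm l}$ with decay $\langle\cdot\rangle^{-\gamma_{\rm G}-2\kappa}$, one reduces matters to estimating a single gradient. I then split $\int dt$ at $|t|=|v|^{-\sigma}$ for some $0<\sigma<1$ and insert $F(|x|<R)+F(|x|\geqslant R)=1$ as in the proof of Proposition~\ref{prop6}. On the large-$x$ region, Proposition~\ref{prop4} (after reintroducing $e^{-it|p|^2/2}$ via Avron--Herbst) combined with Lemma~\ref{lem9} for $k=2$ gives rapidly decaying contributions. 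On the small-$x$ region, the classical lower bounds~\eqref{eq1_6} and~\eqref{eq1_18} give the argument of $\nabla V^{\rm l}$ a magnitude bounded below by $c|v||t|$ when $|t|\leqslant|v|$ and by $ct^2$ when $|t|\geqslant|v|$; combined with $\|xM_{{\rm D},v}(t)\Phi_0\|=O(1)$ from Lemma~\ref{lem9} with $k=1$, the integrand is then bounded by $\langle vt\rangle^{-\gamma_{\rm G}-\kappa}$ and $\langle t^2\rangle^{-\gamma_{\rm G}-\kappa}$, respectively.

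Assembling the pieces and integrating, one obtains rates $|v|^{-\sigma}$ from the short-time segment, $|v|^{-(\gamma_{\rm G}+\kappa-1)}$ (with possible logarithmic factor) from the linear long-time regime, and a faster rate from the quadratic regime. Optimizing $\sigma$ near $1/2$ and using $\gamma_{\rm G}+\kappa>1$ delivers $O(|v|^{-1/2-\epsilon_1})$ for some $\epsilon_1>0$. The main obstacle is that the linear-regime estimate alone only reaches $O(|v|^{-1/2})$ as $\gamma_{\rm G}+\kappa\to 1^+$; the strictly positive margin $\epsilon_1$ must therefore be squeezed from the superballistic regime $|t|\geqslant|v|$, where the argument of $\nabla V^{\rm l}$ has magnitude of order $t^2$ rather than $|v||t|$. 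Handling $V^{\rm l}$ at the operator argument $A(t)$ via joint functional calculus (or equivalently via the Heisenberg conjugation $V^{\rm l}(A(t))=e^{itH_0^{\rm S}}V^{\rm l}(x+vt)e^{-itH_0^{\rm S}}$), together with the two-regime bookkeeping, is the technical crux.
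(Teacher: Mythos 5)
The paper itself offers no proof of Proposition \ref{prop12}; it is imported verbatim from \cite[Lemma 3.4]{AdMa}, so your attempt stands or falls on its own. Your algebraic reduction is correct, and your guiding idea is the right one: the two operator arguments differ exactly by $x$, whose moments on $M_{{\rm D},v}(t)\Phi_0$ are $O(1)$ by Lemma \ref{lem9}, and some such cancellation is genuinely needed (if one instead inserts the c-number $V^{\rm l}(vt+e_1t^2/2)$ and treats the momentum part by the mean value theorem on $\supp\mathscr{F}\Phi_0$, each half can only be estimated at the rate $|v|^{-2(\gamma_{\rm G}+\kappa-1)}$, which is not $O(|v|^{-1/2-\epsilon_1})$ when $\gamma_{\rm G}+\kappa$ is close to $1$). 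But the two steps you treat as routine are exactly where the proof lives, and as written they do not go through. First, since $[x_j,(p_k+v_k)t]={\rm i}t\delta_{jk}\neq0$, the formula $V^{\rm l}(A(t))-V^{\rm l}(B(t))=\int_0^1(\nabla V^{\rm l})(\theta x+B(t))\cdot x\,d\theta$ is not an identity: the exact expansion carries an ordering correction of the form $\pm\tfrac{{\rm i}t}{2}\int_0^1(\Delta V^{\rm l})(\theta x+B(t))\,d\theta$, i.e.\ your ``remainder involving $\partial^2V^{\rm l}$'' comes with an explicit factor $|t|$ that you never record, let alone integrate against the decay $\langle\cdot\rangle^{-\gamma_{\rm G}-2\kappa}$. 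Second, the localization step is unjustified: \eqref{eq1_6} and \eqref{eq1_18} bound a vector argument under the cutoff $F(|x|<3\lambda|v||t|)$ acting on the multiplication variable, whereas your argument $\theta x+(p+v)t+e_1t^2/2$ contains $p$; $F(|x|<R)$ does not commute with functions of it, and Proposition \ref{prop4} only becomes applicable after conjugating with a free evolution (with $e^{-{\rm i}t|p|^2/(2\theta)}$ for the $\theta$-dependent argument), at which point the relevant weight is $\|xe^{-{\rm i}t|p|^2/2}M_{{\rm D},v}(t)\Phi_0\|=\|(x+pt)M_{{\rm D},v}(t)\Phi_0\|=O(1+|t|)$, not the $O(1)$ of Lemma \ref{lem9}. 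You are combining the $O(1)$ weight of the unconjugated picture with the localization geometry of the conjugated picture; making both available at once requires a phase-space argument uniform in $\theta$ (and moment bounds that Lemma \ref{lem9} supplies only up to $k=2$), none of which your proposal provides.

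The quantitative assembly is also unsound. Contributions from the different time regimes add, so the margin $\epsilon_1$ cannot be ``squeezed from the superballistic regime'' if the ballistic regime is worse: with the legitimate weight $O(1+|t|)$ the ballistic contribution is of order $|v|^{-2(\gamma_{\rm G}+\kappa-1)}$ (you even quote the weaker $|v|^{-(\gamma_{\rm G}+\kappa-1)}$), and this tends to $|v|^{0}$, not to $|v|^{-1/2}$, as $\gamma_{\rm G}+\kappa\downarrow1$; conversely, if your $O(1)$ weight were justified throughout, the whole integral would already be $O(|v|^{-1})$ and the ``obstacle'' in your closing paragraph would not exist. The two halves of your write-up are thus inconsistent with each other, and neither, as written, delivers the stated $O(|v|^{-1/2-\epsilon_1})$ over the full class $\mathscr{V}_{\rm G}^{\rm l}$. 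To repair the argument you must carry out the noncommutative expansion and the $\theta$-uniform phase-space localization honestly (tracking which picture supplies the weight and which supplies the localization), or defer, as the paper does, to the proof in \cite{AdMa}.
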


The following Lemma and Propositions are the estimates when $V^{\rm l}$ belongs to $\hat{\mathscr{V}}_{\rm D}^{\rm l}$. \eqref{eq13_2} of Lemma \ref{lem13} is one of the simple versions in \cite[Lemma 4.2]{AdFuIs}. The proof of \eqref{eq13_1} was included in the proof of \eqref{eq13_2}. Proposition \ref{prop14} was also proved in \cite[Lemma 4.3]{AdFuIs}.
\begin{Lem}\label{lem13}
Let $v$ and $\Phi_v$ be as in Theorem \ref{the8}. Then, for $V^{\rm l}\in\hat{\mathscr{V}}_{\rm D}^{\rm l}$, there exists a positive constant $C$ which is independent of $t$ and $v$ such that
\begin{gather}
\|\langle x\rangle M_{{\rm D}, v}(t)\Phi_0\|\leqslant C(1+|t|^{1-2\hat{\gamma}_{\rm D}}),\label{eq13_1}\\
\|\langle x\rangle^2M_{{\rm D}, v}(t)\Phi_0\|\leqslant C(1+|t|^{1-2\hat{\gamma}_{\rm D}}+|t|^{2-4\hat{\gamma}_{\rm D}}).\label{eq13_2}
\end{gather}
\end{Lem}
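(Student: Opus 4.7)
The plan is to move $x_j$ past the phase operator $M_{{\rm D},v}(t)$ by commutator. Writing $M_{{\rm D},v}(t)=e^{-{\rm i}g(t,p)}$ with
\[
g(t,p)=\int_0^t V^{\rm l}(p\tau+v\tau+e_1\tau^2/2)\,d\tau
\]
a smooth function of $p$ alone, the identity $[x_j,f(p)]={\rm i}(\partial_{p_j}f)(p)$ gives $[x_j,M_{{\rm D},v}(t)]=(\partial_{p_j}g)(t,p)M_{{\rm D},v}(t)$, whence
\[
x_jM_{{\rm D},v}(t)\Phi_0=M_{{\rm D},v}(t)x_j\Phi_0+(\partial_{p_j}g)(t,p)M_{{\rm D},v}(t)\Phi_0.
\]
Since $(\partial_{p_j}g)(t,p)$ is a Fourier multiplier and $\mathscr{F}\Phi_0$ is supported in $\{|\xi|<\eta\}$, bounding $\|x_jM_{{\rm D},v}(t)\Phi_0\|$ reduces to bounding $(\partial_{p_j}g)(t,p)$ pointwise for $|p|<\eta$.

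The core estimate is a geometric lower bound on the argument of $V^{\rm l}$: setting $w=p+v$, one has
\[
|p\tau+v\tau+e_1\tau^2/2|^2=\tau^2\bigl[(w_1+\tau/2)^2+|w_\perp|^2\bigr].
\]
Because $|\omega\cdot e_1|=\delta<1$, for $|v|$ sufficiently large and $|p|<\eta$ one obtains $|w_\perp|^2\geq(1-\delta^2)|v|^2/2$, while $(w_1+\tau/2)^2\geq\tau^2/16$ as soon as $\tau\geq4|v|$. Combining these yields
\[
\langle p\tau+v\tau+e_1\tau^2/2\rangle\geq c\max(\tau|v|,\tau^2).
\]
Plugging this and the hypothesis $|\partial_{x_j}V^{\rm l}|\leq C\langle\cdot\rangle^{-\hat{\gamma}_{\rm D}-1/2}$ into
\[
\partial_{p_j}g(t,p)=\int_0^t\tau(\partial_{x_j}V^{\rm l})(p\tau+v\tau+e_1\tau^2/2)\,d\tau,
\]
and splitting the $\tau$-integral at $\tau=4|v|$, a routine integration (using that the resulting one-variable integrals have convergent or $|t|^{1-2\hat{\gamma}_{\rm D}}$-like growth) gives $|(\partial_{p_j}g)(t,p)|\leq C(1+|t|^{1-2\hat{\gamma}_{\rm D}})$ uniformly in $|p|<\eta$, yielding \eqref{eq13_1}.

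For \eqref{eq13_2} I would iterate the commutator expansion, obtaining
\begin{align*}
x_jx_kM_{{\rm D},v}(t)\Phi_0&=M_{{\rm D},v}(t)x_jx_k\Phi_0+(\partial_{p_j}g)M_{{\rm D},v}(t)x_k\Phi_0+(\partial_{p_k}g)M_{{\rm D},v}(t)x_j\Phi_0\\
&\quad+(\partial_{p_j}g)(\partial_{p_k}g)M_{{\rm D},v}(t)\Phi_0+{\rm i}(\partial_{p_j}\partial_{p_k}g)M_{{\rm D},v}(t)\Phi_0.
\end{align*}
The second partial $\partial_{p_j}\partial_{p_k}g=\int_0^t\tau^2(\partial_{x_j}\partial_{x_k}V^{\rm l})(\cdot)\,d\tau$ is estimated by the identical regime split using $|\partial^2V^{\rm l}|\leq C\langle\cdot\rangle^{-1-\hat{\gamma}_{\rm D}}$, again of size $O(1+|t|^{1-2\hat{\gamma}_{\rm D}})$. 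The genuinely new growth rate comes from the product $(\partial_{p_j}g)(\partial_{p_k}g)$, bounded by $C(1+|t|^{1-2\hat{\gamma}_{\rm D}})^2\leq C(1+|t|^{2-4\hat{\gamma}_{\rm D}})$, which is exactly the second tail term appearing in \eqref{eq13_2}.

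The main obstacle is the geometric lower bound of the second paragraph; once the split into $\tau\leq 4|v|$ and $\tau>4|v|$ is in place and the transversality condition $|\omega\cdot e_1|<1$ has been used to extract $|w_\perp|\geq c|v|$, everything downstream is elementary one-dimensional integration. A minor technical point is that the uniformity in $v$ is really uniformity for $|v|$ sufficiently large, which is consistent with the intended application as $|v|\rightarrow\infty$; bounded values of $|v|$ can be absorbed into $C$.
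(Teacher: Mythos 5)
Your overall strategy is the right one and, as far as one can tell, the same as in the proof the paper delegates to \cite[Lemma 4.2]{AdFuIs}: since $M_{{\rm D},v}(t)=e^{-{\rm i}g(t,p)}$ is a function of $p$ alone, conjugation gives the exact identities $M_{{\rm D},v}(t)^*x_jM_{{\rm D},v}(t)=x_j+(\partial_{p_j}g)(t,p)$ and its iterate, so everything reduces to pointwise bounds on $\partial_{p_j}g$ and $\partial_{p_j}\partial_{p_k}g$ on $\{|\xi|<\eta\}$; these follow from the lower bound $\langle p\tau+v\tau+e_1\tau^2/2\rangle\geqslant c\max(|v||\tau|,\tau^2)$ (your derivation of it is correct, with the harmless proviso $|v|\gtrsim\eta$ that you acknowledge) and a split of the $\tau$-integral at $|\tau|\sim|v|$. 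For $1/4<\hat{\gamma}_{\rm D}<1/2$ the one-dimensional integrations do produce $O(1+|t|^{1-2\hat{\gamma}_{\rm D}})$ for the first and second derivatives of $g$ and hence \eqref{eq13_1}, and the product term $(\partial_{p_j}g)(\partial_{p_k}g)$ gives the extra $|t|^{2-4\hat{\gamma}_{\rm D}}$ in \eqref{eq13_2}.

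The gap is at the endpoint $\hat{\gamma}_{\rm D}=1/2$, which the class $\hat{\mathscr{V}}_{\rm D}^{\rm l}$ (and $\mathscr{V}_{\rm D}^{\rm l}$) explicitly allows. There your ``routine integration'' does not deliver the claimed bound: on $|\tau|\geqslant 4|v|$ the hypothesis $|\nabla_xV^{\rm l}|\leqslant C\langle\cdot\rangle^{-\hat{\gamma}_{\rm D}-1/2}$ together with $\langle\cdot\rangle\gtrsim\tau^2$ gives the integrand $\tau\cdot\tau^{-2\hat{\gamma}_{\rm D}-1}=\tau^{-1}$, so $\int_{4|v|}^{|t|}$ is $\log(|t|/4|v|)$ rather than $O(1)=O(|t|^{1-2\hat{\gamma}_{\rm D}})$; your method then yields only $C(1+\log(2+|t|))$ in \eqref{eq13_1}. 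This is not a removable artifact of the estimate: for instance $V^{\rm l}(x)=\langle x\rangle^{-1/2}\arctan\bigl(x_2\langle x\rangle^{-1/2}\bigr)$ satisfies \eqref{long_range_dollard_wide} with $\hat{\gamma}_{\rm D}=1/2$, while along the trajectory $p\tau+v\tau+e_1\tau^2/2$ the derivative $\partial_{x_2}V^{\rm l}$ is positive of size comparable to $\langle\cdot\rangle^{-1}$, so $\partial_{p_2}g(t,\xi)$ genuinely grows like $\log|t|$ for fixed $v$, and with it $\|x_2M_{{\rm D},v}(t)\Phi_0\|$; the constant-in-$t$ bound asserted at $\hat{\gamma}_{\rm D}=1/2$ therefore cannot come out of these hypotheses. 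You should either carry the logarithmic factor at the endpoint or prove the lemma only for $\hat{\gamma}_{\rm D}<1/2$ and note that this suffices downstream (a potential with $\hat{\gamma}_{\rm D}=1/2$ lies in the class for every $\hat{\gamma}'<1/2$, and the uses in Propositions \ref{prop15}, \ref{prop16} and Corollary \ref{cor17} tolerate taking $\hat{\gamma}'$ slightly below $1/2$, respectively $\gamma'\in(3/8,1/2)$); as written, the assertion that the stated power bound holds ``uniformly'' over the whole range $1/4<\hat{\gamma}_{\rm D}\leqslant1/2$ is not justified by your computation.
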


\begin{Prop}\label{prop14}
Let $v$ and $\Phi_v$ be as in Theorem \ref{the8}. Then
\begin{equation}
\int_{-\infty}^\infty\|V^{\rm vs}(x)e^{-{\rm i}tH_0^{\rm S}}M_{\rm D}(t)\Phi_v\|dt=O(|v|^{-1})
\end{equation}
holds as $|v|\rightarrow\infty$ for $V^{\rm vs}\in\mathscr{V}^{\rm vs}$ and $V^{\rm l}\in\hat{\mathscr{V}}_{\rm D}^{\rm l}$.
\end{Prop}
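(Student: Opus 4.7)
The plan is to mirror the proof of Proposition \ref{prop10} (the analogous statement for $V^{\rm l}\in\mathscr{V}_{\rm G}^{\rm l}$), substituting Lemma \ref{lem13} for Lemma \ref{lem9}. The new feature is that the weights $\|\langle x\rangle^k M_{{\rm D},v}(t)\Phi_0\|$ now grow polynomially in $t$ instead of being $O(1)$, so the quadratic-in-$t$ growth of the classical trajectory $y(t)=vt+e_1t^2/2$ must be exploited for large $|t|$.

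As at the start of the proof of Proposition \ref{prop11}, apply the Avron--Herbst formula \eqref{avron_herbst} and the translation identity \eqref{eq1_1} to rewrite the integrand as $\|V^{\rm vs}(x+y(t))e^{-{\rm i}t|p|^2/2}M_{{\rm D},v}(t)\Phi_0\|$. The estimate \eqref{eq1_6} gives $|y(t)|\geqslant c|v||t|$ for all $t$ with $c=\sqrt{1-\delta^2}$ and $\delta=|\omega\cdot e_1|<1$; a similar calculation (cf. \eqref{eq1_16}--\eqref{eq1_17}) shows $|y(t)|\geqslant c't^2$ once $|t|\geqslant C|v|$.

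Next, insert $F(|x+y(t)|\geqslant|y(t)|/2)+F(|x+y(t)|<|y(t)|/2)=1$ between $V^{\rm vs}(x+y(t))$ and $e^{-{\rm i}t|p|^2/2}$. For the first piece, inserting $\langle p\rangle^{-2}\langle p\rangle^2=1$, using that both $e^{-{\rm i}t|p|^2/2}$ and $M_{{\rm D},v}(t)$ are functions of $p$ and commute with $\langle p\rangle^{\pm2}$, and applying the translation $x\mapsto x-y(t)$ (under which $\langle p\rangle^{-2}$ is invariant) produces the bound $\|F(|x|\geqslant|y(t)|/2)V^{\rm vs}(x)\langle p\rangle^{-2}\|\cdot\|\langle p\rangle^2\Phi_0\|$. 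This is exactly the integrand of the Enss condition \eqref{enss_condition2}, and the substitution $R=|y(t)|/2\geqslant c|v||t|/2$ directly yields $O(|v|^{-1})$. For the second piece, decompose $V^{\rm vs}=V_1^{\rm vs}+V_2^{\rm vs}$ and use $F(|x+y(t)|<|y(t)|/2)\leqslant F(|x|\geqslant|y(t)|/2)$; for $V_2^{\rm vs}$ factor out $\|V_2^{\rm vs}\|_\infty$, while for $V_1^{\rm vs}$ commute it past $F$ and invoke its $|p|^2/2$-boundedness. Both cases reduce, via $\|F(|x|\geqslant|y(t)|/2)\langle x\rangle^{-2}\|\leqslant C\langle y(t)\rangle^{-2}$, to an estimate of $\|\langle x\rangle^2 e^{-{\rm i}t|p|^2/2}M_{{\rm D},v}(t)\Phi_0\|$, which the Heisenberg identity \eqref{heisenberg_x} and Lemma \ref{lem13} bound by $C(1+t^2)$.

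Finally, split the $t$-integral at $T=C|v|$: on $|t|\leqslant T$ the bound $|y(t)|\geqslant c|v||t|$ makes the second-piece integrand at most $\langle|v|t\rangle^{-2}(1+t^2)$, which integrates to $O(|v|^{-1})$; on $|t|>T$ the stronger bound $|y(t)|\geqslant c't^2$ gives $\langle t^2\rangle^{-2}(1+t^2)$, whose tail integral is $O(T^{-1})=O(|v|^{-1})$. The main obstacle is the treatment of the singular part $V_1^{\rm vs}$ in the second piece, since its unboundedness rules out pulling out an $L^\infty$ norm. The remedy is to commute $V_1^{\rm vs}(x+y(t))$ past the position weight $\langle x\rangle^2$ (both diagonal in $x$) and apply its $|p|^2/2$-boundedness to transfer $\langle x\rangle^2$ onto $M_{{\rm D},v}(t)\Phi_0$; at that moment Lemma \ref{lem13}---the polynomial weight control that is the cost of enlarging $\mathscr{V}_{\rm D}^{\rm l}$ to $\hat{\mathscr{V}}_{\rm D}^{\rm l}$---supplies exactly the needed bound.
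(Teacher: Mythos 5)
The paper never proves Proposition \ref{prop14} internally---it is quoted from \cite[Lemma 4.3]{AdFuIs}---so your argument is necessarily an independent route; checked on its own terms it is essentially sound, and your endgame (splitting the $t$-integral at $|t|\sim|v|$ and using $|y(t)|\gtrsim|v||t|$ there and $|y(t)|\gtrsim t^{2}$ beyond, cf.\ \eqref{eq1_6}) does give $O(|v|^{-1})$. The difference from the proof the paper points to (the Enss--Weder scheme of \cite{EnWe,We,AdMa,AdFuIs}) is where the spatial partition sits: there one keeps $\langle p\rangle^{-2}$ glued to $V^{\rm vs}$ and inserts the cutoffs on the state side (with $\langle p\rangle^{2}f(p)\in C_0^\infty$ absorbed into the momentum cutoff), so the far region is exactly \eqref{enss_condition1} and the singular part never meets a sharp characteristic function; you instead put the partition directly against $V^{\rm vs}(x+y(t))$, which reproduces \eqref{enss_condition2} in the far piece and works verbatim for $V_2^{\rm vs}$. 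The one step you should sharpen is the near-region treatment of $V_1^{\rm vs}$: relative $|p|^2/2$-boundedness does \emph{not} make $V_1^{\rm vs}\langle x\rangle^{-2}$ or $V_1^{\rm vs}F$ bounded, so after pairing the cutoff with $\langle x\rangle^{-2}$ the quantity you must control is $\|V_1^{\rm vs}(x+y(t))\langle x\rangle^{2}e^{-{\rm i}t|p|^2/2}M_{{\rm D},v}(t)\Phi_0\|$, and the relative bound converts this into $\||p|^{2}\langle x\rangle^{2}e^{-{\rm i}t|p|^2/2}M_{{\rm D},v}(t)\Phi_0\|$ plus lower order---not merely $\|\langle x\rangle^{2}e^{-{\rm i}t|p|^2/2}M_{{\rm D},v}(t)\Phi_0\|$ as your sketch states. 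One then commutes $|p|^{2}$ through $\langle x\rangle^{2}$ (picking up the first-order term $-2{\rm i}(x\cdot p+p\cdot x)$) and through $e^{-{\rm i}t|p|^2/2}M_{{\rm D},v}(t)$, and applies the Heisenberg identity \eqref{heisenberg_x} together with Lemma \ref{lem13} not only to $\Phi_0$ but also to $p_j\Phi_0$ and $|p|^{2}\Phi_0$ (legitimate, since these satisfy the same Fourier-support hypotheses; the cross term in $\langle x+tp\rangle^{2}$ already requires this even for $V_2^{\rm vs}$). All of these are again $O(1+t^{2})$, so your scheme closes; the trade-off is that the cited factorization avoids weighted $H^{2}$-type estimates altogether, while yours is self-contained given Lemma \ref{lem13} and never needs Proposition \ref{prop4} or the Enss propagation bound.
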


The next propagation estimate for $V^{\rm s}$ along the modified time evolution by $M_{\rm D}(t)e^{-{\rm i}H_0^{\rm S}}$ when $V^{\rm l}\in\hat{\mathscr{V}}_{\rm D}^{\rm l}$ is one of the main techniques in this paper, and is also one of the improvements on the previous work.
\begin{Prop}\label{prop15}
Let $v$ and $\Phi_v$ be as in Theorem \ref{the8}. Then
\begin{equation}
\int_{-\infty}^\infty\|\{V^{\rm s}(x)-V^{\rm s}(vt+e_1t^2/2)\}e^{-{\rm i}tH_0^{\rm S}}M_{\rm D}(t)\Phi_v\|dt=O(|v|^{-1})\label{estimate_V^s_long_range_dollard}
\end{equation}
holds as $|v|\rightarrow\infty$ for $V^{\rm s}\in\mathscr{V}^{\rm s}$ and $V^{\rm l}\in\hat{\mathscr{V}}_{\rm D}^{\rm l}$.
\end{Prop}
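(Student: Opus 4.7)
The plan is to mirror the proof of Proposition \ref{prop11} (which itself follows that of Proposition \ref{prop6}), with Lemma \ref{lem13} playing the role of Lemma \ref{lem9} to handle the modifier. As a first step, I would use the Avron--Herbst formula \eqref{avron_herbst} and the intertwining relation \eqref{eq1_1} to reduce the integrand of \eqref{estimate_V^s_long_range_dollard} to
\begin{equation*}
\|\{V^{\rm s}(x+vt+e_1t^2/2)-V^{\rm s}(vt+e_1t^2/2)\}e^{-{\rm i}t|p|^2/2}M_{{\rm D},v}(t)f(p)\Phi_0\|,
\end{equation*}
then split the integral into $|t|<|v|^{-\sigma_1}$ and $|t|\geqslant|v|^{-\sigma_1}$ and insert the same characteristic-function cutoffs used in Proposition \ref{prop6}, producing analogues of $I_1,\ldots,I_6$ with $\Phi_0$ replaced by $M_{{\rm D},v}(t)\Phi_0$ throughout.

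The key observation that preserves the $I_3$- and $I_6$-estimates is that $M_{{\rm D},v}(t)$ is a function of $p$ alone, so $[p,M_{{\rm D},v}(t)]=0$; combining this with \eqref{heisenberg_x} and \eqref{eq13_1} yields
\begin{equation*}
\|xe^{-{\rm i}t|p|^2/2}M_{{\rm D},v}(t)\Phi_0\|\leqslant\|xM_{{\rm D},v}(t)\Phi_0\|+|t|\,\|p\Phi_0\|\leqslant C(1+|t|),
\end{equation*}
exactly the bound \eqref{free_dynamics3} used in Proposition \ref{prop11}, because $1-2\hat{\gamma}_{\rm D}<1$. Consequently, the $I_3$- and $I_6$-estimates, including the H\"older argument of \eqref{eq1_19}--\eqref{eq1_26}, carry over verbatim from Proposition \ref{prop6} with no change of exponents.

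The genuinely new input concerns $I_2$ and $I_5$, whose norming constants now contain $\|\langle x\rangle^2M_{{\rm D},v}(t)\Phi_0\|$, which by \eqref{eq13_2} is $O(1+|t|^{2-4\hat{\gamma}_{\rm D}})$. This introduces additional contributions of the form
\begin{equation*}
\int_0^{|v|^{-\sigma_1}}t^{2-4\hat{\gamma}_{\rm D}}\langle vt\rangle^{-2}dt\quad\text{and}\quad\int_{|v|^{-\sigma_1}}^\infty t^{2-4\hat{\gamma}_{\rm D}}\langle |v|^{\sigma_2}t\rangle^{-2}dt,
\end{equation*}
and a change of variable shows that both converge precisely under the standing hypothesis $\hat{\gamma}_{\rm D}>1/4$, yielding $O(|v|^{-3+4\hat{\gamma}_{\rm D}})$ and $O(|v|^{\sigma_1(4\hat{\gamma}_{\rm D}-1)-2\sigma_2})$ respectively. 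Both are $O(|v|^{-1})$, the first by $\hat{\gamma}_{\rm D}\leqslant1/2$ and the second by the condition $\sigma_1-2\sigma_2\leqslant-1$ already needed in Proposition \ref{prop6}.

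I expect the main obstacle to be nothing more than bookkeeping: verifying that the parameter restrictions on $(\sigma_1,\sigma_2)$ required simultaneously by the $I_6$-H\"older estimate and by the new polynomial-growth estimates on $I_2,I_5$ are mutually compatible. Since $|t|^{2-4\hat{\gamma}_{\rm D}}$ is strictly slower than $|t|^2$ for $\hat{\gamma}_{\rm D}>0$ and the decay factor $\langle vt\rangle^{-2}$ (or $\langle|v|^{\sigma_2}t\rangle^{-2}$) is what drives integrability, the windows $0<\sigma_1<\alpha/(2+3\alpha)$ and $(1+\sigma_1)/2<\sigma_2<1$ used at the end of the proof of Proposition \ref{prop6} continue to suffice, so no additional restriction on $\alpha$ is imposed.
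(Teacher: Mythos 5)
Your proposal is correct and follows essentially the same route as the paper's proof: the Avron--Herbst reduction, the same splitting at $|t|=|v|^{-\sigma_1}$ with the same cutoff insertions, Lemma \ref{lem13} to control the Dollard modifier in the $I_2$, $I_5$ terms (with the $\hat{\gamma}_{\rm D}>1/4$ and $\hat{\gamma}_{\rm D}\leqslant1/2$ conditions entering exactly where you place them), and the H\"older argument for $I_6$. Your only deviation is absorbing the modifier-induced growth $|t|^{1-2\hat{\gamma}_{\rm D}}$ into $C(1+|t|)$ so that the $I_3$/$I_6$ estimates of Proposition \ref{prop6} apply verbatim, whereas the paper tracks that term separately (its $I_{6,2}$ and the reduction to $\alpha+2\hat{\gamma}_{\rm D}<1$); both yield the same final exponents, so this is a harmless streamlining rather than a different method.
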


In \cite[Lemma 4.4]{AdFuIs}, when $\mu=0$ of \eqref{time_dependent}, the estimate of \eqref{estimate_V^s_long_range_dollard} was $O(|v|^{\Theta_{0, {\rm D}}(\alpha)+\epsilon})$ with any small $\epsilon>0$ and
\begin{equation}
\Theta_{0, {\rm D}}(\alpha)=-\alpha-\frac{\alpha(1-\alpha)}{4-3\alpha}.
\end{equation}
The number $(13-\sqrt{41})/16$ in \eqref{alpha_0_D} comes from the inequality $\Theta_{0, {\rm D}}(\alpha)<-1/2$. Our key ideas for this improvement are the efficient use of the propagation estimate of the free Schr\"odinger dynamics and the H\"older inequality as with the proof of Proposition \ref{prop6}.

\begin{proof}[Proof of Proposition \ref{prop15}]
By the formulae \eqref{avron_herbst} and \eqref{eq1_1}, the integrand of \eqref{estimate_V^s_long_range_dollard} is
\begin{equation}
I=\|\{V^{\rm s}(x+vt+e_1t^2/2)-V^{\rm s}(vt+e_1t^2/2)\}e^{-{\rm i}t|p|^2/2}M_{{\rm D}, v}(t)f(p)\Phi_0\|.
\end{equation}
Split the integral \eqref{estimate_V^s_long_range_dollard} such that
\begin{equation}
\int_{-\infty}^\infty Idt=\int_{|t|<|v|^{-\sigma_1}}Idt+\int_{|t|\geqslant|v|^{-\sigma_1}}Idt
\end{equation}
with $0<\sigma_1<1$ which is independent of $t$ and $v$. Its lower and upper bounds are shall be determined at the end of this proof. We first consider the integral on $|t|<|v|^{-\sigma_1}$. Similar to the proof of Proposition \ref{prop6}, put $\lambda=\sqrt{1-\delta^2}/4$ for $\delta=|\omega\cdot e_1|<1$, and we then estimate
\begin{equation}
I\leqslant I_1+I_2+I_3,
\end{equation}
where
\begin{eqnarray}
I_1\Eqn{=}C_1\|F(|x|\geqslant3\lambda|v||t|)e^{-{\rm i}t|p|^2/2}f(p)F(|x|\leqslant\lambda|v||t|)\|,\\
I_2\Eqn{=}C_2\|F(|x|>\lambda|v||t|)\langle x\rangle^{-2}\|\|\langle x\rangle^2M_{{\rm D}, v}(t)\Phi_0\|,\\
I_3\Eqn{=}\|\{V^{\rm s}(x+vt+e_1t^2/2)-V^{\rm s}(vt+e_1t^2/2)\}\nonumber\\
\Eqn{}\hspace{20mm}\times F(|x|<3\lambda|v||t|)e^{-{\rm i}t|p|^2/2}M_{{\rm D}, v}(t)\Phi_0\|
\end{eqnarray}
with $C_1=2\|V^{\rm s}\|\|\Phi_0\|$ and $C_2=2\|V^{\rm s}\|$. $I_1$ has the same shape in the proof of Proposition \ref{prop6}. Therefore, by using Proposition \ref{prop4} for $N=2$ with $\lambda|v|\geqslant\eta$, we have
\begin{equation}
\int_{|t|<|v|^{-\sigma_1}}I_1dt=O(|v|^{-1}).\label{eq15_1}
\end{equation}
$I_2$ is estimated by \eqref{eq13_2} of Lemma \ref{lem13}, we therefore have
\begin{equation}
I_2\leqslant C\langle vt\rangle^{-2}(1+|t|^{1-2\hat{\gamma}_{\rm D}}+|t|^{2-4\hat{\gamma}_{\rm D}}).\label{eq15_2}
\end{equation}
Recall the condition $1/4<\hat{\gamma}_{\rm D}\leqslant1/2$, and we then compute
\begin{gather}
\int_0^{|v|^{-\sigma_1}}\langle vt\rangle^{-2}t^{1-2\hat{\gamma}_{\rm D}}dt=|v|^{2\hat{\gamma}_{\rm D}-2}\int_0^{|v|^{1-\sigma_1}}\langle\tau\rangle^{-2}\tau^{1-2\hat{\gamma}_{\rm D}}d\tau=O(|v|^{2\hat{\gamma}_{\rm D}-2}),\label{eq15_3}\\
\int_0^{|v|^{-\sigma_1}}\langle vt\rangle^{-2}t^{2-4\hat{\gamma}_{\rm D}}dt=|v|^{4\hat{\gamma}_{\rm D}-3}\int_0^{|v|^{1-\sigma_1}}\langle\tau\rangle^{-2}\tau^{2-4\hat{\gamma}_{\rm D}}d\tau=O(|v|^{4\hat{\gamma}_{\rm D}-3}).\label{eq15_4}
\end{gather}
we also have, by \eqref{eq1_11}, \eqref{eq15_3}, and \eqref{eq15_4},
\begin{equation}
\int_{|t|<|v|^{-\sigma_1}}I_2dt=O(|v|^{-1}).\label{eq15_5}
\end{equation}
By the same computation of \eqref{eq1_5}, $I_3$ is
\begin{equation}
I_3\leqslant\int_0^1\|(\nabla_x V^{\rm s})(\theta x+vt+e_1t^2/2)F(|x|<3\lambda|v||t|)\|d\theta\times\|xe^{-{\rm i}t|p|^2/2}M_{{\rm D}, v}(t)\Phi_0\|.\label{eq15_6}
\end{equation}
We here note that
\begin{equation}
\|xe^{-{\rm i}t|p|^2/2}M_{{\rm D}, v}(t)\Phi_0\|=\|(x+pt)M_{{\rm D}, v}(t)\Phi_0\|\leqslant C(1+|t|^{1-2\hat{\gamma}_{\rm D}}+|t|)\label{free_dynamics4}
\end{equation}
holds by \eqref{heisenberg_x} and \eqref{eq13_1}. From \eqref{eq1_8}, \eqref{eq15_6}, and \eqref{free_dynamics4}, $I_3$ is estimated such that
\begin{equation}
I_3\leqslant C\langle vt\rangle^{-1-\alpha}(1+|t|^{1-2\hat{\gamma}_{\rm D}}+|t|).
\end{equation}
To compute the following integral
\begin{equation}
\int_0^{|v|^{-\sigma_1}}\langle vt\rangle^{-1-\alpha}t^{1-2\hat{\gamma}_{\rm D}}dt\leqslant|v|^{-1-\alpha}\int_0^{|v|^{-\sigma_1}}t^{-\alpha-2\hat{\gamma}_{\rm D}}dt=O(|v|^{-1-\alpha}),\label{eq15_7}
\end{equation}
we can assume that $\alpha+2\hat{\gamma}_{\rm D}<1$ without loss of generality. Therefore, we have, by \eqref{eq1_10}, \eqref{eq1_11} and \eqref{eq15_7},
\begin{equation}
\int_{|t|<|v|^{-\sigma_1}}I_3dt=O(|v|^{-1}).\label{eq15_8}
\end{equation}
We next consider the integral on $|t|\geqslant|v|^{-\sigma_1}$. As in the proof of Proposition \ref{prop6}, take $0<\sigma_2<1$ which is also independent of $t$ and $v$. Its lower and upper bounds shall be determined at the end of this proof. We also put $I_4$, $I_5$ and $I_6$ such that
\begin{equation}
I\leqslant I_4+I_5+I_6,
\end{equation}
where
\begin{eqnarray}
I_4\Eqn{=}C_1\|F(|x|\geqslant3|v|^{\sigma_2}|t|)e^{-{\rm i}t|p|^2/2}f(p)F(|x|\leqslant|v|^{\sigma_2}|t|)\|,\\
I_5\Eqn{=}C_2\|F(|x|>|v|^{\sigma_2}|t|)\langle x\rangle^{-2}\|\|\langle x\rangle^2M_{{\rm D}, v}(t)\Phi_0\|,\\
I_6\Eqn{=}\|\{V^{\rm s}(x+vt+e_1t^2/2)-V^{\rm s}(vt+e_1t^2/2)\}\nonumber\\
\Eqn{}\hspace{20mm}\times F(|x|<3|v|^{\sigma_2}|t|)e^{-{\rm i}t|p|^2/2}M_{{\rm D}, v}(t)\Phi_0\|.
\end{eqnarray}
$I_4$ is the same as in the proof of Proposition \ref{prop6}, and we thus have
\begin{equation}
\int_{|t|\geqslant|v|^{-\sigma_1}}I_4dt=O(|v|^{\sigma_1-2\sigma_2})\label{eq15_9}
\end{equation}
by using Proposition \ref{prop4} for $N=2$ with $|v|^{\sigma_2}\geqslant\eta$. Similar to \eqref{eq15_2}, with \eqref{eq13_2} of Lemma \ref{lem13}, $I_5$ is
\begin{equation}
I_5\leqslant C\langle|v|^{\sigma_2}t\rangle^{-2}(1+|t|^{1-2\hat{\gamma}_{\rm D}}+|t|^{2-4\hat{\gamma}_{\rm D}}).
\end{equation}
We compute the following integrals
\begin{gather}
\int_{|v|^{-\sigma_1}}^\infty\langle|v|^{\sigma_2}t\rangle^{-2}t^{1-2\hat{\gamma}_{\rm D}}dt\leqslant|v|^{-2\sigma_2}\int_{|v|^{-\sigma_1}}^\infty t^{-1-2\hat{\gamma}_{\rm D}}dt=O(|v|^{2\sigma_1\hat{\gamma}_{\rm D}-2\sigma_2})\label{eq15_10},\\
\int_{|v|^{-\sigma_1}}^\infty\langle|v|^{\sigma_2}t\rangle^{-2}t^{2-4\hat{\gamma}_{\rm D}}dt\leqslant|v|^{-2\sigma_2}\int_{|v|^{-\sigma_1}}^\infty t^{-4\hat{\gamma}_{\rm D}}dt=O(|v|^{\sigma_1(4\hat{\gamma}_{\rm D}-1)-2\sigma_2})\label{eq15_11}.
\end{gather}
Here $\hat{\gamma}_{\rm D}>1/4$ was used in \eqref{eq15_11}. Note that $4\hat{\gamma}_{\rm D}-1<2\hat{\gamma}_{\rm D}<1$ holds for $1/4<\hat{\gamma}_{\rm D}<1/2$ and, with the computations of \eqref{eq1_14}, \eqref{eq15_10}, and \eqref{eq15_11}, we then have
\begin{equation}
\int_{|t|\geqslant|v|^{-\sigma_1}}I_5dt=O(|v|^{\sigma_1-2\sigma_2}).\label{eq15_12}
\end{equation}
As in \eqref{eq15_6}, $I_6$ is
\begin{equation}
I_6\leqslant\int_0^1\|(\nabla_x V^{\rm s})(\theta x+vt+e_1t^2/2)F(|x|<3|v|^{\sigma_2}|t|)\|d\theta\times\|xe^{-{\rm i}t|p|^2/2}M_{{\rm D}, v}(t)\Phi_0\|.\label{eq15_13}
\end{equation}
Put $I_{6,1}$ and $I_{6.2}$ such that, with the estimate \eqref{free_dynamics4},
\begin{equation}
I_6\leqslant I_{6,1}+I_{6,2},
\end{equation}
where
\begin{eqnarray}
I_{6,1}\Eqn{=}C\int_0^1\|(\nabla_x V^{\rm s})(\theta x+vt+e_1t^2/2)F(|x|<3|v|^{\sigma_2}|t|)\|d\theta\times(1+|t|),\\
I_{6,2}\Eqn{=}C\int_0^1\|(\nabla_x V^{\rm s})(\theta x+vt+e_1t^2/2)F(|x|<3|v|^{\sigma_2}|t|)\|d\theta\times|t|^{1-2\hat{\gamma}_{\rm D}}.\qquad
\end{eqnarray}
The estimate of $I_{6,1}$ was already obtained in \eqref{eq1_26} in the proof of Proposition \ref{prop6}, we thus have
\begin{equation}
\int_{|t|\geqslant|v|^{-\sigma_1}}I_{6,1}dt=O(|v|^{-1-\alpha+\sigma_1(2+3\alpha)}).\label{eq15_14}
\end{equation}
As for $I_{6,2}$, by using the H\"older inequality again, we estimate the integral of $I_{6,2}$ such that
\begin{equation}
\int_{|v|^{-\sigma_1}}^\infty I_{6,2}dt\leqslant\left(\int_{|v|^{-\sigma_1}}^\infty I_{6,2}^{q_1}dt\right)^{1/q_1}\left(\int_{|v|^{-\sigma_1}}^\infty I_{6,2}^{q_2}dt\right)^{1/q_2},\label{eq15_15}
\end{equation}
where $q_1$ and $q_2$ are the H\"older conjugates of each other. From the estimate
\begin{equation}
|x+vt+e_1t^2/2|\geqslant c_1|v||t|
\end{equation}
of \eqref{eq1_18} and assumption \eqref{short_range}, we compute
\begin{eqnarray}
\lefteqn{\int_{|v|^{-\sigma_1}}^\infty I_{6,2}^{q_1}dt\leqslant C\int_{|v|^{-\sigma_1}}^\infty\langle vt\rangle^{-q_1(1+\alpha)}t^{q_1(1-2\hat{\gamma}_{\rm D})}dt}\nonumber\\
\Eqn{}\leqslant C|v|^{-q_1(1+\alpha)}\int_{|v|^{-\sigma_1}}^\infty t^{-q_1(\alpha+2\hat{\gamma}_{\rm D})}dt=O(|v|^{-q_1(1+\alpha)-\sigma_1\{1-q_1(\alpha+2\hat{\gamma}_{\rm D})\}}).\quad\label{eq15_16}
\end{eqnarray}
Although we assumed that $\alpha+2\hat{\gamma}_{\rm D}<1$ in the estimate \eqref{eq15_7}, we choose $q_1>1$ which satisfies $q_1(\alpha+2\hat{\gamma}_{\rm D})>1$ in \eqref{eq15_16}. Again, from
\begin{equation}
|x+vt+e_1t^2/2|\geqslant c_2t^2
\end{equation}
of \eqref{eq1_18} and \eqref{short_range}, we compute
\begin{equation}
\int_{|v|^{-\sigma_1}}^\infty I_{6,2}^{q_2}dt\leqslant C\int_{|v|^{-\sigma_1}}^\infty\langle t^2\rangle^{-q_2(1+\alpha)}t^{q_2(1-2\hat{\gamma}_{\rm D})}dt=O(|v|^{-\sigma_1\{1-q_2(1+2\alpha+2\hat{\gamma}_{\rm D})\}}).\label{eq15_17}
\end{equation}
We here used $-2q_2(1+\alpha)+q_2(1-2\hat{\gamma}_{\rm D})=-q_2(1+2\alpha+2\hat{\gamma}_{\rm D})<-1$. By \eqref{eq15_15}, \eqref{eq15_16}, \eqref{eq15_17}, and
\begin{gather}
-1-\alpha-\sigma_1\{1-q_1(\alpha+2\hat{\gamma}_{\rm D})\}/q_1-\sigma_1\{1-q_2(1+2\alpha+2\hat{\gamma}_{\rm D})\}/q_2\nonumber\\
=-1-\alpha+\sigma_1(3\alpha+4\hat{\gamma}_{\rm D}),
\end{gather}
we have
\begin{equation}
\int_{|t|\geqslant|v|^{-\sigma_1}}I_{6,2}dt=O(|v|^{-1-\alpha+\sigma_1(3\alpha+4\hat{\gamma}_{\rm D})}).\label{eq15_18}
\end{equation}
Because $\hat{\gamma}_{\rm D}<1/2$, \eqref{eq15_14} and \eqref{eq15_18} state that
\begin{equation}
\int_{|t|\geqslant|v|^{-\sigma_1}}I_6dt=O(|v|^{-1-\alpha+\sigma_1(2+3\alpha)}).\label{eq15_19}
\end{equation}
By combining \eqref{eq15_1}, \eqref{eq15_5}, \eqref{eq15_8}, \eqref{eq15_9}, \eqref{eq15_12}, and \eqref{eq15_19}, we finally obtain
\begin{equation}
\int_{-\infty}^\infty Idt=O(|v|^{-1})+O(|v|^{\sigma_1-2\sigma_2})+O(|v|^{-1-\alpha+\sigma_1(2+3\alpha)}).\label{eq15_20}
\end{equation}
As in the proof of Proposition \ref{prop6}, by determining the size of $\sigma_1$ and $\sigma_2$ such that $0<\sigma_1<\alpha/(2+3\alpha)$ and $(1+\sigma_1)/2<\sigma_2<1$, \eqref{eq15_20} completes our proof.
\end{proof}

The following propagation estimate for the long-range part $V^{\rm l}\in\hat{\mathscr{V}}_{\rm D}^{\rm l}$ was given in \cite[Lemma 4.5]{AdFuIs} for $\mu=0$.
\begin{Prop}\label{prop16}
Let $v$ and $\Phi_v$ be as in Theorem \ref{the8}. Then
\begin{equation}
\int_{-\infty}^\infty\|\{V^{\rm l}(x)-V^{\rm l}(pt-e_1t^2/2)\}e^{-{\rm i}tH_0^{\rm S}}M_{\rm D}(t)\Phi_v\|dt=O(|v|^{1-4\hat{\gamma}_{\rm D}+\epsilon_2})\label{estimate_V^l_long_range_dollard}
\end{equation}
holds with any small $\epsilon_2>0$ as $|v|\rightarrow\infty$ for $V^{\rm l}\in\hat{\mathscr{V}}_{\rm D}^{\rm l}$.
\end{Prop}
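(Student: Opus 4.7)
The plan is to follow the strategy of Proposition \ref{prop15}, adapted to the weaker decay of $V^{\rm l}\in\hat{\mathscr{V}}_{\rm D}^{\rm l}$ and to the fact that the centering term $V^{\rm l}(pt-e_1t^2/2)$ is a Fourier multiplier rather than a scalar. The first step is a unitary reduction: using the Heisenberg evolutions $e^{\pm{\rm i}tH_0^{\rm S}}pe^{\mp{\rm i}tH_0^{\rm S}}=p\pm e_1t$ (which yield $V^{\rm l}(pt-e_1t^2/2)e^{-{\rm i}tH_0^{\rm S}}=e^{-{\rm i}tH_0^{\rm S}}V^{\rm l}(pt+e_1t^2/2)$) and $e^{{\rm i}tH_0^{\rm S}}xe^{-{\rm i}tH_0^{\rm S}}=x+pt+e_1t^2/2$, together with the commutation $[V^{\rm l}(pt+e_1t^2/2),M_{\rm D}(t)]=0$ and the conjugation \eqref{eq1_1} by $e^{{\rm i}v\cdot x}$, the integrand of \eqref{estimate_V^l_long_range_dollard} is unitarily equivalent to
\[
I=\|\{V^{\rm l}(x+pt+vt+e_1t^2/2)-V^{\rm l}(pt+vt+e_1t^2/2)\}M_{{\rm D},v}(t)\Phi_0\|.
\]
A second application of the identity $V^{\rm l}(x+pt+c)=e^{{\rm i}t|p|^2/2}V^{\rm l}(x+c)e^{-{\rm i}t|p|^2/2}$ (together with the commutation of $V^{\rm l}(pt+vt+e_1t^2/2)$ with $e^{{\rm i}t|p|^2/2}$ and $M_{{\rm D},v}(t)$) then replaces the operator argument $x+pt$ by the pure position $x$, at the cost of inserting one more $e^{-{\rm i}t|p|^2/2}$ to act on the state.

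Next I would Taylor-expand the resulting position-space difference of $V^{\rm l}$, which at the principal level produces $\int_0^1 x\cdot(\nabla V^{\rm l})(\theta x+\cdots)\,d\theta$; the Baker--Campbell--Hausdorff correction arising from $[\xi\cdot x,\xi\cdot(pt+Y)]={\rm i}t|\xi|^2$ contributes only a lower-order $t\Delta V^{\rm l}$-type term that is controlled by the $C^2$-decay \eqref{long_range_dollard_wide}. From here the scheme of Proposition \ref{prop15} transfers verbatim: split $\int Idt=\int_{|t|<|v|^{-\sigma_1}}+\int_{|t|\geq|v|^{-\sigma_1}}$, insert the cutoffs $F(|x|<\lambda|v||t|)$ (respectively $F(|x|<|v|^{\sigma_2}|t|)$) and their complements, handle the velocity-mismatch tails by Proposition \ref{prop4}, exploit the geometric lower bound \eqref{eq1_18} to obtain $|\nabla V^{\rm l}|\leq C\min(|v||t|,t^2)^{-\hat{\gamma}_{\rm D}-1/2}$ on the localized-$x$ regions, and control $\|xe^{-{\rm i}t|p|^2/2}M_{{\rm D},v}(t)\Phi_0\|\leq C(1+|t|^{1-2\hat{\gamma}_{\rm D}}+|t|)$ via \eqref{heisenberg_x} and \eqref{eq13_1} of Lemma \ref{lem13}. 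Finally, H\"older's inequality as in \eqref{eq15_15} trades the $|v||t|$-decay of $\nabla V^{\rm l}$ against its $t^2$-decay, with exponents $q_1,q_2$ chosen so that both $q_1$-type and $q_2$-type integrals converge.

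The main obstacle is the careful bookkeeping of the exponents. Each application of the weight $|t|^{1-2\hat{\gamma}_{\rm D}}$ from Lemma \ref{lem13} costs $2\hat{\gamma}_{\rm D}$ relative to the short-range case, and the H\"older pairing between the $|v||t|$- and $t^2$-branches effectively doubles this loss, converting the short-range exponent $-1$ of Proposition \ref{prop15} into the announced $1-4\hat{\gamma}_{\rm D}$. The condition $\hat{\gamma}_{\rm D}>1/4$ is exactly what makes the resulting $\int t^{-4\hat{\gamma}_{\rm D}}dt$-type integrals (analogous to \eqref{eq15_11}) convergent at infinity, and the $\epsilon_2>0$ loss is forced by taking $\sigma_1>0$ in the H\"older optimization, just as in \eqref{eq15_20}. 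A subsidiary technical point is the rigorous justification of the Taylor expansion in the non-commuting operator $x+pt$, which is done via the BCH disentanglement $e^{{\rm i}\xi\cdot(x+pt+Y)}=e^{{\rm i}\xi\cdot x}e^{{\rm i}\xi\cdot(pt+Y)}e^{{\rm i}t|\xi|^2/2}$ together with the Fourier representation of $V^{\rm l}$, whose commutator-correction phase $e^{{\rm i}t|\xi|^2/2}$ is absorbed into the lower-order term.
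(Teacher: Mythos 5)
Your reduction in the first step is correct (and is indeed how one starts), but the sketch then breaks down at the decisive quantitative point, so there is a genuine gap. After your second conjugation by $e^{{\rm i}t|p|^2/2}$ you propose to control the increment through the weight $\|xe^{-{\rm i}t|p|^2/2}M_{{\rm D},v}(t)\Phi_0\|\leqslant C(1+|t|^{1-2\hat{\gamma}_{\rm D}}+|t|)$, i.e.\ \eqref{free_dynamics4} with its linear term. Against the long-range decay $|\nabla V^{\rm l}|\leqslant C\langle\cdot\rangle^{-\hat{\gamma}_{\rm D}-1/2}$ this cannot close: even where the stronger branch $\langle t^2\rangle^{-\hat{\gamma}_{\rm D}-1/2}$ of \eqref{eq1_18} is available, the $|t|$-weighted part of the integrand is of size $t^{-2\hat{\gamma}_{\rm D}}$, which is not integrable at infinity for $\hat{\gamma}_{\rm D}\leqslant1/2$, and no H\"older pairing of the $|v||t|$- and $t^2$-branches repairs this, because the weight multiplies both branches (this is exactly why Proposition \ref{prop15} works only because $\nabla V^{\rm s}$ decays with exponent $1+\alpha>1$). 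The whole point of centering at the operator $V^{\rm l}(pt-e_1t^2/2)$ is that in the Heisenberg picture the two arguments differ by $x$ alone, so the correct weight is \eqref{eq13_1}, $C(1+|t|^{1-2\hat{\gamma}_{\rm D}})$, with no linear term; with that weight the computation $\int\min(\langle vt\rangle,\langle t^2\rangle)^{-\hat{\gamma}_{\rm D}-1/2}(1+|t|^{1-2\hat{\gamma}_{\rm D}})dt$ does produce $|v|^{1-4\hat{\gamma}_{\rm D}}$ (with $\hat{\gamma}_{\rm D}>1/4$ for convergence). Your step 2 discards precisely this structure, and it also leaves the difference $V^{\rm l}(x+c)-V^{\rm l}(pt+c)$ of a position function and a momentum function, which cannot be Taylor-expanded ``in $x$'' at all; your BCH discussion refers to the step-1 form, so the argument is internally inconsistent exactly where the gain from the Dollard centering has to be exploited.

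Even in the step-1 form the expansion in the non-commuting increment is where the real work lies, and the sketch is too thin there: $\mathscr{F}V^{\rm l}$ is only a tempered distribution for $V^{\rm l}\in\hat{\mathscr{V}}_{\rm D}^{\rm l}$, and after the BCH disentanglement the correction term is not ``$t\Delta V^{\rm l}$ along the trajectory'' but a freely evolved $\Delta V^{\rm l}$ (the phase $e^{{\rm i}t|\xi|^2/2}$ remains inside the $\xi$-integral), whose pointwise decay is not controlled by \eqref{long_range_dollard_wide}; similarly, $(\nabla V^{\rm l})(\theta x+pt+vt+e_1t^2/2)$ is a function of commuting self-adjoint operators mixing $x$ and $p$, so the decay $\max(|v||t|,t^2)^{-\hat{\gamma}_{\rm D}-1/2}$ cannot be read off from the multiplication-operator bound \eqref{eq1_18} together with an $x$-cutoff; one needs a spectral-localization estimate in the spirit of Proposition \ref{prop4}. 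Note finally that the paper does not prove Proposition \ref{prop16} at all: it quotes it from \cite[Lemma 4.5]{AdFuIs} with $\mu=0$. So you are attempting more than the text does, but to make it work you should stay with the step-1 expression, replace \eqref{free_dynamics4} by \eqref{eq13_1}, and justify the non-commutative Taylor expansion rigorously, or else follow the cited proof.
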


For $V^{\rm l}\in\mathscr{V}_{\rm G}^{\rm l}\cup\hat{\mathscr{V}}_{\rm D}^{\rm l}$, we introduce the auxiliary 
Dollard-Graf-type modified wave operators
\begin{equation}
\Omega_{{\rm D},{\rm G},v}^\pm=\slim_{t\rightarrow\pm\infty}e^{{\rm i}tH^{\rm S}}U_{{\rm D},{\rm G},v}(t),\quad U_{{\rm D},{\rm G},v}(t)=e^{-{\rm i}tH_0^{\rm S}}M_{\rm D}(t)M_{{\rm G},v}(t)\label{dollard_garf_modifier}
\end{equation}
according to \cite{AdKaKaTo} and \cite{AdFuIs}. Because the Dollard-type modified wave operators \eqref{wave_operators_dollard} and limits \eqref{I_G_v} exist, we see that $\Omega_{{\rm D},{\rm G},v}^\pm=W_{\rm D}^\pm I_{{\rm G},v}^\pm(t)$.\par

By virtue of Propositions \ref{prop10}, \ref{prop11}, \ref{prop12}, \ref{prop14}, \ref{prop15}, and \ref{prop16}, the following corollary is proved as in Corollary \ref{cor7}. We therefore omit its proof.
\begin{Cor}\label{cor17}
Let $v$ and $\Phi_v$ be as in Theorem \ref{the8}. Then, for $V^{\rm vs}\in\mathscr{V}^{\rm vs}$, $V^{\rm s}\in\mathscr{V}^{\rm s}$, and $V^{\rm l}\in\mathscr{V}_{\rm G}^{\rm l}\cup\hat{\mathscr{V}}_{\rm D}^{\rm l}$,
\begin{equation}
\|\{e^{-{\rm i}tH^{\rm S}}\Omega_{{\rm D},{\rm G},v}^\pm-U_{{\rm D},{\rm G},v}(t)\}\Phi_v\|=
\begin{cases}
\ O(|v|^{-1/2-\epsilon_1}) & \mbox{\rm if}\ V^{\rm l}\in\mathscr{V}_{\rm G}^{\rm l},\\
\ O(|v|^{1-4\hat{\gamma}_{\rm D}+\epsilon_2}) & \mbox{\rm if}\ V^{\rm l}\in\hat{\mathscr{V}}_{\rm D}^{\rm l}\\
\end{cases}
\end{equation}
holds as $|v|\rightarrow\infty$ uniformly in $t\in\mathbb{R}$, where $\epsilon_1$ and $\epsilon_2$ are taken in Propositions \ref{prop11} and \ref{prop16} respectively.
\end{Cor}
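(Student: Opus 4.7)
The plan is to apply Cook's method and reduce the estimate to the propagation bounds already obtained in Propositions \ref{prop10}--\ref{prop12} (in the case $V^{\rm l}\in\mathscr{V}_{\rm G}^{\rm l}$) or Propositions \ref{prop14}--\ref{prop16} (in the case $V^{\rm l}\in\hat{\mathscr{V}}_{\rm D}^{\rm l}$), exactly as Corollary \ref{cor7} was deduced from Propositions \ref{prop5} and \ref{prop6}. Since $\Omega_{{\rm D},{\rm G},v}^\pm=\slim_{s\rightarrow\pm\infty}e^{{\rm i}sH^{\rm S}}U_{{\rm D},{\rm G},v}(s)$ and $e^{-{\rm i}tH^{\rm S}}$ is unitary, the fundamental theorem of calculus yields
\begin{equation*}
\|\{e^{-{\rm i}tH^{\rm S}}\Omega_{{\rm D},{\rm G},v}^\pm-U_{{\rm D},{\rm G},v}(t)\}\Phi_v\|\leqslant\int_{-\infty}^{\infty}\Big\|\frac{d}{ds}[e^{{\rm i}sH^{\rm S}}U_{{\rm D},{\rm G},v}(s)]\Phi_v\Big\|\,ds,
\end{equation*}
which is already uniform in $t\in\mathbb{R}$, so it remains to control the integrand.

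To compute the $s$-derivative, I use that $M_{{\rm G},v}(s)$ is a scalar with $|M_{{\rm G},v}(s)|=1$ and $\frac{d}{ds}M_{{\rm G},v}(s)=-{\rm i}V^{\rm s}(vs+e_1s^2/2)M_{{\rm G},v}(s)$, while $M_{\rm D}(s)$ is a function of $p$ with $\frac{d}{ds}M_{\rm D}(s)=-{\rm i}V^{\rm l}(ps+e_1s^2/2)M_{\rm D}(s)$. The product rule combined with $H^{\rm S}-H_0^{\rm S}=V^{\rm vs}+V^{\rm s}+V^{\rm l}$ gives
\begin{align*}
\frac{d}{ds}[e^{{\rm i}sH^{\rm S}}U_{{\rm D},{\rm G},v}(s)]\Phi_v={\rm i}e^{{\rm i}sH^{\rm S}}M_{{\rm G},v}(s)\Bigl\{&[V^{\rm vs}(x)+V^{\rm s}(x)-V^{\rm s}(vs+e_1s^2/2)]e^{-{\rm i}sH_0^{\rm S}}M_{\rm D}(s)\Phi_v\\
&+[V^{\rm l}(x)e^{-{\rm i}sH_0^{\rm S}}-e^{-{\rm i}sH_0^{\rm S}}V^{\rm l}(ps+e_1s^2/2)]M_{\rm D}(s)\Phi_v\Bigr\}.
\end{align*}
The key algebraic step is to identify the long-range bracket with the operator estimated in Propositions \ref{prop12} and \ref{prop16}. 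The Avron--Herbst formula \eqref{avron_herbst} implies the Heisenberg identity $e^{{\rm i}sH_0^{\rm S}}pe^{-{\rm i}sH_0^{\rm S}}=p+e_1s$, whence
\begin{equation*}
V^{\rm l}(ps-e_1s^2/2)e^{-{\rm i}sH_0^{\rm S}}=e^{-{\rm i}sH_0^{\rm S}}V^{\rm l}((p+e_1s)s-e_1s^2/2)=e^{-{\rm i}sH_0^{\rm S}}V^{\rm l}(ps+e_1s^2/2),
\end{equation*}
so the long-range bracket equals $\{V^{\rm l}(x)-V^{\rm l}(ps-e_1s^2/2)\}e^{-{\rm i}sH_0^{\rm S}}M_{\rm D}(s)$, exactly the quantity whose time integral is controlled by Propositions \ref{prop12} and \ref{prop16}.

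Taking norms and using $|M_{{\rm G},v}(s)|=1$, the upper bound reduces to the sum of the three integrals controlled by Proposition \ref{prop10} (resp.\ \ref{prop14}), Proposition \ref{prop11} (resp.\ \ref{prop15}), and Proposition \ref{prop12} (resp.\ \ref{prop16}). For $V^{\rm l}\in\mathscr{V}_{\rm G}^{\rm l}$ this gives $O(|v|^{-1})+O(|v|^{-1})+O(|v|^{-1/2-\epsilon_1})$, and for $V^{\rm l}\in\hat{\mathscr{V}}_{\rm D}^{\rm l}$ it gives $O(|v|^{-1})+O(|v|^{-1})+O(|v|^{1-4\hat{\gamma}_{\rm D}+\epsilon_2})$; in each case the long-range contribution dominates (since $\hat{\gamma}_{\rm D}\leqslant1/2$ implies $1-4\hat{\gamma}_{\rm D}+\epsilon_2>-1$ for small $\epsilon_2$) and yields the claimed rate. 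The only nontrivial step is the Heisenberg identity that reconciles the Cook integrand with the precise form appearing in Propositions \ref{prop12} and \ref{prop16}; every other piece is a direct invocation of previously established results, in exactly the spirit of Corollary \ref{cor7}, so the proof can legitimately be omitted.
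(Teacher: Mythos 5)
Your proposal is correct and follows essentially the same route the paper intends: Cook's method applied to $e^{{\rm i}sH^{\rm S}}U_{{\rm D},{\rm G},v}(s)\Phi_v$, with the Heisenberg relation $e^{{\rm i}sH_0^{\rm S}}pe^{-{\rm i}sH_0^{\rm S}}=p+e_1s$ converting the Dollard term into the form $\{V^{\rm l}(x)-V^{\rm l}(ps-e_1s^2/2)\}e^{-{\rm i}sH_0^{\rm S}}M_{\rm D}(s)$, so that the integrand is controlled exactly by Propositions \ref{prop10}--\ref{prop12} (resp.\ \ref{prop14}--\ref{prop16}), which is the argument behind Corollary \ref{cor7} (via the cited lemmas of Adachi--Maehara and Adachi--Kamada--Kazuno--Toratani) that the paper invokes here. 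No gaps worth noting.
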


We now prove Theorem \ref{the8}.
\begin{proof}[Proof of Theorem \ref{the8}]
This proof is almost similar to the proof of Theorem \ref{the3} (see also \cite[Theorem 3.5]{We} and \cite[Theorem 4.1]{AdFuIs}). We give here a sketch. Note that the Dollard-type modified scattering operator $S_{\rm D}$ is represented such that
\begin{equation}
S_{\rm D}=I_{{\rm G},v}(\Omega_{{\rm D},{\rm G},v}^+)^*\Omega_{{\rm D},{\rm G},v}^-
\end{equation}
and put $V_{v,t}^{\rm l}$ by
\begin{equation}
V_{v,t}^{\rm l}=V^{\rm vs}(x)+V^{\rm s}(x)-V^{\rm s}(vt+e_1t^2/2)+V^{\rm l}(x)-V^{\rm l}(pt-e_1t^2/2),
\end{equation}
then we have
\begin{gather}
{\rm i}(S_{\rm D}-I_{{\rm G},v})\Phi_v={\rm i}I_{{\rm G},v}(\Omega_{{\rm D},{\rm G},v}^+-\Omega_{{\rm D},{\rm G},v}^-)^*\Omega_{{\rm D},{\rm G},v}^-\Phi_v\nonumber\\
=I_{{\rm G},v}\int_{-\infty}^\infty U_{{\rm D},{\rm G},v}(t)^*V_{v,t}^{\rm l}e^{-{\rm i}tH^{\rm S}}\Omega_{{\rm D},{\rm G},v}^-\Phi_vdt.
\end{gather}
We therefore obtain, by the relations $[S_{\rm D}, p_j]=[S_{\rm D}-I_{{\rm G},v},p_j-v_j]$ and $(p_j-v_j)\Phi_v=(p_j\Phi_0)_v$,
\begin{equation}
|v|({\rm i}[S_{\rm D},p_j]\Phi_v,\Psi_v)=I_{{\rm G},v}\{I_{\rm D}(v)+R_{\rm D}(v)\}
\end{equation}
with
\begin{eqnarray}
I_{\rm D}(v)\Eqn{=}|v|\int_{-\infty}^\infty\Big\{(V_{v,t}^{\rm l}U_{{\rm D},{\rm G},v}(t)(p_j\Phi_0)_v,U_{{\rm D},{\rm G},v}(t)\Psi_v)\nonumber\\
\Eqn{}\hspace{20mm}-(V_{v,t}^{\rm l}U_{{\rm D},{\rm G},v}(t)\Phi_v,U_{{\rm D},{\rm G},v}(t)(p_j\Psi_0)_v)\Big\}dt,\\
R_{\rm D}(v)\Eqn{=}|v|\int_{-\infty}^\infty(\{e^{-{\rm i}tH^{\rm S}}\Omega_{{\rm D},{\rm G},v}^--U_{{\rm D},{\rm G},v}(t)\}(p_j\Phi_0)_v,V_{v,t}^{\rm l}U_{{\rm D},{\rm G},v}(t)\Psi_v)dt\nonumber\\
\Eqn{}-|v|\int_{-\infty}^\infty(\{e^{-{\rm i}tH^{\rm S}}\Omega_{{\rm D},{\rm G},v}^--U_{{\rm D},{\rm G},v}(t)\}\Phi_v,V_{v,t}^{\rm l}U_{{\rm D},{\rm G},v}(t)(p_j\Psi_0)_v)dt.\qquad
\end{eqnarray}
If $V^{\rm l}\in\mathscr{V}_{\rm G}^{\rm l}$, Propositions \ref{prop10}, \ref{prop11}, \ref{prop12}, and Corollary \ref{cor17} imply that
\begin{equation}
R_{\rm D}(v)=O(|v|^{-2\epsilon_1});\label{long_range_remainder_graf}
\end{equation}
otherwise, if $V^{\rm l}\in\mathscr{V}_{\rm D}^{\rm l}$, Propositions \ref{prop14}, \ref{prop15}, \ref{prop16}, and Corollary \ref{cor17} imply that
\begin{equation}
R_{\rm D}(v)=O(|v|^{3-8\gamma_{\rm D}+2\epsilon_2}).\label{long_range_remainder_dollard}
\end{equation}
We can take arbitrarily small $\epsilon_2>0$, and the condition $\gamma_{\rm D}>3/8$ is required to guarantee the convergence of \eqref{long_range_remainder_dollard}. We therefore conclude that
\begin{equation}
\lim_{|v|\rightarrow\infty}R_{\rm D}(v)=0.
\end{equation}
\par
Because the rest of this proof is the same as in \cite{We} and \cite{AdFuIs}, we omit it here.
\end{proof}

Theorem \ref{the3} is proved similarly to \cite[Theorem 1.2]{We} (see also \cite{EnWe}) and the proof is also omitted.

\bigskip
\noindent\textbf{Acknowledgments.} This study was partially supported by the Grant-in-Aid for Young Scientists (B) \#16K17633 from JSPS.


\end{document}